\newcounter{saetning} 
  \newtheorem{theo}[saetning]{Theorem}
  \newtheorem{coro}[saetning]{Corollary}
  \newtheorem{lemm}[saetning]{Lemma}
  \newtheorem{prop}[saetning]{Proposition}
\theoremstyle{definition} 
	\newtheorem{defi}{Definition}
\theoremstyle{remark}
	\newtheorem{example}{Example}
\newcommand{\N}{\mathbb{N}}
\newcommand{\susp}{\text{susp}}
\newcommand{\E}{\mathbb{E}}
\newcommand{\plr}{\textsc{plr}}
\DeclareMathOperator{\Indep}{Indep}
\DeclareMathOperator{\Fixed}{Fixed}
\DeclareMathOperator{\Succ}{Succ}
\newcommand{\proto}[2]{#2^{#1}}
\title{Information Theoretical Cryptogenography}
\begin{document}
\date{}
\author{Sune K Jakobsen}
  \maketitle 
  
    \section*{Abstract}
    We consider problems where $n$ people are communicating and a random subset of them is trying to leak information, without making it clear who are leaking the information. We introduce a measure of suspicion, and show that the amount of leaked information will always be bounded by the expected increase in suspicion, and that this bound is tight. We ask the question: Suppose a large number of people have some information they want to leak, but they want to ensure that after the communication, an observer will assign probability $\leq c$ to the events that each of them is trying to leak the information. How much information can they reliably leak, per person who is leaking? We show that the answer is $\left(\frac{-\log(1-c)}{c}-\log(e)\right)$ bits. 
        
    \section{Introduction}
  The year is 2084 and the world is controlled by a supercomputer called Eve. It makes the laws, carries them out, has surveillance cameras everywhere, can hear everything you say, and can break any kind of cryptography. It was designed to make a world that maximises the total amount of happiness, while still being fair. However, Eve started to make some unfortunate decisions. For example, it thought that to maximise the utility it has been designed to maximise, it must ensure that it survives, so it decided to execute everyone it knew beyond reasonable doubt was trying to plot against Eve (it was designed so it could not punish anyone as long as there is reasonable doubt, and reasonable doubt had been defined to be a $5\%$ chance of being innocent). Everyone agrees that Eve should be shut down. The only person who can shut down Eve is Frank who is sitting in a special control room. Eve cannot hurt him, he has access to everything Eve can see, but he needs a password to shut down Eve. A small number of people, say $100$ Londoners, know the password.  Eve or Frank have no clue who they are, only that they exist. If one of them simply says the password, Eve will execute the person. So how can they reveal the password, without any of them getting killed?
  
  Suppose it is known that the password is the name of a museum in London. Frank then announces a date and time, and if you have the password, you show up at the correct museum that day and time, and if you do not have the information, you do as you would otherwise have done. If the museum is not too big, Frank will notice that there is one museum with more visitors than usual, so he gets the password. At the same time, if the museum is not too small, a large fraction of the visitors will just be there by chance, so Eve cannot punish any of them. 
  
  If the password is not necessarily the name of a museum, Frank can simply define a one-to-one correspondence between possible passwords and museums (or, if there are many possible passwords, take one letter at a time, with different people leaking each letter). We do not actually need museums to use this idea, the important part is that many people sends some messages, that will follow a fixed distribution if they do not think about it, and that if they want to, they can choose a specific message. For example, we could use parity of the minutes in the time we post messages on a blog. The purpose of this paper is to show how much information can be leaked this way.

   \subsection{Previous Work and Our Results}
   
  If we assume standard cryptographic assumptions, or if each pair of people had a private channel, we could use multi-party computation to let one person reveal information to a group of $n$ people, in such a way that if more than half of them follow the protocol, a computationally bounded observer will only have a negligible advantage when trying to guess which of the collaborating parties who originally had the information \cite{Goldreich87,Rabin89}. If we allow Frank to communicate, we could also use steganography \cite{Hopper04} to reveal the information to Frank, again only assuming standard cryptographic assumptions and that the observers have bounded computational power. 
  
  However, we assume that the observers have unbounded computational power, and that the observers see all messages sent. In that case, we could let every person sent random messages. People who knows the secret, $X$, could make their message correlated with $X$. For example the messages could be ``I think $X$ belongs to the set $S$''. However, every time you make a correct hint about what the secret $X$ is, it will increase the observers suspicion that you know $X$. The more precise the hint is or the more unlikely it is that you would give the hint without knowing $X$, the more useful the statement is to Frank. But at the same time, such statements would also be the statements that increases Eve suspicion towards you the most (at least if we assume she knows $X$). Our main contribution is to introduce a measure of suspicion that captures this, and to show that if you want to leak some amount of information about $X$ in the information theoretical sense, then your suspicion will, in expectation, have to increase by exactly the same amount. 
  
  The measure of suspicion turns out the be extremely useful for showing upper bounds on how much information you can leak, without making it clear that you are leaking. We show that if $n$ people are known to each know $X$ with probability $b$ independently of each other, and no one wants an observer to assign probability more than $c$ to the event that they were leaking information, they can each leak at most $\frac{-b\log(1-c)+c\log(1-b)}{c}$ bits about $X$. Using Shannon's Coding Theorem, we show that for all $\epsilon>0$ there exists $n$ such that if $X$ is uniformly distributed with entropy $\left(\frac{-b\log(1-c)+c\log(1-b)}{c}-\epsilon\right)n$ then $n$ such people can communicate in a way that would enable an observer to guess $X$ with probability $>1-\epsilon$, but for each person, the observer would still assign probability $\leq c$ to the event that that person was leaking. We show a similar result for the case where the total number of leakers is fixed and known.
  
  The measure of suspicion is also useful for analysing a generalisation of the original cryptogenography (hidden-origin-writing) problem, as introduced in \cite{Brody14}. Here the authors considered a game where one person among $n$ was randomly chosen and given the result of a coin flip. The goal for the $n$ players is to communicate in such a way that an observer, Frank, would guess the correct result of the coin flip, but another observer, Eve, who has the same information would guess wrong, when asked who of the $n$ originally knew the result of the coin flip. The main method in \cite{Brody14} is a concavity characterisation, and is very different from the information theory methods we use. We generalise the problem to $h$ bits of information and more players $l$ who have the information, and show that if $h=o(l)$ the winning probability tends to $1$ and if $l=o(h)$ the winning probability tends to $0$. 

   Finally we show that in general to do cryptogenography, you do not need the non-leakers to collaborate. Instead we can use the fact that people send out random messages anyway, and use this in a similar way to steganography (see \cite{Hopper04}). All we need is that people are communicating in a way that involve sufficiently randomness and that they do not change this communication, when we build a protocol on top of that. We can for example assume that they are not aware of the protocol, or they do not care about the leakage.

  \subsection{Paper Outline}
  We define notation and recall some concepts and theorems from information theory and introduce a communication model in Section \ref{sec:prelim}. In Section \ref{sec:mutual} we introduce a measure of suspicion and use this to show upper bounds on how much information the players can leak if they want Eve to have reasonable doubt that they are leaking. In Section \ref{sec:reliable} we turn to reliable leakage, and define and determine the capacity for some cryptogenography problems. In Section \ref{sec:org} we show how our results can be used to analyse a generalisation of the original cryptogenography problem.  Finally, in Section \ref{sec:innocents} we show that we can do equally well, even if the non-leakers are not collaborating in leaking, but are just communicating innocently.

  \section{Preliminaries} \label{sec:prelim}
  Unless stated otherwise, all random variables in this paper are assumed to be discrete. Random variables are denoted by capital letters and their support are denoted by the calligraphic version of the same letter (e.g. $\mathcal{X}$ is the support of $X$). 
 If $X$ and $Y$ are random variables and $\Pr(Y=y)>0$, we let $X|_{Y=y}$ denote the random variable $X$ conditioned on $Y=y$. That is
 \[\Pr(X|_{Y=y}=x)=\frac{\Pr(X=x,Y=y)}{\Pr(Y=y)}.\]
 For a tuple or infinite sequence $a$, we let $a_i$ denote the $i$'th element of $a$, and let $a^i=(a_1,\dots, a_i)$ be the tuple of the $i$ first elements from $a$. Similarly if $A$ is a tuple or sequence of random variables. For a tuple $a$ of $n$ elements we let $a\circ a'$ denote the tuple $(a_1,\dots, a_n,a')$. 
 
  For a random variable $X$ and a value $x\in\mathcal{X}$ with $\Pr(X=x)>0$ the \emph{surprisal} or the \emph{code-length}\footnote{If $-\log(\Pr(X=x))$ is an integer for all $x\in\mathcal{X}$, and we want to find an optimal prefix-free binary code for $X$, the length of the code for $x$ should be $-\log(\Pr(X=x))$, thus the name code-length. If they are not integers, we can instead use $\lceil -\log(\Pr(X=x))\rceil$ and waste at most one bit.} of $x$ is given by
  \begin{align}-\log(\Pr(X=x)),\nonumber
  \end{align}
  where $\log$, as in the rest of this paper, is the base-$2$ logarithm.
  
  The \emph{entropy of $X$}, $H(X)$, is the expected code-length of $X$
  \begin{align}H(X)=&\E -\log(\Pr(X=x))\nonumber\\
  =&-\sum_{x\in\mathcal{X}}\Pr(X=x)\log(\Pr(X=x)),\nonumber
  \end{align}
 where we define $0\log(0)=0$. If $p,q:\mathcal{X}\to[0,1]$ are two probability distributions on $\mathcal{X}$ we have the inequality
  \begin{align}
 -\sum_{x\in\mathcal{X}}p(x)\log(p(x))\leq -\sum_{x\in\mathcal{X}}p(x)\log(q(x)),\label{ineq:optimalcode}
 \end{align}
 with equality if and only if $p=q$  \cite{ThomasCover91}. The interpretation is, if $X$'s distribution is given by $p$, and you encode values of $X$ using a code optimised to the distribution $q$, you get the shortest average code-length if and only if $p=q$. 
 
 The entropy of a random variable $X$ can be thought of as the uncertainty about $X$, or as the amount of information in $X$. For a tuple of random variables $(X_1,\dots X_k)$ the entropy $H(X_1,\dots X_k)$ is simply the entropy of the random variable $(X_1,\dots,X_k)$. The \emph{entropy of $X$ given $Y$}, $H(X|Y)$ is
  \begin{align}
  H(X|Y)=&\sum_{y\in\mathcal{Y}}\Pr(Y=y)H(X|_{Y=y}).\label{eq:givenH}
  \end{align}
  A simple computation shows that
  \begin{align}
  H(X|Y)=&H(X,Y)-H(Y).\nonumber
  \end{align}
  The \emph{mutual information} $I(X;Y)$ of two random variables $X,Y$ is given by
  \begin{align}I(X;Y)=H(X)+H(Y)-H(X,Y)=H(Y)-H(Y|X).\nonumber\end{align}
  This is known to be non-negative. The \emph{mutual information $I(X;Y|Z=z)$ of $X$ and $Y$ given $Z=z$} is given by
  \begin{align}
  I(X;Y|Z=z)=I(X|_{Z=z};Y|_{Z=z}),\nonumber
  \end{align}
  where the joint distribution of $(X|_{Z=z},Y|_{Z=z})$ is given by $(X,Y)|_{Z=z}$. The \emph{mutual information $I(X;Y|Z)$ of $X$ and $Y$ given $Z$} is 
  \begin{align}
  I(X;Y|Z=z)=\E_zI(X;Y|Z=z).\nonumber
  \end{align} 
  A simple computation shows that
  \begin{align}
  I(X;Y|Z)=H(X,Z)+H(Y,Z)-H(X,Y,Z)-H(Z).\nonumber
  \end{align}
  We will need the chain rule for mutual information,
  \begin{align}
  I(X;(T_1,\dots T_k))=\sum_{i=1}^k I(X;T_i|(T_1,\dots, T_{i-1})).\nonumber
  \end{align}
  Let $X$ and $Y$ be random variables, and $f:\mathcal{Y}\to\mathcal{X}$ a function. We think of $f(Y)$ as a guess about what $X$ is. The probability of error, $P_e$ is now $\Pr(f(Y)\neq X)$. We will need (a weak version of) Fano's inequality,
  \begin{align}
  P_e\geq \frac{H(X|Y)-1}{\log(|\mathcal{X}|)}.\label{ineq:Fano}
  \end{align}
  A \emph{discrete memoryless channel} (or \emph{channel} for short) $q$ consist of a finite \emph{input set} $\mathcal{Y}$, a finite \emph{output set} $\mathcal{Z}$ and for each element $y\in \mathcal{Y}$ of the input set a probability distribution $q(z|y)$ on the output set. If Alice have some information $X$ that she wants Bob to know, she can use a channel. To do that, Alice and Bob will have to both know a code. An \emph{error correcting code}, or simply a \emph{code}, $\mathfrak{C}:\mathcal{X}\to\mathcal{Y}^n$ is a function that for each $x\in\mathcal{X}$ specifies what Alice should give as input to the channel. Here $n$ is the \emph{length} of the code. Now the probability that Bob receive $Z_{\mathfrak{C}}=z_1\dots z_n$ when $X=x$ is given by
  \[\Pr(Z_{\mathfrak{C}}=z|X=x)=\prod_{i=1}^nq\left(z_i|\mathfrak{C}(x)_i\right).\]
  When Bob knows $q$, $\mathfrak{C}$ and the distribution of $X$ he can compute $\Pr(X=x|Z_{\mathfrak{C}}=z)$. Let $\hat{X}$ denote the most likely value of $X$ given $Z_{\mathfrak{C}}$. A \emph{rate} $R$ is \emph{achievable} if for all $\epsilon>0$ there is a $n>0$ such that for $X$ uniformly distributed on $\{1,\dots, 2^{\lceil Rn\rceil}\}$ there is a code $\mathfrak{C}$ of length $n$ for $q$ giving $\Pr(\hat{X}=x|X=x)>1-\epsilon$ for all $x\in\mathcal{X}$. 
  
  For a distribution $p$ on the input set $\mathcal{Y}$ we get a joint distribution of $(Y,Z)$ given by $\Pr(Y=y,Z=z)=p(y)q(z|y)$. Now define the capacity $C$ of $q$ to be 
  \[C=\max_{p}I(Y;Z),\]
  where $\max$ is over all distributions $p$ of $Y$ and the joint distribution of $(Y,Z)$ is as above. Shannon's Noisy Coding Theorem says that any rate below $C$ is achievable, and no rate above $C$ is achievable \cite{Shannon48}. For an introduction to these information theoretical concepts and for proofs, see \cite{ThomasCover91}. 
  
  \subsection{Model}\label{subsec:model}
  
  In this paper we consider problems where one or more players might be trying to leak information about the outcome of a random variable $X$. The number of players is denoted $n$ and the players are called $\plr_1,\dots, \plr_n$. Sometimes we will call $\plr_1$ Alice and $\plr_2$ Bob.  We let $L_i$ be the random variable that is $1$ if player $i$ knows the information and $0$ otherwise. If there is only one player we write $L$ instead of $L_1$. The joint distribution of $(X,L_1,\dots, L_n)$ is known to everyone. 
  
  All messages are broadcasted to all players and to two observers, Eve and Frank. The two observers will have exactly the same information, but we will think of them as two people rather than one. We want to reveal information about $X$ to Frank, while at the same time make sure that for all $i$, Eve does not get too sure that $L_i=1$. The random variable that is the transcript of a protocol will be denoted $T$, and specific transcripts $t$. This is a tuple of messages, so we can use the notation $T^k, T_k,t^k,t_k$ as define in the beginning of this section. For example, $T^k$ denotes the tuple of the first $k$ messages.
   
 In this section we define the collaborating model. In Section \ref{sec:innocents} we will define a model, were we do not need the non-leakers to collaborate. The model in Section \ref{sec:innocents} will be more useful in practice, however when constructing protocols, it is easier first to construct them in the collaboration model. In the \emph{collaborating model} we can tell all the players including the non-leaking players to follow some communication protocol, called a collaborating cryptogenography protocol. The messages send by leaking player may depend on the value of $X$, but the messages of non-leaking players have to be independent of $X$ given the previous transcript. Formally, a \emph{collaborating cryptogenography protocol} $\pi$ specifies for any possible value $t^k$ of the current transcript $T^k$:
     \begin{itemize}
\item Should the communication stop or continue, and if it should continue,
\item Who is next to send a message, say $\plr_i$, and 
\item A distribution $p_?$ and a set of distributions, $\{p_x\}_{x\in \mathcal{X}}$ (the distributions $p_?$ and $\{p_x\}_{x\in \mathcal{X}}$ depend on $\pi$ and $t^k$). Now $\plr_i$ should choose a message using $p_?$, if $L_i=0$ and choose a message using $p_x$ if $L=1$ and $X=x$.
\end{itemize}
Furthermore, for any protocol $\pi$, there should a number $length(\pi)$ such that the protocol will always terminate after at most $length(\pi)$ messages. We assume that both Frank and Eve know the protocol. They also know the prior distribution of $(X,L_1,\dots,L_n)$, and we assume that they have  computational power to compute $(X,L_1,\dots,L_n)|_{T=t}$ for any transcript $t$. Notice that this assumption rules out the use of cryptography. 

One way that everyone can know the protocol, is if one person, e.g. Frank, announces the protocol that they will use, and we assume that everyone follows that protocol. Another possibility is that the players and Frank and Eve (or their ancestors) have played a game about leaking information many times and slowly developed (or evolved) a protocol for leaking information and learned (or evolved) to play the game optimally. In this paper we will not consider the question of if and how the protocol could be developed or evolved.

While we think of different players as different people, two or more different players could be controlled by the same person. For example, if they are communicating in a chatroom with perfect anonymity, except that a profile's identity will be revealed if the profile can be shown to be guilty in leaking with probability $>95\%$. Here each player would correspond to a profile, but the same person could have more profiles. However, we will use ``player'' and ``person'' as synonyms in the paper.

  \section{Bounds on $I(X;T)$}\label{sec:mutual}
  
  \subsection{Suspicion}

 First we will look at the problem where only one player is communicating and she may or may not be trying to leak information. We will later use these results when we analyse the many-player problem.
    
    In the one player case, Alice sends one message $A$. If she is not trying to leak information, she will choose this message in $\mathcal{A}$ randomly using a distribution $p_?$. If she is trying to leak information, and $X=x$, she will use a distribution $p_x$. For a random variable $Y$ and a value $y\in\mathcal{Y}$ with $\Pr(Y=y)>0$ we let $c_{Y=y}=\Pr(L=1|Y=y)$. We usually suppress the random variable, and write $c_y$ instead. Here $Y$ could be a tuple of random variables, and $y$ a tuple of values. If $y=(y_1,y_2)$ is a tuple, we write $c_{y_1y_2}$ instead of $c_{(y_1,y_2)}$.
    
    We want to see how much information Alice can leak to Frank (by choosing the $p$'s), without being too suspicious to Eve. The following measure of suspicion turns out to be useful. 
    
    \begin{defi} Let $Y$ be a random variable jointly distributed with $L$. Then the \emph{suspicion (of Alice) given $Y=y$} is 
    \begin{align}\susp(Y=y)=&-\log(1-c_y)\nonumber\\
    =&-\log(\Pr(L=0|Y=y)).\nonumber
    \end{align}
    \end{defi}
    We see that $\susp(Y=y)$ depends on $y$ and the joint distribution of $L$ and $Y$, but to keep notation simple, we suppress the dependence on $L$. The suspicion of Alice measures how suspicious Alice is to someone who knows that $Y=y$ and knows nothing more. For example $Y$ could be the tuple that consists of the secret information $X$ and the current transcript.
    
    We can think of the suspicion as the surprisal of the event, ``Alice did not have the information''. Next we define the suspicion given a random variable $Y$, without setting it equal to something. 
    \begin{defi}
    The \emph{suspicion (of Alice) given $Y$} is 
    \begin{align}
    \susp(Y)=&\E_y \susp(Y=y)\nonumber\\
    =&\sum_{y\in\mathcal{Y}} \Pr(Y=y)\susp(Y=y).\label{eq:suspisaveg}
    \end{align}
    In each of these, $Y$ can consist of more than one random variable, e.g. $Y=(X,A)$. Finally we can also combine these two definitions, giving
    \begin{align}\susp(X,A=a)=\sum_{x\in \mathcal{X}}\Pr(X=x|A=a)\susp((X,A)=(x,a)).\nonumber\end{align}
    Where $X$ and $A$ can themselves be tuples of random variables.
    \end{defi}
    
    The definitions imply that
    \[\susp(X,A)=\sum_{a\in\mathcal{A}}\Pr(A=a)\susp(X,A=a),\]
    which can be thought of as (\ref{eq:suspisaveg}) given $X$. 
    
    When Alice sends a message $A$ this might reveal some information about $X$, but at the same time, she will also reveal some information about whether she is trying to leak $X$. We would like to bound $I(A;X)$ by the information $A$ reveals about $L$. This is not possible. If, for example, we set $A=X$ whenever $L=1$ and $A=a\not  \in \mathcal{X}$ when $L=0$, then $I(A;X)=\Pr(L=1)H(A)$ which can be large, but $I(A;L)\leq H(L)\leq 1$. The theorem below shows that instead, $I(A;X)$ can be bounded by the expected increase in suspicion given $X$, and that this bound is tight. 
    
         \begin{theo}\label{theo:Alice}
     If Alice sends a message $A$, we have 
     \begin{align}
     I(X;A)\leq \susp(X,A)-\susp(X).\nonumber
     \end{align}
     That is, the amount of information she sends about $X$ is at most her expected increase in suspicion given $X$. There is equality if and only if the distribution of $A$ is the same as $A|_{L=0}$. 
     \end{theo}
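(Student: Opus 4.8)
The plan is to expand both sides into expected surprisals, cancel what can be cancelled, and recognise what is left as a single relative-entropy term, which is non-positive by the optimality-of-codes inequality~(\ref{ineq:optimalcode}); the equality case of~(\ref{ineq:optimalcode}) will then yield the equality case of the theorem.

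Concretely, I would start from
\[
I(X;A)=\sum_{x,a}\Pr(X=x,A=a)\log\frac{\Pr(X=x,A=a)}{\Pr(X=x)\Pr(A=a)}
\]
and, straight from the definitions of $\susp$,
\[
\susp(X,A)-\susp(X)=\sum_{x,a}\Pr(X=x,A=a)\log\frac{\Pr(L=0|X=x)}{\Pr(L=0|X=x,A=a)},
\]
both sums ranging over $(x,a)$ with $\Pr(X=x,A=a)>0$. (If some such summand has $\Pr(L=0|X=x,A=a)=0$ the second quantity is $+\infty$ and there is nothing to prove, so I may assume all these conditionals are positive; the case $\Pr(L=0)=0$ is similarly trivial.) Subtracting the second display from the first and combining logarithms, the generic summand becomes
\[
\log\frac{\Pr(X=x,A=a)\,\Pr(L=0|X=x,A=a)}{\Pr(X=x)\,\Pr(A=a)\,\Pr(L=0|X=x)}=\log\frac{\Pr(A=a|X=x,L=0)}{\Pr(A=a)},
\]
using the identities $\Pr(X=x,A=a)\Pr(L=0|X=x,A=a)=\Pr(X=x,A=a,L=0)$ and $\Pr(X=x)\Pr(L=0|X=x)=\Pr(X=x,L=0)$.

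Here the model does the work: a non-leaking Alice ($L=0$) draws $A$ from the fixed distribution $p_?$, which does not depend on $X$, so $A$ and $X$ are independent given $L=0$ and $\Pr(A=a|X=x,L=0)=\Pr(A=a|L=0)$. The summand then depends only on $a$, so summing out $x$ replaces $\Pr(X=x,A=a)$ by $\Pr(A=a)$ and gives
\[
I(X;A)-\bigl(\susp(X,A)-\susp(X)\bigr)=\sum_{a}\Pr(A=a)\log\frac{\Pr(A=a|L=0)}{\Pr(A=a)}.
\]
Applying~(\ref{ineq:optimalcode}) with $p(a)=\Pr(A=a)$ and $q(a)=\Pr(A=a|L=0)$ shows this is $\le 0$, with equality if and only if $p=q$, i.e. if and only if the distribution of $A$ equals that of $A|_{L=0}$ — exactly the theorem.

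The arithmetic is routine; the only real content is the bookkeeping with the conditional probabilities and, crucially, the use of the defining property of the collaborating model — that the non-leaker's message is conditionally independent of $X$ given $L=0$ — which is precisely what collapses what could have been an average over $x$ of relative entropies into the single term $\sum_a \Pr(A=a)\log\frac{\Pr(A=a)}{\Pr(A=a|L=0)}$. I would also dispose of the degenerate cases (a vanishing conditional probability, or $\Pr(L=0)=0$) in one line, since there the suspicions are infinite and the bound holds vacuously.
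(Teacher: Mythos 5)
Your proof is correct and is essentially the same argument as the paper's: both reduce, via the key identity $\Pr(A=a\mid X=x,L=0)=\Pr(A=a\mid L=0)$ coming from the model assumption, to comparing the distribution of $A$ with that of $A|_{L=0}$ and then invoking inequality~(\ref{ineq:optimalcode}), including the same equality criterion. The only difference is cosmetic bookkeeping — you subtract the two sides directly and recognise a single relative-entropy term, whereas the paper rearranges the same computation as $H(A)\leq \susp(X,A)-\susp(X)+H(A|X)$.
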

    
    \begin{proof}
    With no information revealed, Alice's suspicion given $X$ is
    \begin{align}
    \susp(X)=-\sum_{x\in\mathcal{X}}\Pr(X=x)\log(1-c_x).\nonumber
    \end{align}
    We want to compute Alice's suspicion given $X$ and her message $A$. 
    \begin{align}
    \susp(X,A)=&\sum_{x,j}\Pr(X=x,A=a)\susp(X=x,A=a)\nonumber\\
    =&-\sum_{x,a}\Pr(X=x,A=a)\log(1-c_{xa})\nonumber\\
    =&-\sum_{x,a}\Pr(X=x,A=a)\left(\log(1-c_x)+\log\left(\frac{1-c_{xa}}{1-c_x}\right)\right).\nonumber
    \end{align}
    
     Now it follows that the cost in suspicion given $X$ of sending $A$ is
     \begin{align}
     \susp(X,A)-\susp(X)=-\sum_{x,j}\Pr(X=x,A=a)\log\left(\frac{1-c_{xa}}{1-c_x}\right).\label{eq:suspincr}
     \end{align}
 
     Next we want to see how much information $A$ gives about $X$, that is $I(A;X)=H(A)-H(A|X)$. We claim that this is bounded by the cost in suspicion, or equivalently, $H(A)\leq \susp(X,A)-\susp(X)+H(A|X)$. First we compute $H(A|X)$ using (\ref{eq:givenH}):
      \begin{align}
     H(A|X)=&\sum_{x}\Pr(X=x)H(A|X=x)\nonumber\\
     =& -\sum_{x}\Pr(X=x)\sum_a \Pr(A=a|X=x)\log(\Pr(A=a|X=x))\nonumber\\
     =& -\sum_{x,a}\Pr(X=x,A=a)\log(\Pr(A=a|X=x)).\label{eq:AgivenX}
     \end{align}

     We have
     \begin{align}
     \frac{1-c_{xa}}{1-c_x}&\Pr(A=a|X=x)\nonumber\\
     =&\frac{\Pr(L=0|X=x,A=a)}{\Pr(L=0|X=x)}\Pr(A=a|X=x)\nonumber\\
     =&\frac{\Pr(L=0,X=x,A=a)}{\Pr(X=x,A=a)}\frac{\Pr(X=x)}{\Pr(L=0,X=x)}\frac{\Pr(X=x,A=a)}{\Pr(X=x)}\nonumber\\
     =&\frac{\Pr(L=0,X=x,A=a)}{\Pr(L=0,X=x)}\nonumber\\
     =&\Pr(A=a|X=x,L=0)\nonumber\\
     =&\Pr(A=a|L=0)\label{eq:cxaca}
     \end{align}
     Here, the last equation follows from the assumption that $A$ does not depend on $X$ when $L=0$.     
     From this we conclude
    \begin{align*}
    \susp(X,A)&-\susp(X)+H(A|X)\nonumber\\
    =&-\sum_{x,a}\Pr(X=x,A=a)\log\left(\frac{1-c_{xa}}{1-c_x}\Pr(A=a|X=x)\right)\nonumber\\
    =&-\sum_{x,a}\Pr(X=x,A=a)\log\left(\Pr(A=a|L=0)\right)\nonumber\\
     =&-\sum_{a}\Pr(A=a)\log\left(\Pr(A=a|L=0)\right)\nonumber\\
     \geq& -\sum_{a}\Pr(A=a)\log(\Pr(A=a))\nonumber\\
     =&H(A).
     \end{align*}   
     Here the first equality follows from (\ref{eq:suspincr}) and (\ref{eq:AgivenX}), the second follows from (\ref{eq:cxaca}) and the inequality follows from inequality (\ref{ineq:optimalcode}). There is equality if and only if $\Pr(A=a)=\Pr(A=a|L=0)$ for all $a$.
      \end{proof}
   
     We will now turn to the problem where many people are communicating. We assume that they sent messages one at a time, so we can break the protocol into time periods were only one person is communicating, and see the entire protocol as a sequence of one player protocols. The following Corollary show that a statement similar to Theorem \ref{theo:Alice} holds for each single message in a protocol with many players. 
   
   \begin{coro}\label{coro:Alice}
   Let $(L,T^{k-1},X)$ have some joint distribution, where $T^{k-1}$ denotes previous transcript. Let $T_k$ be the next message sent by Alice. Then 
   \[I(X;T_k|T^{k-1})\leq \susp(X,T^k)-\susp(X,T^{k-1}).\]
   \end{coro}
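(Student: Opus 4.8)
The plan is to apply Theorem~\ref{theo:Alice} inside the probability space obtained by conditioning on the previous transcript $T^{k-1}=t^{k-1}$, and then to average the resulting family of inequalities over $t^{k-1}$. This is the natural move because, once $t^{k-1}$ is fixed, the step where Alice sends $T_k$ is literally a one-message instance of the setting of Theorem~\ref{theo:Alice}, with secret $X$ and leak indicator $L$ replaced by their conditioned versions.

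First I would fix a value $t^{k-1}$ with $\Pr(T^{k-1}=t^{k-1})>0$ and pass to the conditioned space. In that space Alice sends the single message $A:=T_k$; by the definition of the collaborating model, when $L=0$ she draws $T_k$ from a distribution $p_?$ depending only on $\pi$ and $t^{k-1}$, so conditioned on $T^{k-1}=t^{k-1}$ and $L=0$ the message $T_k$ is independent of $X$. This is exactly the standing hypothesis used in the proof of Theorem~\ref{theo:Alice} (the step establishing \eqref{eq:cxaca}), so the theorem applies in the conditioned space and gives
\[
I(X;T_k\mid T^{k-1}=t^{k-1})\;\le\;s^{t^{k-1}}(X,T_k)-s^{t^{k-1}}(X),
\]
where I write $s^{t^{k-1}}(\cdot)$ for the suspicion of Alice computed with respect to the distribution conditioned on $T^{k-1}=t^{k-1}$.

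Next I would translate the conditioned suspicions into the original notation. The key observation is that for any event $E$, the surprisal of ``$L=0$'' given $E$ in the conditioned space is $-\log\Pr(L=0\mid E,\,T^{k-1}=t^{k-1})$, which is precisely the unconditioned suspicion of the event ``$E$ and $T^{k-1}=t^{k-1}$''. Taking $E$ to be $\{X=x\}$ and $\{X=x,T_k=t_k\}$ respectively, unwinding the nested definitions of $\susp$, and using $t^{k-1}\circ t_k=t^k$, one obtains
\[
s^{t^{k-1}}(X)=\susp(X,T^{k-1}=t^{k-1}),\qquad
s^{t^{k-1}}(X,T_k)=\sum_{t_k}\Pr(T_k=t_k\mid T^{k-1}=t^{k-1})\,\susp(X,T^k=t^{k-1}\circ t_k).
\]
Now I would average the displayed inequality over $t^{k-1}$, weighted by $\Pr(T^{k-1}=t^{k-1})$: the left side becomes $I(X;T_k\mid T^{k-1})$ by definition of conditional mutual information; the average of $s^{t^{k-1}}(X)$ becomes $\susp(X,T^{k-1})$ by the identity $\susp(X,B)=\sum_b\Pr(B=b)\susp(X,B=b)$; and the average of $s^{t^{k-1}}(X,T_k)$ collapses via $\Pr(T^{k-1}=t^{k-1})\Pr(T_k=t_k\mid T^{k-1}=t^{k-1})=\Pr(T^k=t^{k-1}\circ t_k)$ to $\sum_{t^k}\Pr(T^k=t^k)\susp(X,T^k=t^k)=\susp(X,T^k)$. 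Combining the three yields $I(X;T_k\mid T^{k-1})\le\susp(X,T^k)-\susp(X,T^{k-1})$.

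I expect the conceptual content to be light — it is essentially ``condition, apply Theorem~\ref{theo:Alice}, average'' — so the only real work is the bookkeeping that matches the conditioned-space suspicions to the nested definitions of $\susp(X,T^k=t^k)$ and $\susp(X,T^{k-1})$. The one genuine point to verify, rather than a routine manipulation, is that the hypothesis of Theorem~\ref{theo:Alice} (the message being independent of $X$ given $L=0$) survives conditioning on $T^{k-1}=t^{k-1}$; this is exactly where the definition of the collaborating model — a non-leaker's message distribution $p_?$ depends only on the protocol and the past transcript — is used.
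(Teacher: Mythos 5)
Your proposal is correct and is essentially identical to the paper's own proof: condition on $T^{k-1}=t^{k-1}$, apply Theorem~\ref{theo:Alice} to $(X,T_k)|_{T^{k-1}=t^{k-1}}$, and average over $t^{k-1}$ weighted by $\Pr(T^{k-1}=t^{k-1})$. The extra bookkeeping you supply (checking that $p_?$ depends only on $\pi$ and $t^{k-1}$, and unwinding the nested $\susp$ notation) is exactly what the paper leaves implicit.
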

   \begin{proof}
   For a particular value $t^{k-1}$ of $T^{k-1}$ we use Theorem \ref{theo:Alice} with $(X,T_k)|_{T^{k-1}=t^{k-1}}$ as $(X,A)$ to get
   \[I(X;T_k|T^{k-1}=t^{k-1})\leq \susp(X,T_k,T^{k-1}=t^{k-1})-\susp(X,T^{k-1}=t^{k-1}).\]
   By multiplying each side by $\Pr(T^{k-1}=t^{k-1})$ and summing over all possible $T^{k-1}$ we get the desired inequality.
   \end{proof}
 
     
   A protocol consists of a sequence of messages that each leaks some information and increases the suspicion of the sender. We can add up increases in suspicion, and using the chain rule for mutual information we can also add up the amount of revealed information. However, we have to be aware that Bob's message not only affect his own suspicion, but it might also affect Alice's suspicion. To show an upper bound on the amount of information a group of people can leak, we need to show that one persons message will, in expectation, never make another persons suspicion decrease. We get this from the following proposition by setting $Y=(X,T^{k-1})$ and $B=T_k$.

     \begin{prop}\label{prop:Bob}
     For any joint distribution on $(L,Y,B)$ we have $\susp(Y)\leq \susp(Y,B)$.
     \end{prop}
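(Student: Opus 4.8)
The plan is to show that suspicion can only increase (in expectation) when a new message is added, by reducing the statement to the basic convexity property of the function $u \mapsto -\log(u)$. The key observation is that $\susp(Y) = \E_y[-\log(\Pr(L=0\mid Y=y))]$, so the claim $\susp(Y) \le \susp(Y,B)$ is really an assertion about how the conditional probability $\Pr(L=0\mid Y=y)$ relates to $\Pr(L=0\mid Y=y, B=b)$ after we average over $b$.

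First I would fix a value $y$ of $Y$ with $\Pr(Y=y)>0$ and show the pointwise inequality $\susp(Y=y) \le \susp(Y, B=y)$ — that is,
\begin{align}
-\log(\Pr(L=0\mid Y=y)) \le \sum_{b} \Pr(B=b\mid Y=y)\bigl(-\log(\Pr(L=0\mid Y=y, B=b))\bigr).\nonumber
\end{align}
The right-hand side is an expectation of $-\log$ applied to the random variable $\Pr(L=0\mid Y=y, B=b)$ (where $b$ is drawn according to $B|_{Y=y}$). By the law of total probability, the expectation of that random variable itself is exactly $\Pr(L=0\mid Y=y)$: indeed $\sum_b \Pr(B=b\mid Y=y)\Pr(L=0\mid Y=y,B=b) = \Pr(L=0\mid Y=y)$. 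So Jensen's inequality applied to the convex function $-\log$ gives precisely the displayed pointwise inequality. (One should note the degenerate cases where some probability is $0$ are handled by the convention $0\log 0 = 0$ and the fact that terms with $\Pr(Y=y,B=b)=0$ simply do not appear.)

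Then I would multiply the pointwise inequality by $\Pr(Y=y)$ and sum over $y \in \mathcal{Y}$. The left side sums to $\susp(Y)$ by definition, and the right side sums to $\susp(Y,B)$ since $\sum_y \Pr(Y=y)\sum_b \Pr(B=b\mid Y=y)(\cdots) = \sum_{y,b}\Pr(Y=y,B=b)\susp((Y,B)=(y,b)) = \susp(Y,B)$. This completes the proof.

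I do not expect a serious obstacle here; the main thing to get right is the bookkeeping between the various conditional-probability notations and making sure the convention for zero-probability events is applied consistently, so that the Jensen step is genuinely an equality-of-the-mean plus convexity argument rather than something requiring more care. It may also be worth remarking that this is the ``expected value'' companion to inequality (\ref{ineq:optimalcode}): encoding $L=0$ with the coarser distribution given $Y$ cannot beat encoding it with the finer distribution given $(Y,B)$.
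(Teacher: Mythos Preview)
Your argument is correct and essentially identical to the paper's: fix $y$, use the law of total probability to write $\Pr(L=0\mid Y=y)$ as a convex combination of the $\Pr(L=0\mid Y=y,B=b)$, apply Jensen for the convex map $u\mapsto -\log u$, and then average over $y$. One small slip: where you write ``$\susp(Y, B=y)$'' you mean $\susp(Y=y,B)$, i.e.\ the expectation over $b$ of $\susp((Y,B)=(y,b))$ with $b$ drawn from $B|_{Y=y}$; the displayed inequality that follows makes clear this is what you intend.
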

     \begin{proof}
     We have
      \begin{align}
     \susp(Y=y)=&-\log(\Pr(L=0|Y=y))\nonumber\\
     =& -\log\left(\sum_{b\in\mathcal{B}}\Pr(B=b|Y=y)\Pr(L=0|Y=y,B=b)\right)\nonumber\\
     \susp(Y=y,B)=&-\sum_{b\in\mathcal{B}}\Pr(B=b|Y=y)\log\Pr(L=0|Y=y,B=b).\nonumber\\
     \end{align}
     As $p\mapsto -\log(p)$ is convex, Jensen's inequality gives us
     \begin{align}
     \susp(Y=y,B)\geq \susp(Y=y).\nonumber
     \end{align}
    Multiplying each side by $\Pr(Y=y)$ and summing over all $y\in \mathcal{Y}$ gives us the desired inequality. 
    \end{proof}

In the proof of the next theorem we will assume that the protocol runs for a fixed number of messages, and the player to talk in round $k$ only depends on $k$, not on which previous messages was send. Any protocol $\pi$ can be turned into such a protocol $\pi'$ by adding dummy messages: In round $k$ of $\pi'$ we let $\plr_{k\text{ mod }n}$ talk. They follow protocol $\pi$ in the sense that if it is not $\plr_{k\text{ mod }n}$ turn to talk according to $\pi$ she send some fixed message $1$, and if it is her turn, she chooses her message as in $\pi$.

Let $\susp_i$ denote the suspicion of $\plr_i$.\footnote{This is defined similar to the suspicion of Alice, except using $L_i$ instead of $L$.}
\begin{theo}\label{theo:leakleqsusp}
If $T$ is the transcript of the entire protocol we have
\[I(X;T)\leq \sum_{i=1}^n \left(\susp_i(X,T)-\susp_i(X) \right).\]
\end{theo}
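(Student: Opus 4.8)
The plan is to decompose the mutual information $I(X;T)$ along the messages of the protocol using the chain rule, bound each term using Corollary~\ref{coro:Alice}, and then telescope. Concretely, I would first reduce to the nicer protocol $\pi'$ described just before the theorem, where in round $k$ a fixed player $\plr_{i(k)}$ talks (here $i(k)=k \bmod n$, reading $0$ as $n$), the number of rounds is a fixed constant $m=length(\pi')$, and dummy messages do not change the distribution of $X$, $L_1,\dots,L_n$ given the transcript. Since the transcript of $\pi'$ determines that of $\pi$ and vice versa, $I(X;T)$ is unchanged, and each $\susp_i(X,T)$ is unchanged too, so it suffices to prove the inequality for $\pi'$.

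Next, write $T=T^m$ and apply the chain rule for mutual information:
\begin{align}
I(X;T)=\sum_{k=1}^m I(X;T_k\mid T^{k-1}).\nonumber
\end{align}
For each $k$, the message $T_k$ is sent by $\plr_{i(k)}$, so Corollary~\ref{coro:Alice} (applied with $\plr_{i(k)}$ in the role of Alice, hence with $\susp_{i(k)}$) gives
\begin{align}
I(X;T_k\mid T^{k-1})\leq \susp_{i(k)}(X,T^k)-\susp_{i(k)}(X,T^{k-1}).\nonumber
\end{align}
Summing over $k$, the right-hand side is a sum of one-step increases in suspicion, but each increase is attributed to a different player depending on $i(k)$, so the sum does not telescope directly.

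The key step — and the main obstacle — is to handle the fact that when $\plr_{i(k)}$ speaks, the suspicions $\susp_j$ of the other players $j\neq i(k)$ may also change, and we need these changes not to hurt us. This is exactly what Proposition~\ref{prop:Bob} provides: applying it with $Y=(X,T^{k-1})$ and $B=T_k$ (for the distribution induced by $\pi'$) gives $\susp_j(X,T^{k-1})\leq \susp_j(X,T^k)$ for every $j$, in particular for every $j\neq i(k)$. So for each round $k$ I can bound
\begin{align}
\susp_{i(k)}(X,T^k)-\susp_{i(k)}(X,T^{k-1})\leq \sum_{j=1}^n\left(\susp_j(X,T^k)-\susp_j(X,T^{k-1})\right),\nonumber
\end{align}
since all the added terms $j\neq i(k)$ are nonnegative. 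Now the right-hand side telescopes over $k=1,\dots,m$: summing gives $\sum_{j=1}^n\left(\susp_j(X,T^m)-\susp_j(X,T^0)\right)$, and since $T^0$ is empty, $\susp_j(X,T^0)=\susp_j(X)$. Combining the three displayed inequalities yields
\begin{align}
I(X;T)\leq \sum_{j=1}^n\left(\susp_j(X,T)-\susp_j(X)\right),\nonumber
\end{align}
which is the claim. The only points requiring care are the bookkeeping for the reduction to $\pi'$ (checking that dummy messages genuinely leave all relevant quantities invariant) and making sure Proposition~\ref{prop:Bob} is applied to the correct conditional distribution at each step; both are routine once set up.
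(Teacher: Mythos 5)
Your proposal is correct and follows exactly the paper's argument: chain rule for mutual information, Corollary \ref{coro:Alice} for the speaker's message, Proposition \ref{prop:Bob} for the other players' suspicions, and telescoping, including the preliminary reduction to a fixed speaking order via dummy messages. No issues.
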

\begin{proof}
From the chain rule for mutual information, we know that
\[I(X;T)=\sum_{k=1}^{length(\pi)} I(X;T_k|T^{k-1}).\] 
Now Corollary \ref{coro:Alice} shows that $I(X;T_k|T^{k-1})\leq \susp_i(X,T^k)-\susp_i(X,T^{k-1})$ if $\plr_i$ send the $k$th message and Proposition \ref{prop:Bob} shows that $ \susp_{i'}(X,T^k)\geq \susp_{i'}(X,T^{k-1})$ for all other $i'$. Summing over all rounds in the protocol, we get the theorem.
\end{proof}

\subsection{Keeping reasonable doubt}

Until now we have bounded the amount of information the players can leak by the expected increase in some strange measure, suspicion, that we defined for the purpose. But there is no reason to think that someone who is leaking information cares about the expected suspicion towards her afterwards. A more likely scenario, is that each person leaking wants to ensure that after the leakage, an observer will assign probability at most $c$ to the event that she was leaking information. If this is the case after all possible transcripts $t$, we see that $\susp_i(X,T)\leq -\log(1-c)$. If we assume that each player before the protocol had probability $b<c$ of leaking independently of $X$, that is $\Pr(L_i|X=x)=b$ for all $x$ and $i$, we have $\susp_i(X)=-\log(1-b)$. Thus
\begin{align}
I(X;T)\leq \sum_{i=1}^n\left(\susp_i(X,T)-\susp_i(X)\right)=\left(\log(1-c)+\log(1-b)\right)n.\label{ineq:badUpper}
\end{align}
To reach this bound, we would need to have $\Pr(L_i=1|X=x,T=t)=c$ for all $x,t,i$. But the probability $\Pr(L_i=1|X=x)=b$ can also be computed as $\E_t \Pr(L_i=1|X=x,T=t)$, so $\Pr(L_i=1|X=x,T=t)$ cannot be constantly $c>b$. The following theorem improves the upper bound from  (\ref{ineq:badUpper}) by taking this into account. 

\begin{theo}\label{theo:generalupperbound}
Let $\pi$ be a collaborating cryptogenography protocol, and $T$ be its transcript. If for all players $\plr_i$ and all $x\in\mathcal{X}$ and all transcripts $t$ we have $\Pr(L_i=1|X=x)=b$, and $\Pr(L_i=1|T=t,X=x)\leq c$ then
\[I(X;T)\leq \frac{-b\log(1-c)+c\log(1-b)}{c}n.\]
\end{theo}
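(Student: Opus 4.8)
The plan is to feed the two hypotheses into Theorem~\ref{theo:leakleqsusp} and then squeeze each summand with a single convexity (chord) estimate. By Theorem~\ref{theo:leakleqsusp},
\[I(X;T)\leq \sum_{i=1}^n\bigl(\susp_i(X,T)-\susp_i(X)\bigr),\]
so it is enough to prove, for every player $\plr_i$, that
\[\susp_i(X,T)-\susp_i(X)\leq \frac{-b\log(1-c)+c\log(1-b)}{c}.\]
The subtracted term is immediate: since $\Pr(L_i=1\mid X=x)=b$ for every $x$, we have $\Pr(L_i=0\mid X=x)=1-b$, hence $\susp_i(X)=\E_x\bigl(-\log(1-b)\bigr)=-\log(1-b)$.

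For the remaining term, fix $\plr_i$ and abbreviate $c_{xt}=\Pr(L_i=1\mid X=x,T=t)$, so that $\susp_i(X,T)=\sum_{x,t}\Pr(X=x,T=t)\bigl(-\log(1-c_{xt})\bigr)$. The first observation is that for each fixed $x$ the numbers $c_{xt}$, averaged over the transcript, recover the prior: $\sum_t \Pr(T=t\mid X=x)\,c_{xt}=\Pr(L_i=1\mid X=x)=b$, while the second hypothesis gives $0\leq c_{xt}\leq c$. The second observation is the chord bound: the map $p\mapsto -\log(1-p)$ is convex on $[0,1)$ and vanishes at $0$, so on $[0,c]$ it lies below the line through $(0,0)$ and $(c,-\log(1-c))$, i.e.
\[-\log(1-p)\leq \frac{p}{c}\bigl(-\log(1-c)\bigr)\qquad\text{for all }p\in[0,c].\]
Plugging $p=c_{xt}$ into this, averaging first over $t$ given $x$ (using $\sum_t\Pr(T=t\mid X=x)c_{xt}=b$) and then over $x$, gives $\susp_i(X,T)\leq \frac{b}{c}\bigl(-\log(1-c)\bigr)$. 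Subtracting $\susp_i(X)=-\log(1-b)$ yields exactly $\frac{-b\log(1-c)+c\log(1-b)}{c}$, and summing over the $n$ players completes the proof.

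The routine parts are the evaluation of $\susp_i(X)$ and the bookkeeping with iterated expectations; the one genuine idea—and the step I expect to be the crux—is recognizing that the improvement over the naive bound (\ref{ineq:badUpper}) comes precisely from combining the pointwise constraint $c_{xt}\leq c$ with the averaging identity $\sum_t\Pr(T=t\mid X=x)c_{xt}=b$, so that the mass of $c_{xt}$ must sit below $c$ in a way quantified by the chord inequality rather than by the trivial estimate $\susp_i(X,T)\leq -\log(1-c)$.
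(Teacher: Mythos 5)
Your proof is correct and follows essentially the same route as the paper: Theorem~\ref{theo:leakleqsusp} plus the chord (convexity) bound $-\log(1-p)\leq \frac{p}{c}(-\log(1-c))$ on $[0,c]$, combined with the averaging identity that recovers $b$. The only cosmetic difference is that you average over $t$ given $x$ and then over $x$, while the paper averages over $(x,t)$ jointly to recover $\Pr(L_i=1)=b$; these are the same computation.
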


\begin{proof}
If $\Pr(L_i=1|X=x,T=t)\leq c$ then 
\begin{align}
\susp_i(X=x,T=t)=&-\log(1-\Pr(L_i=1|X=x,T=t))\nonumber\\
\leq &\frac{-\log(1-c)}{c}\Pr(L_i=1|X=x,T=t). \label{ineq:boundsusp}
\end{align}
This follows from the fact that we have equality when $\Pr(L_i=1|X=x,T=t)$ is $0$ or $c$, and the left hand side is convex in $\Pr(L_i=1|X=x,T=t)$ while the right hand side is linear.

 Let $\pi$ and $T$ be as in the assumptions. Now we get
\begin{align*}
\susp_i(X,T)=&\sum_{x,t}\Pr(X=x,T=t)\susp_i(X=x,T=t)\\
\leq& \sum_{x,t}\Pr(X=x,T=t)\frac{-\log(1-c)}{c}\Pr(L_i=1|X=x,T=t)\\
=&\sum_{x,t}\frac{-\log(1-c)}{c}\Pr(L_i=1,X=x,T=t)\\
=&\frac{-\log(1-c)}{c}\Pr(L_i=1)\\
=&\frac{-b\log(1-c)}{c}.
\end{align*}

Thus 
\begin{align*}
I(X;T)\leq &\sum_{i=1}^n \left(\susp_i(X,T)-\susp_i(X) \right)\\
\leq &\left(\frac{-b\log(1-c)}{c}-(-\log(1-b))\right)n\\
=&\frac{-b\log(1-c)+c\log(1-b)}{c}n.
\end{align*}
\end{proof}


It is clear that the upper bound from Theorem \ref{theo:generalupperbound} cannot be achieved for all distributions of $(X,L_1,\dots, L_n)$. If for example $H(X)<\frac{-b\log(1-c)+c\log(1-b)}{c}n$ we must also have $I(X,T)\leq H(X)<\frac{-b\log(1-c)+c\log(1-b)}{c}n$, that is, the players do not have enough information to send to reach the upper bound. Even if $H(X)$ is high, we may not be able to reach the upper bound. If it is known that $L_1=L_2=\dots=L_n$ the suspicion of the players will not depend on the player, only on the messages sent. So this problem will be equivalent to the case where only one person is sending messages.

We will now give an example where the upper bound from Theorem \ref{theo:generalupperbound} is achievable. We will refer back to this example when we prove that reliable leakage is possible. 

\begin{example}\label{exam:achieve}

Assume that $X,L_1,\dots, L_n$ are all independent, and $\Pr(L_i=1)=b$ for all $i$. Furthermore, assume that $0<b<c<1$ and that $\frac{b(1-c)}{c(1-b)}$ is a rational number. Let $d,a\in \N$ be the smallest natural numbers such that $\frac{a}{d}=\frac{b(1-c)}{c(1-b)}$. We see that $\frac{b(1-c)}{c(1-b)}\in (0,1)$ so $0<a<d$. We will assume that $X$ is uniformly distributed on $\{1,\dots, d\}^n$. 

Each player $\plr_i$ now sends one message, independently of which messages the other players send. If $L_i=0$, $\plr_i$ chooses a message in $\{1,\dots ,d\}$ uniformly at random. If $L_i=1$ and $X_i=x_i$, then $\plr_i$ chooses a message in 
\[\{1+(x_i-1)a,2+(x_i-1)a\dots,x_ia\} \mod{d}\] 
uniformly at random.\footnote{We use $k\mod{d}$ to mean the number in $\{1,\dots d\}$ that is equal to $k$ modulo $d$.}

We see that over random choice of $X$, the message, $A_i$, that $\plr_i$ sends, is uniformly distributed on $\{1,\dots, d\}$, so $H(A_i)=\log(d)$. We want to compute $H(A_i|X)$. Given $X$, each of the $d-a$ elements not in $\{1+(x_i-1)a,2+(x_i-1)a\dots,x_ia\} \mod{d}$ can only be send if $L=0$, so they will be send with probability $\frac{1-b}{d}$. Each of the $a$ elements in the set $\{1+(x_i-1)a,2+(x_i-1)a\dots,x_ia\} \mod{d}$ are sent with probability $\frac{b}{a}+\frac{1-b}{d}$. Thus
\begin{align*}
H(A_i|X) =& -\sum_{t_i\in \mathcal{A}_i} \Pr(A_i=t_i)\log(\Pr(A_i=t_i))\\
=&-a\left(\frac{b}{a}+\frac{1-b}{d}\right)\log\left(\frac{b}{a}+\frac{1-b}{d}\right)-(d-a)\frac{1-b}{d}\log\left(\frac{1-b}{d}\right)\\
=& -\frac{b}{c}\log\left(\frac{1-b}{d(1-c)}\right)-\left(1-\frac{b}{c}\right)\log\left(\frac{1-b}{d}\right).
\end{align*}
The last equality follows from three uses of $\frac{a}{d}=\frac{b(1-c)}{c(1-b)}$, or of its equivalent formulation, $\frac{b}{a}+\frac{1-b}{d}=\frac{b}{ac}$. Now
\begin{align}
I(A_i;X)=&H(A_i)-H(A_i|X)\nonumber\\
=&\log(d)+\frac{b}{c}\log\left(\frac{1-b}{d(1-c)}\right)+\left(1-\frac{b}{c}\right)\log\left(\frac{1-b}{d}\right)\nonumber\\
=&\log(1-b)-\frac{b}{c}\log(1-c)\nonumber\\
=&\frac{-b\log(1-c)+c\log(1-b)}{c}.\label{eq:capa}
\end{align}
The tuples $(X_i,A_i,L_i)$ where $i$ ranges over $\{1,\dots n\}$ are independent from each other, so we have $I(T;X)=\frac{-b\log(1-c)+c\log(1-b)}{c}n$ as wanted. 

Next we want to compute $\Pr(L_i=1|T=t,X=x)$. This is $0$ if $\plr_i$ send a message not in $\{1+(x_i-1)a,2+(x_i-1)a\dots,x_ia\} \mod{d}$. Otherwise we use independence and then Bayes' Theorem to get
\begin{align}
\Pr(L_i=1|T=t,X=x)=&\Pr(L_i=1|A_i=t_i,X_i=x_i)\nonumber\\
=&\frac{\Pr(A_i=t_i|L_i=1,X_i=x_i)\Pr(L_i=1|X_i=x_i)}{\Pr(A_i=t_i|X_i=x_i)}\nonumber\\
=&\frac{\frac{1}{a}b}{\frac{b}{a}+\frac{1-b}{d}}\nonumber\\
=&\frac{\frac{b}{a}}{\frac{b}{ac}}\nonumber\\
=&c.\label{eq:Bayes}
\end{align}
As we wanted. 
  \end{example}

  \section{Reliable leakage}\label{sec:reliable}
  
  In the previous example, Frank would receive some information about $X$ in the sense of information theory: Before he sees the transcript, any value of $X$ would be as likely as any other value, and when he knows the transcript, he has a much better idea about what $X$ is. However, his best guess about what $X$ is, is still very unlikely to be correct. Next we want to show that we can have reliable leakage. That is, no matter what value $X$ is taking, we want Frank to be able to guess the correct value with high probability. We will see that this is possible, even when $X$ have entropy close to $\frac{-b\log(1-c)+c\log(1-b)}{c}n$. Frank's guess would have to be a function $D$ of the transcript $t$. Saying that Frank will guess $X$ correct with high probability when $X=x$ is that same as saying that  $\Pr(D(T)=x|X=x)$ is close to one. 
    
  \begin{defi}
  Let $L=(L_1,\dots, L_n)$ be a tuple of random variables, where the $L_i$ takes values in $\{0,1\}$.
  
  A \emph{risky $(n,h,L,c,\epsilon)$-protocol} is a collaborating cryptogenography protocol together with a function $D$ from the set of possible transcripts to $\mathcal{X}=\{1,\dots,2^{\lceil h\rceil}\}$ such that when $X$ and $L$ are distributed independently and $X$ is uniformly distributed on $\mathcal{X}$, then for any $x\in\mathcal{X}$, there is probability $1-\epsilon$ that a random transcript $t$ distributed as $T|_{X=x}$ satisfies
    \begin{itemize}
	\item $\forall i: \Pr(L_i=1|T=t,X=x)\leq c$, and
	\item $D(t)=x$
    \end{itemize}
  \end{defi}
  That is, no matter the value of $X$, with high probability Frank can guess the value of $X$, and with high probability no player will be estimated to have leaked the information with probability $>c$ by Eve. However, there might be a small risk that someone will be estimated to have leaked the information with probability $>c$. This is the reason we call it a risky protocol. A safe protocol is a protocol where this never happens. 
  \begin{defi}
 A \emph{safe $(n,h,L,c,\epsilon)$-protocol} is a risky $(n,h,L,c,\epsilon)$-protocol where $\Pr(L_i=1|T=t,X=x)\leq c$ for all $i,t,x$ with $\Pr(T=t,X=x)>0$. 
  \end{defi}
  
  First we will consider the case where $L_1,\dots, L_n$ are independent, and the $L_i$'s all have the same distribution.
  
  \begin{defi}\label{defi:Indep}
  Let $\Indep_{b}(n)$ be the random variable $(L_1,\dots, L_n)$ where $L_1,\dots,L_n$ are independent, and each $L_i$ is distributed on $\{0,1\}$ and $\Pr(L_1=1)=b$.
  
  A rate $R$ is \emph{safely/riskily  $c$-achievable for $\Indep_b$} if for all $\epsilon>0$ and all $n_0$, there exists a safe/risky $(n,nR,\Indep_b(n),c,\epsilon)$-protocol with $n\geq n_0$.
  
  The \emph{safe/risky $c$-capacity for $\Indep_b$} is the supremum of all safely/riskily $c$-achievable rates for $\Indep_b$.
  \end{defi}
  
  It turns out that the safe and the risky $c$-capacities for $\Indep_b$ are the same, but at the moment we will only consider the safe capacity. 
  
  \begin{prop}\label{prop:indepupper}
  No rate $R>\frac{-b\log(1-c)+c\log(1-b)}{c}$ is safely $c$-achievable for $\Indep_b$.
  \end{prop}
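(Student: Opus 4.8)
The plan is to combine the suspicion-based upper bound of Theorem~\ref{theo:generalupperbound} with Fano's inequality. Suppose, for contradiction, that some $R>\frac{-b\log(1-c)+c\log(1-b)}{c}=:K$ is safely $c$-achievable for $\Indep_b$, and fix a safe $(n,nR,\Indep_b(n),c,\epsilon)$-protocol with transcript $T$. First I would check that the hypotheses of Theorem~\ref{theo:generalupperbound} are satisfied. Since $X$ and $L=\Indep_b(n)$ are independent and each $L_i$ has $\Pr(L_i=1)=b$, we get $\Pr(L_i=1\mid X=x)=b$ for all $i$ and $x$; and the definition of a \emph{safe} protocol is exactly the statement that $\Pr(L_i=1\mid T=t,X=x)\le c$ for all $i$ and all $t,x$ with $\Pr(T=t,X=x)>0$. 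Hence Theorem~\ref{theo:generalupperbound} gives $I(X;T)\le Kn$.

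Next I would bound $I(X;T)$ from below using the decoding guarantee. The variable $X$ is uniform on $\mathcal{X}=\{1,\dots,2^{\lceil nR\rceil}\}$, so $H(X)=\lceil nR\rceil\ge nR$. In a safe protocol the constraint $\forall i:\Pr(L_i=1\mid T=t,X=x)\le c$ holds for all $t$, so the success event in the definition reduces to $D(t)=x$; thus $\Pr(D(T)=x\mid X=x)\ge 1-\epsilon$ for every $x\in\mathcal{X}$, and averaging over the uniform $X$ the decoder $D(T)$ has error probability $P_e=\Pr(D(T)\neq X)\le\epsilon$. Fano's inequality~(\ref{ineq:Fano}) then gives $H(X\mid T)\le 1+P_e\log|\mathcal{X}|\le 1+\epsilon\lceil nR\rceil$, so
\[
I(X;T)=H(X)-H(X\mid T)\ \ge\ (1-\epsilon)\lceil nR\rceil-1\ \ge\ (1-\epsilon)nR-1 .
\]

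Combining the two estimates yields $(1-\epsilon)nR-1\le Kn$, i.e.\ $R\le\frac{K+1/n}{1-\epsilon}$. By the definition of safe achievability, for this fixed $\epsilon$ we may choose $n$ arbitrarily large, so letting $n\to\infty$ gives $R\le\frac{K}{1-\epsilon}$; and since $\epsilon>0$ was arbitrary, letting $\epsilon\to 0$ gives $R\le K$, contradicting $R>K$.

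I do not expect a genuine obstacle: essentially all the work is already done in Theorem~\ref{theo:generalupperbound}. The only points needing a little care are (i) extracting the average error bound $P_e\le\epsilon$ from the per-value-of-$X$ success guarantee so that Fano applies, (ii) handling the ceiling $\lceil nR\rceil$ (harmless, since it only strengthens the lower bound on $H(X)$), and (iii) taking the limits in $n$ and then in $\epsilon$ in the correct order.
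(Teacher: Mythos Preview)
Your proof is correct and follows essentially the same approach as the paper: apply Theorem~\ref{theo:generalupperbound} to bound $I(X;T)$ from above, and Fano's inequality to bound it from below using the decoding guarantee, then derive a contradiction. The only cosmetic difference is in the order of the final steps --- the paper lower-bounds $H(X\mid T)$ and then applies Fano to lower-bound $P_e$, whereas you upper-bound $H(X\mid T)$ via Fano and then lower-bound $I(X;T)$ --- but this is the same argument rearranged, and your explicit handling of the ceiling and of the two limits is, if anything, slightly tidier.
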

  \begin{proof}
  Assume for contradiction that $R>\frac{-b\log(1-c)+c\log(1-b)}{c}$ is safely $c$-achievable for $\Indep_b$, and let $\pi$ be a safe $(n,Rn,\Indep_b(n),c,\epsilon)$-protocol. Let $\delta=R-\frac{-b\log(1-c)+c\log(1-b)}{c}$. We know from Theorem \ref{theo:generalupperbound} that 
  \[I(X;T)\leq \frac{-b\log(1-c)+c\log(1-b)}{c}n=(R-\delta)n.\]
  Now
  \[H(X|T)=H(X)-I(X;T)\geq Rn-(R-\delta)n=\delta n.\]
  By Fano's inequality (\ref{ineq:Fano}) we get that the probability of error for Frank's guess is
  \[P_e\geq \frac{\delta n-1}{nR}.\]
  Thus for sufficiently large $n_0$ and  sufficiently small $\epsilon$ we cannot have $P_e\leq \epsilon$. When $P_e>\epsilon$ there must exist an $x\in\mathcal{X}$ such that $\Pr(D(T)\neq x|X=x)>\epsilon$, so $R$ is not safely $c$-achievable. 
  \end{proof}

  Next we want to show that all rates $R<\frac{-b\log(1-c)+c\log(1-b)}{c}$ are safely $c$-achievable for $\Indep_b$. To do this, we can consider each person to be a usage of a channel, and use Shannon's Noisy-Channel Theorem.
  
  \begin{theo}\label{theo:indeplower}
  Any rate $R<\frac{-b\log(1-c)+c\log(1-b)}{c}$ is safely $c$-achievable for $\Indep_b$.
  \end{theo}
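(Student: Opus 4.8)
The plan is to realise each of the $n$ players as one use of a discrete memoryless channel --- the channel implicit in Example \ref{exam:achieve} --- and then invoke Shannon's Noisy Coding Theorem. Let $d,a\in\N$ be as in Example \ref{exam:achieve}, so $0<a<d$ and $\frac{a}{d}=\frac{b(1-c)}{c(1-b)}$, and for $y\in\{1,\dots,d\}$ write $B_y=\{1+(y-1)a,\dots,ya\}\bmod d$. Define a channel $q$ with input and output set both equal to $\{1,\dots,d\}$ by $q(z|y)=\frac{b}{a}+\frac{1-b}{d}$ for $z\in B_y$ and $q(z|y)=\frac{1-b}{d}$ for $z\notin B_y$. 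This is exactly the conditional distribution of the message a player sends in Example \ref{exam:achieve} when her intended symbol is $y$ and her leaking indicator is an independent $\mathrm{Bernoulli}(b)$. Feeding $q$ the uniform input gives mutual information exactly $\frac{-b\log(1-c)+c\log(1-b)}{c}$ by the computation (\ref{eq:capa}) of that example, so the capacity of $q$ satisfies $C\geq\frac{-b\log(1-c)+c\log(1-b)}{c}$.

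Now fix $R<\frac{-b\log(1-c)+c\log(1-b)}{c}\leq C$ and let $\epsilon>0$ and $n_0\in\N$ be given. By Shannon's Noisy Coding Theorem there is an $n\geq n_0$ and a code $\mathfrak{C}\colon\{1,\dots,2^{\lceil Rn\rceil}\}\to\{1,\dots,d\}^n$ whose maximum-likelihood decoder $D$ recovers the input from $n$ independent uses of $q$ with error at most $\epsilon$ for every message. (The notion of achievable rate from Section \ref{sec:prelim} already packages maximal rather than average error via expurgation; to obtain arbitrarily large length $n$, apply the theorem at some rate $R'$ with $R<R'<C$ and keep only $2^{\lceil Rn\rceil}$ of the codewords.) Build the collaborating cryptogenography protocol on $\mathcal{X}=\{1,\dots,2^{\lceil Rn\rceil}\}$ with $X$ uniform and $L=\Indep_b(n)$ independent of $X$, in which each player $\plr_i$ sends a single message: if $L_i=0$ she picks it uniformly in $\{1,\dots,d\}$, and if $L_i=1$ (so she knows $X$) she picks it uniformly in $B_{\mathfrak{C}(X)_i}$. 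The non-leakers' messages are independent of $X$, so this is a legal protocol; take $D$ as the guessing function. By construction the conditional law of the transcript $T|_{X=x}$ equals that of $n$ independent uses of $q$ on the inputs $\mathfrak{C}(x)_1,\dots,\mathfrak{C}(x)_n$, so $\Pr(D(T)=x|X=x)\geq 1-\epsilon$ for every $x\in\mathcal{X}$.

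It remains to verify the suspicion condition, and the point inherited from Example \ref{exam:achieve} is that it holds for \emph{every} transcript, which makes the protocol safe rather than merely risky. Fix $i$, $x\in\mathcal{X}$ and a transcript $t$ with $\Pr(T=t,X=x)>0$, and let $A_i$ be $\plr_i$'s message. Because $L_i$ is independent of $X$ and of every other player's randomness, and enters the transcript only through $A_i$, we have $\Pr(L_i=1|T=t,X=x)=\Pr(L_i=1|A_i=t_i,\mathfrak{C}(X)_i=\mathfrak{C}(x)_i)$, which by the Bayes computation (\ref{eq:Bayes}) equals $c$ when $t_i\in B_{\mathfrak{C}(x)_i}$ and $0$ otherwise; either way it is $\leq c$. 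Hence this is a safe $(n,Rn,\Indep_b(n),c,\epsilon)$-protocol, and since $\epsilon$ and $n_0$ were arbitrary, $R$ is safely $c$-achievable for $\Indep_b$.

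I expect no serious obstacle. The one idea with content is that a single player's behaviour is a memoryless channel whose useful information rate was already pinned down in Example \ref{exam:achieve}, and, crucially, that the identity $\Pr(L_i=1|T=t,X=x)\in\{0,c\}$ there does not depend on which symbol the player is trying to transmit, so it survives unchanged when an error-correcting code is wrapped around the channel. The remaining points --- reconciling the $\lceil\cdot\rceil$ conventions in the definitions of achievable rate and of a risky protocol, passing from average to maximal decoding error, and securing codes of arbitrarily large length --- are routine consequences of Shannon's theorem.
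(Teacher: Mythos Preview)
Your approach is essentially the same as the paper's: view each player as one use of the channel from Example \ref{exam:achieve}, apply Shannon's theorem to get a code, and verify via (\ref{eq:Bayes}) that the posterior $\Pr(L_i=1|T=t,X=x)$ lands in $\{0,c\}$ regardless of which codeword letter player $i$ is transmitting.

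There is, however, one genuine gap. You write ``Let $d,a\in\N$ be as in Example \ref{exam:achieve}, so $0<a<d$ and $\frac{a}{d}=\frac{b(1-c)}{c(1-b)}$''. But Example \ref{exam:achieve} explicitly \emph{assumes} that $\frac{b(1-c)}{c(1-b)}$ is rational before defining $a$ and $d$; for generic $b,c$ no such integers exist, and your construction does not get off the ground. The paper handles this by first replacing $c$ by a slightly smaller $c'\leq c$ chosen so that $\frac{b(1-c')}{c'(1-b)}\in\mathbb{Q}$ and still $R<\frac{-b\log(1-c')+c'\log(1-b)}{c'}$ (possible by continuity in $c'$). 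One then runs exactly your argument with $c'$ in place of $c$; the Bayes computation yields posterior $0$ or $c'$, and since $c'\leq c$ the protocol is safe at threshold $c$. Once you insert this perturbation step, your proof and the paper's coincide.
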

  \begin{proof}
  Let $R<\frac{-b\log(1-c)+c\log(1-b)}{c}$ and let $c'\leq c$ be a number such that $\frac{b(1-c')}{c'(1-b)}$ is rational and $R<\frac{-b\log(1-c')+c'\log(1-b)}{c'}$. Now use $b$ and $c'$ to define $a$ and $d$ as in Example \ref{exam:achieve}. We consider the channel that on input $j$ with probability $b$ returns a random uniformly distributed element in $\{1+(j-1)a,2+(j-1)a\dots,ja\} \mod{d}$, and with probability $1-b$ it returns a random and uniformly distributed element in $\{1,\dots ,d\}$. We see that each person sending a message, exactly corresponds to using this channel. The computation (\ref{eq:capa}) from Example \ref{exam:achieve} shows that when input of this channel is uniformly distributed, the mutual information between input and output is $\frac{-b\log(1-c')+c'\log(1-b)}{c'}$. Thus the capacity of the channel is at least this value (in fact, it is this value). We now use Shannon's Noisy-Channel Coding Theorem \cite{Shannon48, ThomasCover91} to get an error correcting code $\mathfrak{C}:\mathcal{X}\to \{1,\dots,d\}^n$ for this channel, that achieves rate $R$ and for each $x$ fails with probability $<\epsilon$. Now when $X=x$ any player that is not leaking will send a message chosen uniformly at random from $\{1,\dots,d\}$ and any player $\plr_i$ with $L_i=1$ chooses a message uniformly at random from $\{1+(j-1)a,2+(j-2)a,\dots, ja\}\mod{d}$, where $j=\mathfrak{C}(x)_i$ is the $i$'th letter in the codeword for $x$. This ensures that Frank will be able to guess $x$ with probability $1-\epsilon$.  We see that given $X$ the random variable $(A_i,L_i)$, is independent from $A_1,L_1,\dots, A_{i-1},L_{i-1},A_{i+1}, L_{i+1},\dots, A_n,L_n$. Using the computation from (\ref{eq:Bayes}) we now get that $\Pr(L_i=1|T=t,X=x)$ is either $0$ or $c'\leq c$ as needed. 
  \end{proof}

  For a specific code $\mathfrak{C}$, the message $A_i$ send by $\plr_i$ may not be uniform, as some letters might occur more often than others as the $i$'th letter in $\mathfrak{C}(X)$. On the other hand, given $L_i=0$, we know that $A_i$ is uniformly distributed, and Theorem \ref{theo:Alice} then implies that the expected increases in suspicion will be strictly greater than the leaked information. The computation (\ref{eq:Bayes}) shows that the expected increases in suspicion is the same no matter the distribution of $\mathfrak{C}_i$, but of course the amount of leaked information is greatest when $\mathfrak{C}_i$ is uniformly distributed.

  \begin{coro}\label{coro:indepcapa}
  The safe $c$-capacity for $\Indep_b$ is $\frac{-b\log(1-c)+c\log(1-b)}{c}$.
  \end{coro}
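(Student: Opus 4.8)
The plan is to observe that this Corollary is nothing more than the combination of the matching lower and upper bounds already established, together with the definition of the safe $c$-capacity as the supremum of all safely $c$-achievable rates for $\Indep_b$. So the proof will be a two-line sandwich argument.

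First I would invoke Theorem \ref{theo:indeplower}: every rate $R<\frac{-b\log(1-c)+c\log(1-b)}{c}$ is safely $c$-achievable for $\Indep_b$. Since the capacity is the supremum of the set of safely $c$-achievable rates, and this set contains every number strictly below $\frac{-b\log(1-c)+c\log(1-b)}{c}$, the supremum is at least $\frac{-b\log(1-c)+c\log(1-b)}{c}$. Next I would invoke Proposition \ref{prop:indepupper}: no rate $R>\frac{-b\log(1-c)+c\log(1-b)}{c}$ is safely $c$-achievable, so the set of achievable rates is contained in $(-\infty,\frac{-b\log(1-c)+c\log(1-b)}{c}]$ and hence its supremum is at most $\frac{-b\log(1-c)+c\log(1-b)}{c}$. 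Combining the two inequalities gives equality, which is the claim.

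There is essentially no obstacle here; the only thing to be careful about is a boundary subtlety, namely whether the value $\frac{-b\log(1-c)+c\log(1-b)}{c}$ itself is achievable. This does not matter for the statement, since the supremum of $\{R : R < v\}$ is $v$ regardless, and Proposition \ref{prop:indepupper} only rules out rates strictly above $v$; so I would simply note that the capacity equals $v$ without asserting that rate $v$ is attained. No further calculation is needed.
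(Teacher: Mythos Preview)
Your proposal is correct and is exactly the paper's approach: the paper's proof reads in its entirety ``Follows from Proposition \ref{prop:indepupper} and Theorem \ref{theo:indeplower},'' and you have simply spelled out the implicit sandwich argument. Your remark about the boundary value is a fine clarification but not needed for the argument.
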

  \begin{proof}
  Follows from Proposition \ref{prop:indepupper} and Theorem \ref{theo:indeplower}.
  \end{proof}
  
  Corollary \ref{coro:indepcapa} shows that if you want information about something that some proportion $b$ of the population knows, but no one wants other people to think that they know it with probability $>c$, you can still get information about the subject, and at a rate of $\frac{-b\log(1-c)+c\log(1-b)}{c}$ bits per person you ask. What if only $l$ persons in the world have the information? They are allowed to blend into a group of any size $n$, and observers will think that any person in the larger group is as likely as anyone else to have the information. Only the number of persons with the information is known to everyone.
  
  If they are part of a group of $n\to \infty$ people, then each person in the larger group would have the information with probability $b=\frac{l}{n}$. If we forget that exactly $l$ persons know the information, and instead assumed that all the $L_i$s were independent with $\Pr(L_i=1)=b$ they would be able to leak
  \begin{align*}
  \frac{-b\log(1-c)+c\log(1-b)}{c}n=& \frac{-\frac{l}{n}\log(1-c)+c\log(1-\frac{l}{n})}{c}n\\
\to& \left(\frac{\log(1-c)}{c}-\log(e)\right)l
  \end{align*}
  bits of information, where $e$ is the base of the natural logarithm. We will see that even in the case where the number of leakers is known and constant, we can still get this rate. First we define the distribution of $(L_1,\dots, L_n)$ that we get in this case.
  
  \begin{defi}\label{defi:Fixed}
  Let $\Fixed(l,n)$ be the random variable $(L_1,\dots, L_n)$ that is distributed such that the set of leakers $\{\plr_i|L_i=1\}$ is uniformly distributed over all subsets of $\{\plr_1,\dots, \plr_n\}$ of size $l$.
  
   A rate $R$ is \emph{safely/riskily $c$-achievable for $\Fixed$} if for all $\epsilon>0$ and all $l_0$, there exists a safe/risky $(n,lR,\Fixed(l,n),c,\epsilon)$-protocol for some $l\geq l_0$ and some $n$.
  
  The \emph{safe/risky $c$-capacity for $\Fixed$} is the supremum of all safely/riskily $c$-achievable rates for $\Fixed$.
  \end{defi}
  Notice that in this definition, the rate is measured in bits per leaker rather than bits per person communicating. That is because in this setup we assume that the number of people with the information is the bounded resource, and that they can find an arbitrarily large group of person to hide in.

  Again, it turns out that the safe and the risky $c$-capacity for $\Fixed$ are actually the same, but for the proofs it will be convenient to have both definitions.

  \begin{prop}\label{prop:safeupper}
  No rate $R>\frac{-\log(1-c)}{c}-\log(e)$, where $e$ is the base of the natural logarithm is safely $c$-achievable for $\Fixed$.
  \end{prop}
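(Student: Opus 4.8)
The plan is to run exactly the Fano-inequality argument used in the proof of Proposition~\ref{prop:indepupper}, but to feed it the sharper information bound of Theorem~\ref{theo:generalupperbound} specialised to the fact that, under $\Fixed(l,n)$, every player leaks with prior probability exactly $b=l/n$, which shrinks as the hiding group grows. The point is that the "per person'' bound $\frac{-b\log(1-c)+c\log(1-b)}{c}n$ already contains the term $n\log(1-l/n)$, and $n\log(1-l/n)\to -l\log(e)$ from below, so it collapses to a clean "per leaker'' bound.

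First I would assume for contradiction that some $R>\frac{-\log(1-c)}{c}-\log(e)$ is safely $c$-achievable for $\Fixed$, and put $\delta=R-\left(\frac{-\log(1-c)}{c}-\log(e)\right)>0$. Fix $\epsilon>0$ and $l_0$ and take a safe $(n,lR,\Fixed(l,n),c,\epsilon)$-protocol $\pi$ with $l\ge l_0$. Because $\pi$ is safe, $\Pr(L_i=1\mid T=t,X=x)\le c$ for all $i,t,x$; because $L$ is independent of $X$ under $\Fixed(l,n)$, $\Pr(L_i=1\mid X=x)=l/n=:b$ for every $i$ and $x$ (and $b<1$, since $b=1$ would force $\Pr(L_i=1\mid T,X)=1>c$). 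So the hypotheses of Theorem~\ref{theo:generalupperbound} hold, giving
\[
I(X;T)\le\frac{-b\log(1-c)+c\log(1-b)}{c}\,n=\frac{-\log(1-c)}{c}\,l+n\log(1-l/n).
\]
Applying $\ln(1+y)\le y$ with $y=-l/n$, i.e.\ $\log(1-l/n)\le -\frac{l}{n}\log(e)$, the second term is at most $-l\log(e)$, hence
\[
I(X;T)\le l\left(\frac{-\log(1-c)}{c}-\log(e)\right)=(R-\delta)l.
\]

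From here the argument is identical to Proposition~\ref{prop:indepupper}: since $X$ is uniform on $\{1,\dots,2^{\lceil lR\rceil}\}$, we get $H(X\mid T)=H(X)-I(X;T)\ge lR-(R-\delta)l=\delta l$, and Fano's inequality~(\ref{ineq:Fano}) gives $P_e\ge\frac{\delta l-1}{\lceil lR\rceil}$ for Frank's guess $D(T)$. As $l\to\infty$ this tends to $\delta/R>0$, so for $l_0$ large and $\epsilon$ small it is impossible that $P_e\le\epsilon$; but a safe $(n,lR,\Fixed(l,n),c,\epsilon)$-protocol forces $\Pr(D(T)=x\mid X=x)\ge 1-\epsilon$ for every $x$ (the reasonable-doubt condition being automatic for a safe protocol), so $P_e\le\epsilon$, a contradiction.

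I do not foresee a real obstacle: the substantive inputs (Theorem~\ref{theo:leakleqsusp}, Theorem~\ref{theo:generalupperbound}, Fano, Shannon-style Fano rounding) are all available, and the only genuinely new step is the elementary estimate $n\log(1-l/n)\le -l\log(e)$ that turns the bound "per communicating person'' into the bound "per leaker''. The quantifier structure of "$c$-achievable for $\Fixed$'' ("for some $l\ge l_0$ and some $n$'') actually helps us, because our bound on $I(X;T)$ holds uniformly for every admissible pair $(l,n)$; the mild care needed is just to check $b=l/n<1$ so that $\log(1-l/n)$ is finite and negative, which is forced by safety since $c<1$.
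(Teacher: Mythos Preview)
Your proposal is correct and is essentially the paper's own proof: both apply Theorem~\ref{theo:generalupperbound} with $b=l/n$, use $\ln(1+x)\le x$ to turn $n\log(1-l/n)$ into $-l\log(e)$, and finish with Fano's inequality exactly as in Proposition~\ref{prop:indepupper}. Your version is in fact slightly more careful (you note why $b<1$ and keep the ceiling $\lceil lR\rceil$ in the Fano denominator), but the argument is the same.
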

  \begin{proof}
  This proof is very similar to  the proof of Proposition \ref{prop:indepupper}.
  
  Assume for contradiction that $R>\frac{-\log(1-c)}{c}-\log(e)$ is safely $c$-achievable. Consider a safe $(n,lR,\Fixed(l,n),c,\epsilon)$-protocol $\pi$. We know from Theorem \ref{theo:generalupperbound} that 
  \[I(X;T)\leq \frac{-\frac{l}{n}\log(1-c)+c\log\left(1-\frac{l}{n}\right)}{c}n\leq l\left(\frac{-\log(1-c)}{c}-\log(e)\right).\]
  Here the second inequality follows from $\ln(1+x)\leq x$ or equivalently $\log(1+x)\leq \frac{x}{\ln(2)}=-x\log(e)$. Let $\delta:=R-\frac{-\log(1-c)}{c}-\log(e)$. Now
  \[H(X|T)=H(X)-I(X;T)\geq l\left(R-\frac{-\log(1-c)}{c}-\log(e)\right)=l\delta.\]
  By Fano's inequality we get that the probability of error,  $P_e=\Pr(D(t)\neq x)$ averages over all possible values of $x$ is
  \[P_e\geq \frac{l\delta-1}{lR}.\]
  Thus if we chose $l_0$ sufficiently large and $\epsilon$ sufficiently small we cannot have $l\geq l_0$ and $P_e\leq \epsilon$, so that there must be some value $x$ where the probability of error $\Pr(D(T)\neq x|X=x)$ is greater than $\epsilon$.
  \end{proof}

  \begin{theo}\label{theo:riskylower}
  Any rate $R<\frac{-\log(1-c)}{c}-\log(e)$ is riskily $c$-achievable for $\Fixed$.
  \end{theo}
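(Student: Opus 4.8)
\textbf{Overall strategy.} The plan is to reduce the $\Fixed(l,n)$ problem to the $\Indep_b(n)$ problem that we already solved in Theorem \ref{theo:indeplower}, and then take $n\to\infty$ (with $l$ growing suitably) so that the $\Indep_b$ capacity $\frac{-b\log(1-c)+c\log(1-b)}{c}$, with $b=l/n$, converges to $\frac{-\log(1-c)}{c}-\log(e)$ as computed in the displayed limit just before Definition \ref{defi:Fixed}. The point of the reduction is that a protocol designed under the (false) assumption that the $L_i$ are i.i.d.\ Bernoulli$(b)$ still does something sensible when in fact exactly $l$ of the players leak: the messages of the leakers are generated in exactly the same way (each leaker $\plr_i$ only uses its own $X_i$, $L_i$ in Example \ref{exam:achieve}'s construction, which is the one used in Theorem \ref{theo:indeplower}), and the messages of the non-leakers are also generated the same way (uniform on $\{1,\dots,d\}$). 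So Frank's decoder, which only depends on the transcript, has exactly the same error behaviour. The place where the two models genuinely differ is Eve's posterior $\Pr(L_i=1\mid T=t, X=x)$, and the subtlety of the proof is controlling this in the $\Fixed$ model — which is why the theorem only claims \emph{riskily} $c$-achievable rather than safely.

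\textbf{Key steps in order.} (1) Fix $R<\frac{-\log(1-c)}{c}-\log(e)$. Using the limit computation before Definition \ref{defi:Fixed}, choose $b>0$ small enough and then $c'\le c$ with $\frac{b(1-c')}{c'(1-b)}$ rational, so that $R<\frac{-b\log(1-c')+c'\log(1-b)}{c'}$, and pick $n$ with $l:=bn\in\N$. Build the channel and the code $\mathfrak{C}:\mathcal{X}\to\{1,\dots,d\}^n$ exactly as in the proof of Theorem \ref{theo:indeplower}, with parameters $b,c'$. (2) Run the protocol in the $\Fixed(l,n)$ model: a leaker with codeword-letter $j$ sends a uniform element of $\{1+(j-1)a,\dots,ja\}\bmod d$, a non-leaker sends a uniform element of $\{1,\dots,d\}$. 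Since the transcript distribution given $X=x$, \emph{restricted to the event that the number of leakers equals $l$}, is identical in the $\Indep_b$ and $\Fixed(l,n)$ models up to the (now irrelevant) identity of which players leak, Frank decodes correctly with probability $1-\epsilon$ for each $x$. (3) Estimate Eve's posterior. For a fixed player $\plr_i$ who sent a message consistent with letter $j=\mathfrak{C}(x)_i$, compute $\Pr(L_i=1\mid T=t,X=x)$ in the $\Fixed$ model by Bayes; the prior $\Pr(L_i=1)=l/n=b$ is the same, but now $L_i$ is \emph{not} independent of the other $L_{i'}$, so one must show that conditioning on the rest of the transcript does not push this probability far above $c'$. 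The natural way is: given $X=x$ and given the messages of all players other than $\plr_i$, the conditional probability that $\plr_i$ is a leaker is close to $b$ (it differs from $b$ only through how many leakers the other $n-1$ messages ``use up'', and by a law-of-large-numbers / concentration argument this count is $l\pm o(l)$ with probability $1-o(1)$). Hence with probability $1-o(1)$ over $t$, $\Pr(L_i=1\mid T=t,X=x)\le c'+o(1)\le c$ for the one relevant player, and it is $0$ for everyone else whose message is inconsistent with their codeword letter — wait, all non-leakers send uniform messages so many players will have messages consistent with some letter; more carefully, for each player $\plr_i$ the posterior is either the Bayes expression above or, if the message is inconsistent with $\mathfrak{C}(x)_i$, it is $0$. (4) Combine: with probability $\ge 1-\epsilon-o(1)$ over $t\sim T|_{X=x}$, both Frank decodes correctly and all players have posterior $\le c$; absorb the $o(1)$ by taking $l_0$ large, giving a risky $(n,lR,\Fixed(l,n),c,\epsilon)$-protocol with $l\ge l_0$.

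\textbf{Main obstacle.} The delicate point is Step (3): quantifying how much the dependence among $L_1,\dots,L_n$ in the $\Fixed$ model inflates Eve's posterior on a single player beyond the ``ideal'' value $c'$. Unlike the $\Indep_b$ case, where $\Pr(L_i=1\mid T,X)$ was \emph{exactly} $c'$ by the clean Bayes computation \eqref{eq:Bayes}, here one gets $c'$ only up to a correction coming from the fluctuation of the number of other leakers around its mean $l-L_i$; this is where a concentration inequality (Chernoff/Hoeffding-type, or a direct hypergeometric estimate) is needed, and it is also the reason the conclusion is only ``with probability $1-\epsilon$ over the transcript'' (i.e.\ risky) rather than deterministic (safe). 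A secondary bookkeeping obstacle is making the two-parameter limit ($b\to0$, and within that $c'\to c$ rational-approximation, and $n$ large) fit together so that a single $l\ge l_0$ works for the given $\epsilon$; this is routine once the order of quantifiers is chosen as above, but must be written carefully.
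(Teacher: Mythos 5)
There is a genuine gap, and it sits exactly where the paper's proof does its real work: your Step (2). The guarantee you import from Theorem \ref{theo:indeplower} is an \emph{unconditional} error bound: for each $x$, $\Pr(\hat X\neq x\mid X=x)\le\epsilon'$, where the probability averages over the i.i.d.\ Bernoulli$(b)$ leaker pattern. Passing to the $\Fixed(l,n)$ model with $l=bn$ amounts to conditioning on the event $\{S_{\Indep}=l\}$, and since $\Pr(S_{\Indep}=bn)=\Theta(1/\sqrt{n})$, the conditional error probability is only bounded by $\epsilon'/\Pr(S_{\Indep}=l)=\Theta(\epsilon'\sqrt{n})$. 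Your assertion that "Frank decodes correctly with probability $1-\epsilon$" therefore does not follow from what you have; you would need $\epsilon'=o(1/\sqrt n)$ for the \emph{same} $n$, and the statement of Theorem \ref{theo:indeplower} (and of achievability as defined in Section \ref{sec:prelim}) only gives you, for each $\epsilon'$, \emph{some} $n\ge n_0$, with no control tying $\epsilon'$ to $n$. (One could rescue your route by invoking the exponential error decay in Shannon's theorem, but that is a strictly stronger statement than anything proved or cited in the paper, and you would have to say so explicitly.) The paper avoids this entirely by a different device: it splits $2n$ players into two groups of $n$, leaks an independent half $X_1,X_2$ of the secret with each group, and proves the lemma that the hypergeometric count $S_{\Fixed,1}$ of leakers in one group satisfies $\Pr(S_{\Fixed,1}=k)\le 2\Pr(S_{\Indep}=k)$ for \emph{every} $k$ (via a monotonicity argument plus Stirling). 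That uniform change-of-measure bound turns the $\Indep_b$ error guarantee into a $\Fixed$ error guarantee at the cost of a factor $2$ per half, i.e.\ $4\epsilon'$ total, with no $\sqrt n$ loss. This two-group trick is the missing idea.

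Your Step (3) is essentially right and matches the paper: in this protocol $\Pr(L_i=1\mid T=t,X=x)$ equals $l/|K|$ for every player whose message is consistent with their codeword letter ($2l/|K|$ in the paper's two-group version), because all size-$l$ subsets of consistent players are equally likely a posteriori; one then shows $|K|\ge l/c$ with high probability since $\E|K|=l/c'$ with $c'<c$ and $|K|$ is $l$ plus a binomial, so Chebyshev suffices. Making this exact computation rather than a vague "the posterior is close to $b$" is worth doing, since the posterior is \emph{not} close to $b$ --- it is close to $c'$. A minor slip elsewhere: since the rate for $\Fixed$ is measured per leaker, the inequality you need in Step (1) is $R<\frac{-b\log(1-c')+c'\log(1-b)}{bc'}$ (with the extra $b$ in the denominator), not $R<\frac{-b\log(1-c')+c'\log(1-b)}{c'}$.
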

  One way, and in the author's opinion the most illuminating way, to prove this is similar to the proof of Theorem \ref{theo:indeplower}. Again we would consider each player to be a use of a channel. However, in this case the different usages of the channel would not be independent as we know exactly how many people who are leaking. Intuitively, this should not be a problem, it should only make the channel more reliable. However to show that this work, we would have to go through the proof of Shannon Noisy-Channel Coding Theorem, and show that it still works. Instead we will give a shorter but less natural proof.

  The idea is to use the same protocol as when we showed the lower bound in Theorem \ref{theo:indeplower}. However, for each particular rate $R$ and number of player $n$, there is a small probability that Frank fail to guess $X$. The probability that exactly $b n$ players are leaking, when all the $L_i$'s are independent tends to $0$ as $n$ tends to infinity, so we could be unlucky that Frank often fais in this case. Instead of using the protocol from Theorem \ref{theo:indeplower} on all the players, we divide the player onto two groups and use Theorem \ref{theo:indeplower} on each group.

  \begin{proof}
 Let $R<\frac{-\log(1-c)}{c}-\log(e)$, then we can find rational $b>0$ and rational $c'< c$ such that $R<\frac{-b\log(1-c')+c'\log(1-b)}{bc'}$, and let $n_0,\epsilon>0$ be given. By Theorem \ref{theo:indeplower} for any $\epsilon'>0$ and any $n'_0$ there exists a safe $(n,nR,\Indep_b(n),c',\epsilon')$-protocol where $n>n'_0$. Take such a protocol, where $\epsilon'>0$ is sufficiently small and $n'_0$ is sufficiently large. We can also assume that $bn$ is an integer. 
 
 Now we will use this to make a risky $(2n,2\lceil nR\rceil,\Fixed(2nb),c,\epsilon)$-protocol. For such a protocol, $X$ should be uniformly distributed on $\{1,\dots, 2^{2\lceil nR\rceil}\}$, but instead we can also think of $X$ as a tuple $(X_1,X_2)$ where the $X_i$ are independent and each $X_i$ is uniformly distributed on $\{1,\dots ,2^{\lceil nR\rceil}\}$. Now we split the $2n$ persons into two groups of $n$, and let the first group use the protocol from the proof of Theorem \ref{theo:indeplower} to leak $X_1$, and the second group use the same protocol to leak $X_2$. We let Franks guess of the value of $X_1$ be a function $D_1$ depending only of the transcript of the communication of the first group, and his guess of $X_2$ be a function $D_2$ depending only on the transcript of the second group. These functions are the same as $D$ in the proof of Theorem \ref{theo:indeplower}. The total number of leakers is $2nb$, but the number of leakers in each half varies. Let $S_{\Indep}$ denote random variable that gives the number of leakers among $n$ people, when each is leaking with probability $b$, independently of each other. So $S_{\Indep}$ is binomially distributed, $S_{\Indep}\sim \text{B}(n,b)$. Let $S_{\Fixed,1}$ denote the number of leakers in the first group as chosen above. Now we have.
 
 \begin{lemm}
 For each $k$,
 \begin{align}
 \frac{\Pr(S_{\Fixed,1}=k)}{\Pr(S_{\Indep}=k)}\leq 2.\nonumber
 \end{align}
 \end{lemm}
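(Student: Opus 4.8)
The plan is to recognize $S_{\Fixed,1}$ as a conditioned version of $S_{\Indep}$. Suppose instead that all $2n$ people independently decide to leak, each with probability $b$, and let $S_1$ and $S_2$ be the numbers of leakers in the first and the second group of $n$ people. Then $S_1$ and $S_2$ are independent, each has the distribution of $S_{\Indep}$, and $S_1+S_2\sim\text{B}(2n,b)$. Conditioned on the leaker set having size exactly $2nb$, i.e.\ on $S_1+S_2=2nb$, this set is uniformly distributed among all subsets of size $2nb$, so the conditional distribution of $S_1$ is exactly that of $S_{\Fixed,1}$. Hence
\begin{align}
\Pr(S_{\Fixed,1}=k)=\Pr(S_1=k\mid S_1+S_2=2nb)=\frac{\Pr(S_1=k)\,\Pr(S_2=2nb-k)}{\Pr(S_1+S_2=2nb)}.\nonumber
\end{align}

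Dividing by $\Pr(S_{\Indep}=k)=\Pr(S_1=k)>0$ cancels the first factor, leaving
\begin{align}
\frac{\Pr(S_{\Fixed,1}=k)}{\Pr(S_{\Indep}=k)}=\frac{\Pr(\text{B}(n,b)=2nb-k)}{\Pr(\text{B}(2n,b)=2nb)}.\nonumber
\end{align}
As $k$ varies, the numerator is largest when $2nb-k$ hits the mode of $\text{B}(n,b)$. Since $bn$ is an integer and $0<b<1$, that mode is $\lfloor(n+1)b\rfloor=nb$, and for the same reason $2nb=\lfloor(2n+1)b\rfloor$ is the mode of $\text{B}(2n,b)$. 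So it suffices to show that the mode probability of $\text{B}(n,b)$ is at most twice the mode probability of $\text{B}(2n,b)$, that is
\begin{align}
\binom{n}{nb}b^{nb}(1-b)^{n-nb}\le 2\binom{2n}{2nb}b^{2nb}(1-b)^{2n-2nb}.\nonumber
\end{align}

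This inequality is the only real computation, and I would settle it with Stirling's formula in a form with explicit error bounds (Robbins' inequalities) applied to each factorial: this gives $\binom{n}{nb}b^{nb}(1-b)^{n-nb}=\frac{1}{\sqrt{2\pi nb(1-b)}}e^{\theta_1}$ and $\binom{2n}{2nb}b^{2nb}(1-b)^{2n-2nb}=\frac{1}{\sqrt{4\pi nb(1-b)}}e^{\theta_2}$ with explicitly bounded error terms $\theta_1<\tfrac1{12}$ and $\theta_2>-\tfrac16$ (the latter using $b(1-b)\le\tfrac14$). The $\sqrt{nb(1-b)}$ factors cancel, so the ratio of the two sides equals $\sqrt2\,e^{\theta_1-\theta_2}\le\sqrt2\,e^{1/12+1/6}=\sqrt2\,e^{1/4}<2$, as needed. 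I expect this numerical endgame to be the main obstacle: the coupling identity is essentially free, but the true ratio tends to $\sqrt2$ (because $\Pr(\text{B}(n,b)=nb)\sim1/\sqrt{2\pi nb(1-b)}$ while $\Pr(\text{B}(2n,b)=2nb)\sim1/\sqrt{4\pi nb(1-b)}$), so the margin to the constant $2$ is modest and one cannot get away with the crudest bounds on the binomial mode probabilities — honest control of the Stirling error terms is required.
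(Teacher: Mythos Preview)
Your argument is correct and lands in the same place as the paper's, but the route is a bit cleaner. The paper writes out the hypergeometric pmf $\Pr(S_{\Fixed,1}=k)=\binom{2l}{k}\binom{2n-2l}{n-k}/\binom{2n}{n}$ directly, then shows by a ratio-of-consecutive-terms computation that $\Pr(S_{\Fixed,1}=k)/\Pr(S_{\Indep}=k)$ is unimodal with maximum at $k=l$, and finally applies Stirling to that single value. Your conditioning identity $S_{\Fixed,1}\stackrel{d}{=}(S_1\mid S_1+S_2=2nb)$ bypasses the explicit hypergeometric formula and the monotonicity calculation: once you cancel $\Pr(S_1=k)$, the only $k$-dependence sits in $\Pr(B(n,b)=2nb-k)$, and the maximizer $k=nb$ is immediate from knowing the binomial mode. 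Both proofs then finish with the same Stirling estimate on the ratio of the two central binomial probabilities, which indeed tends to $\sqrt2$.

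One small remark on the endgame: the parenthetical ``using $b(1-b)\le\tfrac14$'' for $\theta_2>-\tfrac16$ points the wrong way, since that inequality gives a \emph{lower} bound on $\tfrac{1}{b}+\tfrac{1}{1-b}$, not an upper bound. What actually saves you is integrality: $nb$ and $n(1-b)$ are positive integers, so $2nb,2n(1-b)\ge2$ and Robbins gives $r_{2nb},\,r_{2n(1-b)}<\tfrac{1}{24}$, hence $\theta_2>-\tfrac{1}{12}$ (stronger than you claimed). With $\theta_1<\tfrac{1}{12}$ this yields $\sqrt2\,e^{1/6}<2$, and the proof goes through.
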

 \begin{proof}
 We have 
 \[\Pr(S_{\Fixed,1}=k)=\frac{\binom{2l}{k}\binom{2n-2l}{n-k}}{\binom{2n}{n}}.\]
 A simple computation shows 
 \[\frac{\Pr(S_{\Fixed,1}=k)\Pr(S_{\Indep}=k+1)}{\Pr(S_{\Indep}=k)\Pr(S_{\Fixed,1}=k+1)}=\frac{n-2l+k+1}{2l-k}\frac{l}{n-l},\]
 which is $>1$ for $k\geq l$ and $<1$ for $k<l$. Thus for fixed $n$ and $l$ the ratio $ \frac{\Pr(S_{\Fixed,1}=k)}{\Pr(S_{\Indep}=k)}$ is maximized by $k=l$.
 Using Sterlings formula, 
 \[1\leq \frac{n!}{\sqrt{2\pi n}\left(\frac{n}{e}\right)^n}\leq \frac{e}{\sqrt{2\pi}}\]
 we get
  \begin{align}
 \frac{\Pr(S_{\Fixed,1}=l)}{\Pr(S_{\Indep}=l)}=& \frac{\binom{2l}{l}\binom{2n-2l}{n-l}n^n}{\binom{2n}{n}\binom{n}{l}l^l(n-l)^{n-l}}\nonumber\\
 \leq & \sqrt{2} \left(\frac{e}{\sqrt{2\pi}}\right)^3\nonumber\\
 <&2.\nonumber
 \end{align}
 \end{proof}


 
 Given that $S_{\Indep}=k=S_{\Fixed,1}$, the distribution on $(L_1,\dots ,L_n)$ and transcript is the same in the protocol for $\Indep_b$ as it is for the first group in the above protocol. As Franks guessing function is the same in the two cases, the probability of error given $S_{\Indep}=k=S_{\Fixed,1}$ is the same in the two protocols. Let $E_{k}$ denote the probability of error in the protocol for $\Indep_b$ given $S_{\Indep}=k$, and let $E_{\Fixed,1}$ denote the probability that Franks guess of $X_1$ is wrong.
 \begin{align*}
 E_{\Fixed,1}=&\sum_{k=1}^n \Pr(S_{\Fixed,1}=k)E_k\\
 \leq &\sum_{k=1}^n2\Pr(S_{\Indep}=k)E_k\\
 \leq & 2\epsilon'.
 \end{align*}
 By the same argument, the probability that Frank guess $X_2$ wrong is at most $2\epsilon'$, so the probability that he guess $X=(X_1,X_2)$ is at most $4\epsilon'$. By choosing a sufficiently low $\epsilon'$ this is less than $\epsilon/2$
 
 To compute the posterior probability $\Pr(L_i=1|T=t)$ that $\plr_i$ was leaking, we have to take the entire transcript from both groups into account. Given $T$ and $X$, let $K$ denote the set of players who sent a message consistent with knowing $X$, and let $|K|$ denote the cardinality of $K$. Let $S$ be the set of the $2l$ leaking players, and let $s$ be a set of $2l$ players. Now
 \begin{align*} 
 \Pr(S=s|X=x,T=t)=\frac{\Pr(T=t|S=s,X=x)\Pr(S=s|X=x)}{P(T=t|X=x)}.
 \end{align*}
 This is $0$ if $s$ contains players who send a message not consistent with having the information, and is constant for all other $s$. Thus any two players who send a message consistent with having the information, are equally likely to have known $X$ given $T$ and $X$, so they will have $\Pr(L_i=1|T=t,X=x)=\frac{2l}{|K|}$. 
 So to ensure that $\Pr(L_i=1|T=t,X=x)\leq c$ with high probability (for each $x$ and random $t$) we only need to ensure that with high probability, $|K|\geq \frac{2l}{c}$. We see that $|K|=2l+\text{B}\left(2n-2l,\frac{b(1-c')}{c'(1-b)}\right)$, which have expectation $2l+(2n-2l)\frac{b(1-c')}{c'(1-b)}=\frac{2l}{c'}=\frac{2l}{c}+2l\frac{c-c'}{cc'}$. We also see that the variance is $(2n-2l)b(1-b)$, so for sufficiently high $n$ (and thus $l$) Chebyshev's inequality, shows that $|K|\geq \frac{2l}{c}$ with probability $1-\epsilon/2$. Thus for sufficiently large $n_0'$ and sufficiently low $\epsilon'$, the resulting protocol is a risky $(2n,2\lceil nR\rceil,\Fixed(2nb),c,\epsilon)$-protocol.
  \end{proof}

  \subsection{General $\mathfrak{L}$-structures}
  
  We have shown that the safe $c$-capacity for $\Fixed$ is $\leq \frac{-\log(1-c)}{c}-\log(e)\leq$ the risky $c$-capacity $\Fixed$. To finish the proof that they are both $\frac{-\log(1-c)}{c}-\log(e)$, we only need to show that the safe capacity is not smaller than the risky. Notice that the corresponding claim is not true if we are only interested in the mutual information between $X$ and transcript $T$. Here we could construct a collaborating cryptography protocol where with very high probability, $\Pr(L_i=1|T=t)<10^{-100}$, and yet $I(X;T)\geq 10^{100}$. To do this we need to take $X$ to have extremely high entropy, and with a very low probability a leaking player will send $X$ in a message, and otherwise just send some fixed message. The point of this section is to show that you cannot do something similar for reliable leakage. We will prove this in a setting that generalise $\Indep_b$ and $\Fixed$. Remember that the difference between $\Indep_b$ and $\Fixed$ capacity is not only in the distributions on $(L_1,\dots L_n)$, but also in what we are trying to minimize the use of. In $\Indep_b$ we want to have as few people communicating as possible, while in $\Fixed$ we only care about the number of people who are leaking. Our general definition have to capture this difference as well.
  
  \begin{defi}\label{defi:Lstruc}
  An \emph{$\mathfrak{L}$-structure} $(\mathfrak{L},C)$ is a set $\mathfrak{L}$ of joint distributions of $(L_1,\dots,L_n)$ (where $n$ do not need to be the same for each element), where each $L_i$ is distributed on $\{0,1\}$, together with a \emph{cost function} $C:\mathfrak{L}\to \mathbb{R}_{\geq 0}$.  
 
 $\Indep_b$ is the $\mathfrak{L}$-structure $(\mathfrak{L}_{\Indep_b},C_\#)$, where $\mathfrak{L}_{\Indep_b}$ is the set of distributions on $(L_1,\dots,L_n)$ (over $n\in \mathbb{N}$) where for all $i$, $\Pr(L_i=1)=b$ and the $L_i$ are independent, and $C_\#$ is the function that sends a distribution on $(L_1,\dots,L_n)$ to $n$. 
 
 $\Fixed$ is the $\mathfrak{L}$-structure $(\mathfrak{L}_{\Fixed},C_{\Fixed})$ of distributions on $(L_1,\dots, L_n)$ such that for some number $l$ the set $\{\plr_i|L_i=1\}$ is uniformly distributed over all subsets of $\{\plr_1,\dots,\plr_n\}$ of size $l$, and $C_{\Fixed}$ sends a distribution on $(L_1,\dots, L_n)$ to this number $l$.
  
 For an $\mathfrak{L}$-structure $(\mathfrak{L},C)$ a rate $R$ is safely/riskily $c$-achievable for $(\mathfrak{L},C)$ if for all $\epsilon>0$ and all $h_0\geq 0$ there exists a safe/riskily $(n,h,L,c,\epsilon)$-protocol with $h\geq h_0, h\geq C(L)R$ and $L\in \mathfrak{L}$.
 
 The safe/risky $c$-capacity for $(\mathfrak{L},C)$ is the supremum of all safely/riskily $c$-achievable rates for $(\mathfrak{L},C)$.
  \end{defi}
  
  We see that Definition \ref{defi:Lstruc} agrees with Definition \ref{defi:Indep} and Definition \ref{defi:Fixed}, and is much more general. 
  
  \begin{prop}\label{prop:capismono}
  Let $(\mathfrak{L},C)$ be an $\mathfrak{L}$-structure. The safe $c$-capacity for $(\mathfrak{L},C)$ and the risky $c$-capacity for $(\mathfrak{L},C)$ are non-decreasing functions of $c$.
  \end{prop}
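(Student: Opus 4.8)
The plan is to show the much stronger pointwise statement that for fixed $\epsilon$ and fixed $L\in\mathfrak{L}$, a given protocol together with its decoding function $D$ that works for some threshold $c$ automatically works for every threshold $c'\ge c$. Everything then follows by taking suprema.

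First I would unwind the definition of a risky $(n,h,L,c,\epsilon)$-protocol. The only place $c$ enters is through the event
\[
\mathcal{G}_{c,x}=\Big\{t:\ \forall i\ \Pr(L_i=1\mid T=t,X=x)\le c\Big\}\cap\{t:D(t)=x\},
\]
and the requirement is $\Pr\big(T\in\mathcal{G}_{c,x}\mid X=x\big)\ge 1-\epsilon$ for every $x$. Since $c\le c'$ implies $\{t:\Pr(L_i=1\mid T=t,X=x)\le c\}\subseteq\{t:\Pr(L_i=1\mid T=t,X=x)\le c'\}$ for each $i$, we get $\mathcal{G}_{c,x}\subseteq\mathcal{G}_{c',x}$, hence $\Pr(T\in\mathcal{G}_{c',x}\mid X=x)\ge\Pr(T\in\mathcal{G}_{c,x}\mid X=x)\ge 1-\epsilon$. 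Therefore the same collaborating cryptogenography protocol and the same $D$ constitute a risky $(n,h,L,c',\epsilon)$-protocol. The safe case is identical: the stronger requirement ``$\Pr(L_i=1\mid T=t,X=x)\le c$ for all $i,t,x$ with $\Pr(T=t,X=x)>0$'' likewise implies the same with $c$ replaced by $c'$.

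Next I would transfer this to rates: if $R$ is safely (resp. riskily) $c$-achievable for $(\mathfrak{L},C)$, then for every $\epsilon>0$ and every $h_0\ge 0$ there is a safe (resp. risky) $(n,h,L,c,\epsilon)$-protocol with $h\ge h_0$, $h\ge C(L)R$ and $L\in\mathfrak{L}$; by the previous paragraph this very protocol is also a safe (resp. risky) $(n,h,L,c',\epsilon)$-protocol with the same parameters, so $R$ is safely (resp. riskily) $c'$-achievable. Consequently the set of safely $c$-achievable rates is contained in the set of safely $c'$-achievable rates, and likewise for risky achievability; passing to the supremum shows that both the safe $c$-capacity and the risky $c$-capacity for $(\mathfrak{L},C)$ are non-decreasing in $c$.

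There is no real obstacle here: the statement is a monotonicity that falls out of the fact that the constraint defining admissible protocols only gets looser as $c$ grows. The only thing to be careful about is to check that $c$ appears \emph{nowhere else} in the definitions (in particular, the distribution of $(X,L_1,\dots,L_n)$ used in the definition of a protocol does not depend on $c$), so that the same object literally serves for both thresholds; I would state that explicitly.
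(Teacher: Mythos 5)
Your proposal is correct and is exactly the paper's argument: a safe/risky $(n,h,L,c,\epsilon)$-protocol is automatically a safe/risky $(n,h,L,c',\epsilon)$-protocol for $c'\geq c$ because the constraint only loosens, so the sets of achievable rates are nested and the suprema are monotone. You simply spell out the containment of the defining events in more detail than the paper does.
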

  \begin{proof}
  Let $c'>c$. Any safe/risky $(n,h,L,c,\epsilon)$-protocol is a safe/risky $(n,h,L,c',\epsilon)$-protocol, so any safe/riskily $c$-achievable rate for $(\mathfrak{L},C)$ is a  safe/riskily $c'$-achievable rate for $(\mathfrak{L},C)$.
  \end{proof}
  
  \begin{prop}\label{prop:riskygeqsafe}
  Let $(\mathfrak{L},C)$ be an $\mathfrak{L}$-structure. The safe $c$-capacity for $(\mathfrak{L},C)$ is at most the risky $c$-capacity for $(\mathfrak{L},C)$.
  \end{prop}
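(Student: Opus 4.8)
The plan is to observe that this proposition is immediate once one unwinds the definitions; essentially no computation is involved. The key point is that a \emph{safe} protocol is by definition a special case of a \emph{risky} protocol: the definition of a safe $(n,h,L,c,\epsilon)$-protocol simply takes a risky $(n,h,L,c,\epsilon)$-protocol and imposes the extra requirement that $\Pr(L_i=1|T=t,X=x)\le c$ hold for \emph{all} $i,t,x$ with $\Pr(T=t,X=x)>0$, rather than only with probability $1-\epsilon$ over the random transcript. In particular, every safe $(n,h,L,c,\epsilon)$-protocol is also a risky $(n,h,L,c,\epsilon)$-protocol, with exactly the same parameters $n,h,L$.

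First I would lift this to the level of achievable rates. Suppose $R$ is safely $c$-achievable for $(\mathfrak{L},C)$. By Definition \ref{defi:Lstruc}, for every $\epsilon>0$ and every $h_0\ge 0$ there is a safe $(n,h,L,c,\epsilon)$-protocol with $h\ge h_0$, $h\ge C(L)R$, and $L\in\mathfrak{L}$. By the previous observation this very protocol is a risky $(n,h,L,c,\epsilon)$-protocol with the same $n,h,L$, and therefore it witnesses --- for this same $\epsilon$ and $h_0$ --- that the defining condition for $R$ to be riskily $c$-achievable is satisfied. Since $\epsilon$ and $h_0$ were arbitrary, $R$ is riskily $c$-achievable for $(\mathfrak{L},C)$.

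Hence the set of safely $c$-achievable rates for $(\mathfrak{L},C)$ is contained in the set of riskily $c$-achievable rates, and taking suprema yields that the safe $c$-capacity is at most the risky $c$-capacity, as claimed. I do not expect any genuine obstacle here; the only care needed is bookkeeping of the quantifiers in Definition \ref{defi:Lstruc}, namely that the single protocol produced for a given pair $(\epsilon,h_0)$ serves in both roles at once. The substantive direction is the reverse inequality --- that the risky $c$-capacity is no larger than the safe $c$-capacity, so that the two in fact coincide --- which is carried out in the results that follow; that asymmetry is exactly why the paper notes the analogous equality fails when $I(X;T)$ replaces reliable guessing.
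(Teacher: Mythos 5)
Your proposal is correct and is essentially identical to the paper's proof: both observe that every safe $(n,h,L,c,\epsilon)$-protocol is by definition a risky $(n,h,L,c,\epsilon)$-protocol, so every safely $c$-achievable rate is riskily $c$-achievable, and the inequality of capacities follows by taking suprema. Your extra care with the quantifiers over $\epsilon$ and $h_0$ is a harmless elaboration of the same one-line argument.
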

  \begin{proof}
  Any safe $(n,h,L,c,\epsilon)$-protocol is a risky $(n,h,L,c,\epsilon)$-protocol, so any safely $c$-achievable rate for $(\mathfrak{L},C)$ is riskily $c$-achievable for $(\mathfrak{L},C)$. 
  \end{proof}
  
  The opposite inequality almost holds. Before we show that, we need a lemma.
 
  \begin{lemm}\label{lemm:nosurprices}
  For any risky $(n,h,L,c,\epsilon)$-protocol $\pi$, there is a risky $(n,h,L,c,\epsilon)$-protocol $\pi'$ where each message is either $0$ or $1$, and given previous transcript and given that the person sending the message is not leaking, there is at least probability $1/3$ of the message being $0$ and at least $1/3$ of it being $1$. 
  \end{lemm}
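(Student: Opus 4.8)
The plan is to leave the logical structure of $\pi$ untouched and merely re-encode each individual message as a string of $0$/$1$ messages, using, for each round, an arithmetic-coding-style binary tree tailored to the non-leaker's distribution. Suppose that in round $k$ with previous transcript $t^{k-1}$ it is $\plr_j$'s turn and $\pi$ prescribes a distribution $p_?$ on a (finite) message set $\mathcal A$ for the case $L_j=0$ and distributions $p_x$ for the cases $L_j=1,X=x$. We may assume $\supp(p_x)\subseteq\supp(p_?)$ for all $x$: a leaker sending a message of $p_?$-probability zero already forces the posterior of $L_j$ to be $1>c$, so such a transcript lies in the $\le\epsilon$ failure set of $\pi$, and we may freely replace that message by (the encoding of) a fixed element of $\supp(p_?)$, which only enlarges the set of good transcripts. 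We build a finite binary tree $\mathcal T=\mathcal T_{k,t^{k-1}}$ whose leaves carry labels in $\supp(p_?)$ (repetitions allowed) together with a probability distribution $\mu$ on the leaves such that (i) the total $\mu$-mass of the leaves labelled $a$ is $p_?(a)$, and (ii) at every internal node the conditional ``go left'' probability under $\mu$ lies in $[\tfrac13,\tfrac23]$. In $\pi'$, $\plr_j$ first samples the original message $a$ (from $p_?$ if $L_j=0$, from $p_x$ if $L_j=1,X=x$), then picks a leaf $\ell$ labelled $a$ with probability proportional to $\mu$ (so a leaker's leaf distribution is $\nu_x(\ell)=p_x(\mathrm{label}(\ell))\mu(\ell)/p_?(\mathrm{label}(\ell))$), and finally broadcasts the bits along the root-to-$\ell$ path one at a time; the round ends when a leaf is reached. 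Everyone knows $\pi$ and hence $p_?$, so everyone builds the same tree, the codewords are prefix-free and the bit-transcript is parseable, and from $t'$ Frank recovers the message of every round, hence the original transcript $t$; set $D'(t'):=D(t)$.

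Constructing the tree: recurse on a finite nonnegative measure $\nu$ on $\supp(p_?)$, with total mass $\lvert\nu\rvert$, starting from $\nu=p_?$. If $\supp(\nu)=\{a\}$ is a singleton, output a leaf labelled $a$. Otherwise let $a^\ast$ be a heaviest element. If $\nu(a^\ast)\le\tfrac23\lvert\nu\rvert$, add elements of $\supp(\nu)$ to a set $S_0$ in decreasing order of mass until $\lvert\nu|_{S_0}\rvert\ge\tfrac13\lvert\nu\rvert$; a short check shows $S_0$ and $S_1:=\supp(\nu)\setminus S_0$ are then both nonempty with mass in $[\tfrac13\lvert\nu\rvert,\tfrac23\lvert\nu\rvert]$, and we recurse on $\nu|_{S_0}$ and $\nu|_{S_1}$, each of strictly smaller support. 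If $\nu(a^\ast)>\tfrac23\lvert\nu\rvert$, ``split'' $a^\ast$: send a leaf labelled $a^\ast$ of mass $\tfrac12\lvert\nu\rvert$ to the left, and to the right the measure $\nu'$ with $\nu'(a^\ast)=\nu(a^\ast)-\tfrac12\lvert\nu\rvert$ and $\nu'=\nu$ elsewhere, and recurse on $\nu'$. This terminates: along any infinite root-to-leaf path the support size is non-increasing, so after some point it is constant and (since a non-splitting step would shrink it) every later step is a splitting step; the heaviest element of the new measure must then still be $a^\ast$ (any original $a\ne a^\ast$ has mass $<\tfrac13\lvert\nu\rvert<\tfrac23\lvert\nu'\rvert$), and after $m$ splitting steps its mass has decreased by $(1-2^{-m})$ times the mass at the start of the run, which eventually drops to $0$ since that relative mass is $<1$ — contradiction. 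As $\pi$ uses only finitely many message distributions, this yields a uniform depth bound $N$ and $length(\pi')\le N\cdot length(\pi)$. (Alternatively one can truncate each $\mathcal T$ at a large depth with a forced leaf, absorbing the tiny chance of a mis-encoded message into $\epsilon$.)

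It remains to check that $\pi'$ with $D'$ is a risky $(n,h,L,c,\epsilon)$-protocol with the extra balancedness property. Balancedness is immediate from (ii): given the previous $\pi'$-transcript, everyone knows the current tree and node, and conditioned further on the speaker not leaking the next bit is $0$ with exactly that node's left-probability under $\mu$, which lies in $[\tfrac13,\tfrac23]$. For the risky conditions, note first that in every round the \emph{label} produced in $\pi'$ is distributed exactly like the message of $\pi$ (from $p_?$ or $p_x$ according to $L_j$ and $X$), so the joint law of $(X,L,T)$ is unchanged and in particular $\Pr(D'(T')=x\mid X=x)=\Pr(D(T)=x\mid X=x)$. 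Second, the choice of leaf leaks nothing about any $L_i$ beyond the label: for every relevant conditioning we have $\Pr(\mathrm{leaf}=\ell\mid L_j=v,\text{past},X)=\mu(\ell)\,g(v,\mathrm{label}(\ell))$ with $g(0,\cdot)\equiv1$ and $g(1,a)=p_x(a)/p_?(a)$, so the common factor $\mu(\ell)$ cancels in any posterior, giving $\Pr(L_i=1\mid T'=t',X=x)=\Pr(L_i=1\mid T=t,X=x)$ for the $\pi$-transcript $t$ determined by $t'$. Hence, for each $x$, the event ``$\forall i\colon \Pr(L_i=1\mid T'=t',X=x)\le c$ and $D'(t')=x$'' has the same probability under $T'|_{X=x}$ as its $\pi$-analogue under $T|_{X=x}$, namely $\ge 1-\epsilon$.

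The main obstacle is really just the termination and bounded depth of the splitting trees, handled above by the observations that the number of class-boundaries strictly decreases at a non-splitting step and that a run of splitting steps must exhaust the dominant element's mass (or, if one prefers, by truncation into $\epsilon$); the other delicate point is the bookkeeping that makes the extra leaf-randomness transparent, which is precisely why leakers are made to use, conditionally on a given label, the \emph{same} leaf distribution (proportional to $\mu$) as non-leakers.
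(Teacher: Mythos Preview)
Your argument is correct and at heart the same as the paper's: re-encode each message of $\pi$ as a string of bits whose conditional law under $L_j=0$ is balanced, and observe that the extra randomness (which leaf among those with a given label) is chosen with the \emph{same} conditional law by leakers and non-leakers, so posteriors $\Pr(L_i=1\mid T,X)$ and the decoding event are exactly preserved. The paper carries this out in two stages---first a naive binarisation (``send one bit at a time''), then for each resulting bit with non-leaker probability $p<\tfrac13$ the speaker commits to the bit, embeds it as a uniform real in $(0,p)$ or $(p,1)$, and emits the binary digits of that real until the conditional probability of ``$a=0$'' enters $[\tfrac13,\tfrac23]$, whereupon $a$ is revealed. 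For a two-letter alphabet this is exactly your split/bipartition recursion (each ``$0$'' halves the interval and doubles the conditional probability of the rare letter); you instead run the recursion directly on the full alphabet in one pass, and you also spell out the posterior-preservation bookkeeping that the paper leaves implicit. Your termination argument is fine: along a hypothetical infinite branch the support is eventually constant, so every step is a splitting step on the same element $a^\ast$, and after $m$ such steps its relative mass is $2^m(r-1)+1$, which drops below $\tfrac23$ in finitely many steps, forcing a bipartition and hence a strict support decrease---a contradiction. One small caveat: your reduction to $\supp(p_x)\subseteq\supp(p_?)$ by redirecting out-of-support leaker messages to a fixed $a_0$ is not quite airtight as phrased, since after the redirection the posteriors must be computed with the modified $p_x$ and the $a_0$-branch could in principle leave the good set; this is a minor technicality (the paper's proof does not address this edge case at all) and does not affect the main construction.
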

  \begin{proof}

  To restrict to $\{0,1\}$ we simply send one bit at a time, so now we only have to ensure that the probability of a message sent by a non-leaker being $0$ is always in $[\frac{1}{3},\frac{2}{3}]$. If the next message is $0$ with probability $p<1/3$, given that the sender is not leaking we modify the protocol (the case where $p>2/3$ is similar). First, the player $\plr_i$ sending the message decides if she would have send $0$ or $1$ in the old protocol $\pi$. Call this message $a$. If $a=0$ she chooses a number in the interval $(0,p)$ uniformly at random, if $a=1$ she chooses a number in $(p,1)$ uniformly at random. She then sends the bits of the number one bit at a time until
  \begin{itemize}
\item She says $1$, or
\item Given transcript until now, there is probability $\geq\frac{1}{3}$ that $a=0$
\end{itemize}
In the first case we then know that $a=1$, and we can go to the next round of $\pi$. Each time $\plr_i$ says $0$, she doubles the probability that $a=0$, so if we are in the second case (and was not before the last message), $\Pr(a=0|T)<\frac{2}{3}$. In this case she will simply reveal $a$ in the next message. 

 Instead of choosing a real number uniformly from $(0,p)$ or $(p,1)$, which would require access to randomness with infinite entropy, $\plr_i$ can just in each step compute the probabilities of sending $0$ or $1$ given that she had chosen such a number. Thus if for every probability $p'$ every player has access to a coin that ends head up with probability $p'$, they only need a finite number of coin flips to follow the above protocol.  
  \end{proof}
   
   The following lemma almost says that the safe $c$-capacity for $(\mathfrak{L},C)$ is the same as the risky $c$-capacity for $(\mathfrak{L},C)$.
   
  \begin{lemm}\label{lemm:skreweq}\label{pretend}
  Let $c'>c$. The safe $c'$-capacity for $(\mathfrak{L},C)$ is at least the same as the risky $c$-capacity for $(\mathfrak{L},C)$.
  \end{lemm}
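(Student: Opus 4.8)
The plan is to start from a riskily $c$-achievable rate $R$, fix $\epsilon>0$ and $h_0$, and manufacture a \emph{safe} $(n,h,L,c',\epsilon)$-protocol of the same rate $R$; since $c'>c$ we have a strictly larger suspicion budget, and the whole point is to spend that slack on never — rather than only usually — staying below the threshold.

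First I would \emph{normalise}. Apply Lemma \ref{lemm:nosurprices} to a risky $(n,h,L,c,\epsilon_0)$-protocol $\pi$ with $\epsilon_0$ tiny and $h\geq h_0$: every message is a single bit, and conditioned on the prefix a non-leaker sends each value with probability in $[\tfrac13,\tfrac23]$. Then I would \emph{dilute} each message: replace a bit a player would send by a long block of ``$(\tfrac12+\eta)$-noisy repetitions'' of it (the player samples the bit as before, then transmits it through $N$ independent bits, each equal to it with probability $\tfrac12+\eta$). Because the cost $C(L)$ of Definition \ref{defi:Lstruc} does not depend on how many rounds the protocol uses, this lengthening is free. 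Choosing $N$ large and $\eta$ small, Frank still recovers every original bit (hence $X$) with only slightly larger error, and the final posteriors $\Pr(L_i=1\mid T=t,X=x)$ change by as little as we like; so for any $c_0\in(c,c')$ and any $\epsilon_0'>0$ we get a risky $(n,h,L,c_0,\epsilon_0')$-protocol $\pi''$ in which \emph{every} transcript bit is, conditioned on the prefix and on the sender's $(L_i,X)$, within $\eta$ of a fair coin. The key consequence is a uniform bounded-increment estimate: one more bit multiplies the odds $\Pr(L_i=1\mid T^k,X=x):\Pr(L_i=0\mid T^k,X=x)$ by a factor in $[\tfrac{1-2\eta}{1+2\eta},\tfrac{1+2\eta}{1-2\eta}]$, simultaneously for every player $i$ and every $x$ in the support (for a message sent by $i$ this factor is just $p_x(b)/p_?(b)$; for a message sent by someone else it follows from near-balancedness of that player's message whatever the conditioning, which also covers correlated structures such as $\Fixed$).

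Now the safe protocol $\pi'$: run $\pi''$, but \emph{halt} as soon as, for some player $i$, the public quantity $\max_{x}\Pr(L_i=1\mid T^k=t^k,X=x)$ exceeds a threshold $c''$ with $c_0<c''<c'$; fix $\eta$ small enough that, by the bounded-increment estimate, one further bit from a state with all these maxima $\le c''$ still leaves every $\Pr(L_i=1\mid T=t,X=x)\le c'$. Frank keeps the decoder of $\pi''$, declaring failure if the halt fires. Safety is then immediate: at the (possibly early) termination $t$, every $i$ and every $x$ with $\Pr(T=t,X=x)>0$ has $\Pr(L_i=1\mid T=t,X=x)\le\max_x\Pr(L_i=1\mid T=t,X=x)\le c'$. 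For the error bound, Frank fails only if he would have failed in $\pi''$ or the halt fires, and conditioned on the true value $X=x^\ast$ the halt fires only if some $\Pr(L_i=1\mid T^k,X=x^\ast)$ — a bounded martingale in $k$ started at $\Pr(L_i=1)$ — ever reaches $c''$, or some $\Pr(L_i=1\mid T^k,X=x)$ with $x\neq x^\ast$ does. The first probability is bounded by a maximal inequality for that martingale, which is small once the individual prior leak-probability $\Pr(L_i=1)$ is small — always arrangeable for $\Fixed$ by taking the hiding crowd much larger than the number of leakers, and more generally whenever the $\mathfrak{L}$-structure admits small individual leak-probabilities at large cost. The second is $O(\epsilon_0')$: with probability $\ge 1-\epsilon_0'$ the transcript generated from $x^\ast$ is unambiguous ($D(t)=x^\ast$), and on such transcripts a message stream consistent with leaking $x^\ast$ is so unlikely under a different $x$ that $\Pr(L_i=1\mid t,x)$ never approaches $c''$ for $x\neq x^\ast$. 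Assembling these estimates and choosing $\eta,N,\epsilon_0'$ and the cost appropriately, $\pi'$ is a safe $(n,h,L,c',\epsilon)$-protocol of rate $R$, so $R$ is safely $c'$-achievable.

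The step I expect to be the main obstacle is making the halt rule actually work: one must check that the noisy-repetition re-encoding genuinely preserves the \emph{risky} guarantee — Frank's decodability \emph{and} $\Pr(L_i=1\mid\cdot)\le c_0$ with high probability, not merely the mutual information — and then prove the quantitative bound on the probability that the halt fires. Getting these two to cooperate is exactly what forces $c'>c$ rather than $c'=c$: the slack $c'-c$ is consumed partly by the near-balancing tolerance $\eta$ and partly by the cushion $c''-c_0$ that guarantees a halted transcript is still safe.
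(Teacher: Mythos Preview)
Your overall strategy --- normalise via Lemma~\ref{lemm:nosurprices}, then cut the protocol off before any posterior can exceed $c'$ --- is the paper's strategy. The gap is in how you trigger the cutoff and how you bound its probability.

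The paper does not dilute. After Lemma~\ref{lemm:nosurprices} it has the leakers \emph{pretend ignorance} (switch to the non-leaker distribution) at the first moment when, for the \emph{true} value $x$ of $X$, the next bit could push some $\Pr(L_i=1\mid T^{k+1},X=x)$ above $c'$. Only leakers can evaluate this trigger, but only leakers need to change behaviour, so the protocol is well-defined. Safety for each fixed $x$ follows because once pretending begins conditional on $X=x$ the remaining bits are independent of $L$, so the posterior freezes below $c'$. The firing probability is bounded \emph{from the risky guarantee itself}: the process $k\mapsto\Pr(L_i=1\mid T^k,X=x)$ is a $[0,1]$-valued martingale under $\Pr(\,\cdot\mid X=x)$; if it ever exceeds $c'$ then by Markov on $1-\Pr(L_i=1\mid T,X=x)$ it ends above $c$ with probability at least $\tfrac{c'-c}{1-c}$, and by the risky guarantee that happens with probability $\le\epsilon$. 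Hence $\Pr(\text{ever}>c'\mid X=x)\le\tfrac{(1-c)\epsilon}{c'-c}$, and one further factor $\tfrac{3}{1-c'}$ (from the $[\tfrac13,\tfrac23]$ balance) links ``about to cross'' to ``actually crosses''. No smallness of $\Pr(L_i=1)$ is used, and no case split on $x$ versus $x^\ast$ arises.

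Your trigger is $\max_x\Pr(L_i=1\mid T^k,X=x)>c''$, so conditional on $X=x^\ast$ you must control $\Pr(L_i=1\mid T^k,X=x)$ for \emph{every} $x$ under the measure $\Pr(\,\cdot\mid X=x^\ast)$. Your case (a) bounds the $x=x^\ast$ term by a maximal inequality from the prior $\Pr(L_i=1)$; this needs $\Pr(L_i=1)\ll c''$, which is simply false for a general $\mathfrak{L}$-structure --- already $\Indep_b$ with $b$ close to $c$ breaks it --- and the lemma is stated for arbitrary $(\mathfrak{L},C)$. Your case (b) is unjustified: the risky guarantee controls $\Pr(L_i=1\mid T,X=x)$ only under $\Pr(\,\cdot\mid X=x)$, and for $x\ne x^\ast$ the process $\Pr(L_i=1\mid T^k,X=x)$ is not even a martingale under $\Pr(\,\cdot\mid X=x^\ast)$. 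The event $D(t)=x^\ast$ says nothing about $\Pr(L_i=1\mid t,X=x)$ for other $x$, so your halt can fire with probability bounded away from zero however small $\epsilon_0'$ is.

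The fix is precisely the paper's: make the stopping rule depend on the true $x$, and bound its firing probability via the martingale/Markov argument anchored in the risky guarantee rather than in the prior. With that change the noisy-repetition dilution is unnecessary; the $[\tfrac13,\tfrac23]$ bound from Lemma~\ref{lemm:nosurprices} already suffices.
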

  \begin{proof}
  To show this, it is enough to show that if $R$ is a riskily $c$-achievable rate for $(\mathfrak{L},C)$, then $R$ is safely $c'$-achievable for $(\mathfrak{L},C)$. Let $R$ be a riskily $c$-achievable rate for $(\mathfrak{L},C)$, and let $\epsilon'>0$ and $h_0'$ be given. We want to show that there exists a safe $(n',h',L,c',\epsilon')$-protocol with $h'\geq h_0'$, $L\in \mathfrak{L}$ and $h'\geq C(L)R$. 
  
  As $R$ is riskily $c$-achievable for $(\mathfrak{L},C)$, there exists a risky $(n,h,L,c,\epsilon)$-protocol for any $\epsilon>0$ and some $L\in\mathfrak{L}$, $h\geq h_0'$, $h\geq C(L)R$ and $n$. Let $\pi$ be such a protocol, where $\epsilon$ is a small number to be specified later. 
  
  We want to modify $\pi$ to make it a safe protocol $\pi'$. First, by Lemma \ref{lemm:nosurprices} we can assume that all messages send in $\pi$ are in $\{0,1\}$ and given that the sender is not leaking, it has probability at least $1/3$ of being $0$ and at least probability $1/3$ of being $1$. 
  
To ensure that for no transcript $t$ and player $\plr_i$ we have $\Pr(L_i=1|X=x,T=t)>c'$, we modify the protocol, such that everyone starts to pretends ignorance if the next message could result in $\Pr(L_i=1|X=x,T^{k+1}=t^{k+1})>c'$. Formally, we define a protocol $\pi'$ that starts of as $\pi$ but if at some point the transcript is $t^k$ and for some $i$ and $b\in\{0,1\}$ we have $\Pr(L_i=1|T^{k+1}=t^k\circ b,X=x)>c'$ all the players \emph{pretends ignorance}, that is for the rest of the protocol they send messages as if they did not have the information and were following $\pi$. Notice that only the players who knows the information $x$ can decide if they should pretend ignorance, but this is not a problem as the players who do not have the information, is already sending messages as if they did not have the information.
  
 First we want to show that $\pi'$ is $c'$-safe. As long as they do not pretend ignorance we know that $\Pr(L_i=1|T^k=t^k,X=x)\leq c'$ for the partial transcript $t^k$ and all $i$. If at some point they starts to pretend ignorance, we have $\Pr(L_i=1|T^k=t^k,X=x)\leq c'$ before they start, and all messages will be chosen as if no one had the information. Eve, who knows $X$, can compute $\Pr(L_i=1|T^{k+1}=t^k\circ b,X=x)>c'$ for each $i$ and $b$, so she knows if everyone is pretending ignorance. Thus, Eve does not learn anything about $L$ from listening to the rest of the communication, so we will still have $\Pr(L_i=1|T=t,X=x)\leq c'$ when $\pi'$ terminates.
  
  Fix $x\in \mathcal{X}$. We want to compute the probability that they pretends ignorance given $X=x$. Let $E_{par,>c'}$ denote the event that for transcript $T$ from the execution of $\pi$, we can find some $k$ and some $i$ such that we have $\Pr(L_i=1|T^k=t^k,X=x)>c'$. That is, at some point in the execution of $\pi$, an observer would say that $\plr_i$ was leaking with probability $>c'$.  Let $E_{tot,>c}$ be that event that for the total transcript there is some $i$ such that $\Pr(L_i=1|T=t,X=x)>c$. For each transcript $t$ where $\Pr(L_i=1|T^k=t^k,X=x)>c'$ for some $k,i$, we consider that smallest $k$ such that $\Pr(L_i=1|T^k=t^k,X=x)>c'$ happens for some $i$. For this fixed $t^k$ let $T^{-k}$ denote the random variable that is distributed as the rest of the transcript given that the transcript starts with $t^k$ and $X=x$. Let $S_{t^k}$ denote the random variable 
  \[S_{t^k}=\Pr(L_i=1|T=t^k\circ T^{-k},X=x).\]
  That is, $S_{t^k}$ is a function of $T^{-k}$. We see that $S_{t^k}$ takes values in $[0,1]$ and $\E S_{t^k} =\Pr(L_i=1|T^k=t^k,X=x)>c'$ so by Markov's inequality on $1-S_{t^k}$ we get 
   \[\Pr(1-S_{t^k}\geq 1-c-\epsilon_1|X=x)\leq \frac{\E (1-S_{t^k})}{1-c-\epsilon_1}< \frac{1-c'}{1-c-\epsilon_1}\]
  for all $\epsilon_1>0$. Thus, given that $E_{par,>c'}$ happens, $E_{tot,>c}$ will happen with probability $\geq \frac{c'-c}{1-c}>0$. So $\frac{c'-c}{1-c} \Pr( E_{par,>c'}|X=x)\leq \Pr( E_{tot,>c}|X=x)\leq  \epsilon$, where the last inequality follows from the assumption about $\pi$. 
  
  Let $E_{ig}$ be the event that in the evaluation of $\pi'$ the players pretends ignorance. The players only pretends ignorance if they are one message away from making $E_{par,>c'}$ happen. We assumed that in $\pi$ each possible message get sent with probability at least $1/3$ if the sender is not leaking. As there is probability at least $1-c'$ that he is not leaking, each possible message gets sent with probability at least $\geq \frac{1-c'}{3}$
  %
   so $ \frac{1-c'}{3}\Pr(E_{ig}|X=x)\leq \Pr(E_{par,>c'}|X=x)$. Thus 
   \[\Pr(E_{ig}|X=x)\leq  \frac{3}{1-c'}\Pr(E_{par,>c'}|X=x)\leq 3\epsilon\frac{(1-c)}{(c'-c)(1-c')}.\]
  
 Let $T'$ denote the random variable you get from running $\pi'$ and $T$ the random variable you get from running $\pi$, with a joint distribution in such a way that $(X,L,T)=(X,L,T')$ unless the players pretends ignorance.
 We need to show that there is a decoding function $D'$ from the set of complete transcripts to possible values of $X$ such that for each $x$, $\Pr(D'(T')=x|X=x)\geq 1-\epsilon'$. 
 From the assumptions about $\pi$ we know that there is a function $D$ from the set of possible transcripts to the support of $X$ such that for each $x$, $\Pr(D(T)=x|X=x)\geq 1-\epsilon$. 
 We know that in $\pi'$ and for fixed $x$, the players only pretends ignorance with probability $\leq \frac{3\epsilon(1-c)}{(c'-c)(1-c')}$, so by setting $D'=D$ we get $\Pr(D'(T')=x|X=x)\geq 1-\epsilon-\frac{3\epsilon(1-c)}{(c'-c)(1-c')}$. 
 For sufficiently small $\epsilon$ (depending only on $c$ and $c'$) this is less than $\epsilon'$ and we are done. 
  \end{proof}
If we add a continuity assumption, we get that the safe and the risky $c$ capacity are the same.
\begin{coro}\label{coro:sameifcont}
Let $(\mathfrak{L},C)$ be a $\mathcal{L}$-structure. If the safe $c$-capacity for $(\mathfrak{L},C)$ as a function of $c$ is right-continuous at $c_0$, or if the risky $c$-capacity for $(\mathfrak{L},C)$ as a function of $c$ is left-continuous at $c_0$ then the safe $c_0$-capacity for $(\mathfrak{L},C)$ and the risky $c_0$-capacity for $(\mathfrak{L},C)$ are the same.
\end{coro}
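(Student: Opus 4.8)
The plan is to combine the one-directional inequalities already established with the monotonicity of the two capacities and the hypothesised continuity. Write $f_s(c)$ for the safe $c$-capacity and $f_r(c)$ for the risky $c$-capacity of $(\mathfrak{L},C)$. By Proposition \ref{prop:capismono} both $f_s$ and $f_r$ are non-decreasing, so each has one-sided limits at every point, and $\lim_{c'\downarrow c_0}f_s(c')=\inf_{c'>c_0}f_s(c')$ while $\lim_{c\uparrow c_0}f_r(c)=\sup_{c<c_0}f_r(c)$. Proposition \ref{prop:riskygeqsafe} already gives $f_s(c_0)\le f_r(c_0)$, so in both cases it suffices to prove the reverse inequality $f_s(c_0)\ge f_r(c_0)$.

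First I would treat the case where $f_s$ is right-continuous at $c_0$. By Lemma \ref{lemm:skreweq}, applied to each pair $(c_0,c')$ with $c'>c_0$, we have $f_s(c')\ge f_r(c_0)$ for all $c'>c_0$. Taking the infimum over $c'>c_0$ and using that $f_s$ is non-decreasing together with right-continuity, $f_s(c_0)=\lim_{c'\downarrow c_0}f_s(c')=\inf_{c'>c_0}f_s(c')\ge f_r(c_0)$. Combined with Proposition \ref{prop:riskygeqsafe} this yields $f_s(c_0)=f_r(c_0)$.

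Next, the case where $f_r$ is left-continuous at $c_0$. Here I would invoke Lemma \ref{lemm:skreweq} in the other direction: for each $c<c_0$, apply it to the pair $(c,c_0)$ (so $c_0$ plays the role of the larger parameter), obtaining $f_s(c_0)\ge f_r(c)$ for every $c<c_0$. Taking the supremum over $c<c_0$ and using that $f_r$ is non-decreasing and left-continuous at $c_0$, $f_s(c_0)\ge \sup_{c<c_0}f_r(c)=\lim_{c\uparrow c_0}f_r(c)=f_r(c_0)$, and again equality follows with Proposition \ref{prop:riskygeqsafe}.

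I do not anticipate a genuine obstacle: the whole argument is a short limiting manipulation of the inequalities in Proposition \ref{prop:capismono}, Proposition \ref{prop:riskygeqsafe}, and Lemma \ref{lemm:skreweq}. The only points that need any care are that the one-sided limits exist at all (automatic for a monotone function, with values taken in $[0,\infty]$ if one does not wish to argue boundedness separately) and that Lemma \ref{lemm:skreweq} is applied with the correct pair of parameters in each of the two cases — in the first case with $c_0$ fixed as the smaller argument, in the second with $c_0$ fixed as the larger argument.
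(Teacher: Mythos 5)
Your proposal is correct and follows essentially the same route as the paper: apply Lemma \ref{lemm:skreweq} with $c_0$ as the smaller parameter and pass to the right-hand limit of the safe capacity in the first case, and with $c_0$ as the larger parameter and pass to the left-hand limit of the risky capacity in the second, then combine with Proposition \ref{prop:riskygeqsafe}. Your write-up is in fact slightly more explicit than the paper's (which only sketches the second case as ``similar''), but there is no substantive difference.
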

\begin{proof}
Assume that the safe $c$-capacity for $(\mathfrak{L},C)$ as a function of $c$ a right-continuous at $c_0$. Then Lemma \ref{lemm:skreweq} shows that the risky $c$-capacity for $(\mathfrak{L},C)$ is at most the safe $c'$-capacity for $(\mathfrak{L},C)$ for all $c'>c$. By continuity assumption, this gives us that the risky $c$-capacity for $(\mathfrak{L},C)$ is at most the safe $c$-capacity for $(\mathfrak{L},C)$. Proposition \ref{prop:riskygeqsafe} shows the opposite inequality. The proof of the second part of the corollary is similar.
\end{proof}

\begin{coro}\label{coro:sameexceptcount}
Let $(\mathfrak{L},C)$ be a $\mathcal{L}$-structure. The safe $c$-capacity for $(\mathfrak{L},C)$ and the risky $c$-capacity for $(\mathfrak{L},C)$ are the same for all but at most countably many values $c\in(0,1)$. 
\end{coro}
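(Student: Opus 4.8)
The plan is to combine monotonicity (Proposition \ref{prop:capismono}) with the "almost-equality" result (Lemma \ref{lemm:skreweq}) and a standard real-analysis fact: a monotone function on an interval has at most countably many discontinuities. First I would recall that both the safe $c$-capacity and the risky $c$-capacity for $(\mathfrak{L},C)$ are non-decreasing functions of $c\in(0,1)$, by Proposition \ref{prop:capismono}. A monotone real-valued function on an open interval is continuous except at countably many points; in fact at every point it has both one-sided limits, and the set of points where a one-sided limit fails to equal the function value is countable. Apply this to, say, the risky $c$-capacity: it is left-continuous at all but countably many $c_0\in(0,1)$.

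Next I would invoke Corollary \ref{coro:sameifcont}: at any $c_0$ where the risky $c$-capacity is left-continuous, the safe $c_0$-capacity and the risky $c_0$-capacity coincide. (Corollary \ref{coro:sameifcont} itself rests on Lemma \ref{lemm:skreweq}, which gives that the risky $c$-capacity is at most the safe $c'$-capacity for every $c'>c$, so left-continuity of the risky capacity forces the risky $c_0$-capacity $\le$ the safe $c_0$-capacity, while Proposition \ref{prop:riskygeqsafe} gives the reverse.) Since the exceptional set — the discontinuity points of a monotone function — is countable, the two capacities agree for all but at most countably many $c\in(0,1)$, which is exactly the claim.

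The only point requiring any care is the real-analysis input: one must be sure that the capacities, as functions of $c$, are genuinely monotone real-valued functions on $(0,1)$ (they are, by Proposition \ref{prop:capismono}, and they are bounded, e.g.\ by $H(X)$-type arguments, though even allowing the value $+\infty$ a monotone function still has only countably many jump discontinuities), and that "monotone $\Rightarrow$ at most countably many discontinuities" is being applied correctly — the standard argument associates to each jump a rational number in the gap between the left and right limits, and distinct jumps get distinct rationals. I do not expect a genuine obstacle here; the corollary is essentially a packaging of Corollary \ref{coro:sameifcont} with this classical fact, and the whole proof is only a few lines. One can phrase it using either left-continuity of the risky capacity or right-continuity of the safe capacity — both work via the two halves of Corollary \ref{coro:sameifcont} — and the union of the two (still countable) exceptional sets is harmless.
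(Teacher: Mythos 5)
Your proposal is correct and matches the paper's own argument: the paper likewise applies Proposition \ref{prop:capismono} to get monotonicity, uses the classical fact that a monotone function is continuous off a countable set, and then invokes Corollary \ref{coro:sameifcont} (the paper phrases it via the safe capacity's continuity rather than the risky capacity's left-continuity, but as you note both halves of that corollary work). No gaps.
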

\begin{proof}
By Proposition \ref{prop:capismono}, the safe $c$-capacity for $(\mathfrak{L},C)$ is a monotone function, so it is continuous in all but countably many points. Now \ref{coro:sameifcont} implies that it is the same as the risky $c$-capacity for $(\mathfrak{L},C)$ in all but countably many points.
\end{proof}

As promised, we can now show that for $\Indep_b$ the safe and the risky $c$-capacities are the same.
\begin{coro}
The safe $c$-capacity for $\Indep_b$ and the risky $c$-capacity for $\Indep_b$ are the same for all $c\in (0,1)$. 
\end{coro}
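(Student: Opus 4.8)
The plan is to deduce this from the explicit formula for the safe capacity together with the abstract comparison result Corollary~\ref{coro:sameifcont}. By Corollary~\ref{coro:indepcapa}, the safe $c$-capacity for $\Indep_b$ equals $\frac{-b\log(1-c)+c\log(1-b)}{c}$. The first step is simply to observe that, regarded as a function of $c$ on the open interval $(0,1)$, this expression is continuous: in the numerator $-b\log(1-c)+c\log(1-b)$ the logarithm is only ever evaluated at $1-c\in(0,1)$ and at the constant $1-b$, so the numerator is continuous; the denominator $c$ is continuous and nonvanishing on $(0,1)$; hence the quotient is continuous there. In particular, the safe $c$-capacity for $\Indep_b$, as a function of $c$, is right-continuous at every point $c_0\in(0,1)$.

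The second step is to invoke Corollary~\ref{coro:sameifcont} with the $\mathfrak{L}$-structure $\Indep_b$ and the point $c_0$: since the safe $c$-capacity is right-continuous at $c_0$, the corollary states that the safe $c_0$-capacity and the risky $c_0$-capacity for $\Indep_b$ coincide. As $c_0\in(0,1)$ was arbitrary, this gives the claim for all $c\in(0,1)$.

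I expect no real obstacle here: all the substantive work has already been done, in Corollary~\ref{coro:indepcapa} (together with Proposition~\ref{prop:indepupper} and Theorem~\ref{theo:indeplower} behind it) for the value of the safe capacity, and in Lemma~\ref{lemm:skreweq}/Corollary~\ref{coro:sameifcont} for the safe-versus-risky comparison under a continuity hypothesis. The only point needing any care is the continuity of the capacity formula, and even this is immediate on the interior of $(0,1)$ since no division by zero or logarithm of zero occurs there. (If a self-contained argument were preferred, one could instead redo the Fano-type upper bound of Proposition~\ref{prop:indepupper} directly for risky rather than safe protocols, but this detour is unnecessary given the tools already assembled.)
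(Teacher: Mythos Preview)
Your proposal is correct and follows essentially the same route as the paper: use Corollary~\ref{coro:indepcapa} to identify the safe $c$-capacity for $\Indep_b$ with the explicit continuous function $\frac{-b\log(1-c)+c\log(1-b)}{c}$, then apply Corollary~\ref{coro:sameifcont}. The paper's proof is just the two-sentence version of what you wrote.
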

\begin{proof}
We know from Corollary \ref{coro:indepcapa} that the safe $c$-capacity for $\Indep_b$ is a continous function of $c$. Now Corollary \ref{coro:sameifcont} implies that it is the same as the risky $c$-capacity for $\Indep_b$. 
\end{proof}

\begin{coro}\label{coro:fixedcapa}
Let $c\in (0,1)$. The safe $c$-capacity for $\Fixed$ and the risky $c$-capacity for $\Fixed$ are both $\frac{-\log(1-c)}{c}-\log(e)$.
\end{coro}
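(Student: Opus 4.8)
The plan is to sandwich both capacities between the explicit value $f(c):=\frac{-\log(1-c)}{c}-\log(e)$ using the results already established in this section, and then to eliminate the remaining gap using the continuity of $f$ together with Lemma \ref{lemm:skreweq} and Corollary \ref{coro:sameifcont}.

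First I would record the three easy inequalities. Proposition \ref{prop:safeupper} gives that the safe $c$-capacity for $\Fixed$ is at most $f(c)$. Theorem \ref{theo:riskylower} gives that every rate $R<f(c)$ is riskily $c$-achievable for $\Fixed$, so the risky $c$-capacity is at least $f(c)$. Proposition \ref{prop:riskygeqsafe} gives that the safe $c$-capacity is at most the risky $c$-capacity. At this point we know that the safe $c$-capacity is at most $f(c)$, which is at most the risky $c$-capacity, but a priori the two capacities could still differ.

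Next, to push the safe capacity up to $f(c)$, I would apply Lemma \ref{lemm:skreweq} with $c$ in the role of $c'$: for every $c''<c$ the safe $c$-capacity for $\Fixed$ is at least the risky $c''$-capacity for $\Fixed$, which by Theorem \ref{theo:riskylower} is at least $f(c'')$. Since the left-hand side does not depend on $c''$, the safe $c$-capacity is at least $\sup_{c''\in(0,c)} f(c'')$. Because $f$ is continuous on $(0,1)$ (indeed it is an increasing continuous function of $c$), this supremum equals $f(c)$, so the safe $c$-capacity is at least $f(c)$ and hence exactly $f(c)$ for every $c\in(0,1)$. Finally, since the safe $c$-capacity now equals $f(c)$, which is continuous and in particular right-continuous in $c$, Corollary \ref{coro:sameifcont} applies and shows that the risky $c$-capacity coincides with the safe $c$-capacity. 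Therefore both equal $\frac{-\log(1-c)}{c}-\log(e)$.

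The only step that requires any care is the middle one: it would not suffice to invoke Lemma \ref{lemm:skreweq} at a single value $c''<c$, since that would only give the lower bound $f(c'')<f(c)$. One must take the supremum over all $c''<c$ and use the continuity of the explicit function $f$ to see that the resulting lower bound on the safe $c$-capacity is not strictly below $f(c)$. Everything else is bookkeeping with the propositions, theorem, and corollary already proved earlier in the section.
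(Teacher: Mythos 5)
Your proof is correct, but it threads the needle differently from the paper. The paper first deduces from Corollary \ref{coro:sameexceptcount} that the safe and risky $c$-capacities for $\Fixed$ agree for all but countably many $c$; on that co-countable set both are squeezed to $\frac{-\log(1-c)}{c}-\log(e)$ by Proposition \ref{prop:safeupper} and Theorem \ref{theo:riskylower}, and then the monotonicity of the two capacities (Proposition \ref{prop:capismono}) together with the continuity of the explicit formula extends the identity to every $c\in(0,1)$. You instead apply Lemma \ref{lemm:skreweq} directly at the target point: for every $c''<c$ the safe $c$-capacity dominates the risky $c''$-capacity, hence dominates $f(c'')$, and the supremum over $c''$ pins the safe capacity to $f(c)$ exactly, using the continuity of the explicit function $f$ rather than the (a priori unknown) regularity of the capacities; right-continuity of the now-identified safe capacity then hands you the risky capacity via Corollary \ref{coro:sameifcont}. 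Both arguments rest on the same engine, Lemma \ref{lemm:skreweq}, but yours avoids the detour through the countable exceptional set and the monotonicity of the capacity functions, which makes it somewhat more direct; the paper's route has the advantage of exercising the general-purpose Corollary \ref{coro:sameexceptcount}, which is stated for arbitrary $\mathfrak{L}$-structures. Your closing caveat is also well taken: invoking Lemma \ref{lemm:skreweq} at a single $c''<c$ would only yield the strict lower bound $f(c'')$, so the passage to the supremum and the appeal to continuity of $f$ are genuinely needed.
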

\begin{proof}
We know from Proposition \ref{prop:safeupper} that the safe $c$-capacity for $\Fixed$ is at most $\frac{-\log(1-c)}{c}-\log(e)$, we know from Theorem \ref{theo:riskylower} that the risky $c'$ fixed capacity is at least $\frac{-\log(1-c)}{c}-\log(e)$, and from Corollary \ref{coro:sameexceptcount} that they are the same except on at most countably many values. Thus they must both be $\frac{-\log(1-c)}{c}-\log(e)$ on all but countably many values. We know from \ref{prop:capismono} that both are monotone, so they must both be $\frac{-\log(1-c)}{c}-\log(e)$ without exceptions.
\end{proof}

\section{The original cryptogenography problem}\label{sec:org}

  In \cite{Brody14} the authors studied the following cryptogenographic problem. We flip a coin, and tell the result to one out of $n$ people. The $n-1$ other people do not know who got the information. Formally that means we take $L=(L_1,\dots,L_n)$ to be the random variable that is uniformly distributed over all $\{0,1\}$-vectors $(l_1,\dots, l_n)$ containing exactly one $1$ and take $X$ to be uniformly distributed over $\{0,1\}$ independently from $L$.  We let the group of $n$ people use any collaborating cryptogenography protocol, and afterwards we let Frank guess the result of the coin flip (his guess depends only on the transcript) and then let Eve guess who was leaking (her guess can depend on both transcript and Franks guess). Eve wins if she guess the leaker or if Frank does not guess the result of the coin flip. Otherwise Frank and the $n$ people communicating wins. We assume that both Frank and Eve make there guess to maximise the probability that they win, rather than maximise the probability of being correct.\footnote{For example if $\Pr(L_1=1,X=0|T=t)=0.97$, $\Pr(L_1=1,X=1|T=t)=0.01$ and $\Pr(L_2=1,X=1|T=t)=0.02$ then it is most likely that $X=0$. However Frank will guess that $X=1$, even though it is much more likely that $X=0$. If Frank instead guessed $X=0$ then Eve would guess that $\plr_1$ is leaking and then Eve would be certain to win. Once Frank have guesses $X=1$, Eve will guess that $\plr_2$ is leaking even though it is much more likely that $\plr_1$ is leaking. This is because, given that Frank is correct, it is more likely that $\plr_2$ is leaking, and Eve do not care if she guess correct when Frank is wrong.}
  
  In \cite{Brody14} it was shown that the probability that the group wins is below $3/4$ and for sufficiently high $n$ it is at least $0.5644$. In this section we will generalise the problem to a situation were more people are leaking and $X$ contains more information. It is obvious how to generalise $X$ to more information, we simply take $X$ to be uniformly distributed on $\{1,\dots, 2^{\lceil h\rceil}\}$. It is less obvious to generalise to more leakers. When more people are leaking, it would be unreasonable to require Eve to guess all the leakers. If this was the rule, one of the leaking players could just reveal himself as a leaker and say what $X$ is, while the rest of the leakers behave exactly as the non-leakers. Instead we let Eve guess at one person and if that person is leaking, she wins.  
  \begin{defi}
  For fixed values of $h$, number of leakers $l$ and number of communicating players $n>l$ and a collaborating cryptogenography protocol $\pi$, we let $\Succ(h,l,n,\pi)$ denote the probability that after the players communicate using protocol $\pi$, Frank will guess the correct value of $X$ but Eve's guess will not be a leaker, assuming the Frank and Eve each guess using the strategy that maximise their own chance of winning. We define
  \[\Succ(h,l,n)=\sup_{\pi}(\Succ(h,l,n,\pi)),\]
 where the supremum is over all collaborating cryptogenography protocols $\pi$. Finally we define
 \[\Succ(h,l)=\lim_{n\to\infty}\Succ(h,l,n).\]
  \end{defi}
In this section we will investigate the asymptotic behaviour of $\Succ(h,l)$ when at least one of $l$ and $h$ tends to infinity. First some propositions.

\begin{prop}
The probability that the communicating players wins the game does not change if Eve is told the value of $X$ before they starts to communicate.
\end{prop}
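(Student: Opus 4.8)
The plan is to prove the slightly sharper statement that for \emph{every} collaborating cryptogenography protocol $\pi$ and \emph{every} fixed decoding function $D$ for Frank, the team (Frank together with the $n$ players) wins with exactly the same probability whether or not Eve is handed the value of $X$. Since $\Succ(h,l,n)$ is obtained by having Frank choose $D$ to maximise the team's winning probability and then taking a supremum over $\pi$ --- and since Eve is an observer who never speaks, so it is irrelevant whether she learns $X$ at the start of the protocol or just before guessing --- this immediately yields the proposition.

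So fix $\pi$, fix $D$, and condition on a transcript $T=t$; write $g=D(t)$ for Frank's guess. The team wins on this transcript exactly when $X=g$ and Eve's guessed player is not a leaker. An uninformed Eve who guesses player $i$ wins with probability
\[\Pr(L_i=1\text{ or }X\neq g\mid T=t)=\Pr(X\neq g\mid T=t)+\Pr(X=g\mid T=t)\,\Pr(L_i=1\mid T=t,X=g),\]
and the first term does not depend on $i$, so an optimal uninformed Eve picks $i^*(t)\in\argmax_i\Pr(L_i=1\mid T=t,X=D(t))$ --- which is precisely the guess she would make if she were told $X=D(t)$. Now compare with an Eve who has been told $X$: on the event $X\neq D(t)$ Frank is already wrong, so the team loses no matter what Eve does; and on the event $X=D(t)$ she guesses the same $i^*(t)$ as above. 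Hence, conditioned on $T=t$, the event ``the team wins'' is literally the same event $\{X=D(t)\}\cap\{L_{i^*(t)}=0\}$ in both games, so it has the same probability; averaging over $t$ finishes the argument. (If several players attain the relevant $\argmax$, each choice gives the same value $\Pr(X=D(t)\mid T=t)\bigl(1-\max_i\Pr(L_i=1\mid T=t,X=D(t))\bigr)$, so tie-breaking is immaterial.)

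The only points needing a word of care are the game-theoretic bookkeeping and a degenerate case. For the bookkeeping: since each choice of $D$ induces the same team-winning probability in the two games, Frank faces the same optimisation problem in both, so the optimal $D$ and the optimal value coincide, and then so do the suprema over $\pi$; moreover an optimal response for Eve always exists, since she is only choosing among the finitely many players $\plr_1,\dots,\plr_n$, so every $\argmax$ above is over a finite set. The degenerate case $\Pr(X=D(t)\mid T=t)=0$ contributes $0$ to the team's winning probability in both games and can be ignored. I do not expect any real obstacle here: the substance of the proposition is simply that the only feature of $X$ an optimal Eve can exploit is the event $\{X=D(T)\}$, and ``assume Frank is right'' is already built into the reasoning of an optimal uninformed Eve.
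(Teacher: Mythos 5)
Your proposal is correct and is essentially the paper's own argument, made explicit: the paper's one-sentence proof says precisely that an optimal uninformed Eve conditions on Frank being correct (since she wins anyway when he is wrong), which is exactly what your decomposition of $\Pr(L_i=1\text{ or }X\neq g\mid T=t)$ into an $i$-independent term plus $\Pr(X=g\mid T=t)\Pr(L_i=1\mid T=t,X=g)$ formalises. The extra bookkeeping about $D$, tie-breaking, and the degenerate case is fine but not needed beyond what the paper states.
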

\begin{proof}
If Frank guesses the correct value of $X$, Eve was going to assume that that was the correct value anyway (as she wants to maximise the probability that she is correct given that Frank was correct), and if Frank guesses wrong, she would win anyway. 
\end{proof}
In the rest of this section, we will assume that Eve knows the value of $X$.

\begin{prop}\label{prop:decinh}
$\Succ(h,l,n)$ and $\Succ(h,l)$ are non-increasing in $h$. 
\end{prop}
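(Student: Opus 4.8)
The plan is to show that any protocol that is good for leaking $h+1$ bits is at least as good for leaking $h$ bits, by a simple padding/projection argument. Given a target amount of information $h$ with $h \leq h'$, I would take an optimal (or near-optimal) collaborating cryptogenography protocol $\pi'$ for the instance with secret $X'$ uniform on $\{1,\dots,2^{\lceil h'\rceil}\}$, Frank's guessing function $D'$, and Eve's guessing function, achieving success probability close to $\Succ(h',l,n)$. I then want to build a protocol $\pi$ for $X$ uniform on $\{1,\dots,2^{\lceil h\rceil}\}$ that does at least as well.

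The key idea: since $h \leq h'$, there is an injection $\iota:\{1,\dots,2^{\lceil h\rceil}\}\hookrightarrow\{1,\dots,2^{\lceil h'\rceil}\}$. In $\pi$, the players (both leakers and non-leakers) simply simulate $\pi'$ as if the secret were $\iota(X)$; concretely, a leaker who knows $X=x$ behaves as a leaker who knows $X'=\iota(x)$, and non-leakers behave exactly as in $\pi'$ (their messages do not depend on the secret anyway, so nothing changes for them). This is a valid collaborating cryptogenography protocol. Frank's new guessing function is $D(t) = \iota^{-1}(D'(t))$ when $D'(t)$ is in the image of $\iota$, and anything otherwise; whenever $D'$ would have correctly guessed $\iota(x)$, $D$ correctly guesses $x$. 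Eve's situation is unchanged: the joint distribution of $(L,T)$ in $\pi$ conditioned on $X=x$ equals that of $(L,T)$ in $\pi'$ conditioned on $X'=\iota(x)$, so for any Eve-strategy the probability she hits a leaker is the same, and an optimal Eve against $\pi$ does no better than an optimal Eve against the corresponding conditional instance of $\pi'$. Hence $\Succ(h,l,n,\pi) \geq \Succ(h',l,n,\pi')$, and taking suprema over $\pi'$ and then the limit in $n$ gives $\Succ(h,l,n)\geq \Succ(h',l,n)$ and $\Succ(h,l)\geq\Succ(h',l)$.

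One subtlety to handle carefully is the interaction with the $\argmax$/best-response structure of Frank's and Eve's guesses: Frank and Eve are required to play to maximize their own winning probability, not to be ``honest'', so I should argue at the level of \emph{achievable success probabilities} rather than fixed guessing rules. The clean way is: for the constructed $\pi$, the pair (simulated-$D'$-via-$\iota$, optimal Eve response) is \emph{some} pair of strategies, so $\Succ(h,l,n,\pi)$, which is defined via the optimal Frank/Eve play, is at least the value of this particular pair; and one checks that the value of this particular pair against $\pi$ equals $\Succ(h',l,n,\pi')$ computed with $\pi'$'s optimal strategies, because the games are isomorphic under $\iota$ on the secret coordinate. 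The main (and only mildly annoying) obstacle is bookkeeping the exact definition of $\Succ$ — that it bakes in optimal adversary play — so that the inequality chain goes in the right direction; there is no real mathematical difficulty, since the map $\iota$ sets up an isomorphism of the relevant probability spaces that only shrinks Frank's guessing problem and leaves Eve's unchanged.
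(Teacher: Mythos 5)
Your reduction goes in the same direction as the paper's (a protocol for the larger secret is reused to leak the smaller one), but the step where you claim the two games are ``isomorphic under $\iota$ on the secret coordinate'' has a real gap. The quantity $\Succ(h',l,n,\pi')$ is an average of the conditional winning probabilities $w(x'):=\Pr(\text{Frank right and Eve wrong}\mid X'=x')$ over $x'$ \emph{uniform on all of} $\{1,\dots,2^{\lceil h'\rceil}\}$, whereas the value of your simulated protocol is the average of $w(\iota(x))$ over $x$ uniform on the smaller set, i.e.\ the average of $w$ over the \emph{image} of $\iota$. For an arbitrary injection these two averages need not be ordered the right way: if $\pi'$ happens to perform badly exactly on the values in the image of $\iota$, your protocol inherits that bad performance and the claimed inequality $\Succ(h,l,n,\pi)\geq\Succ(h',l,n,\pi')$ fails. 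The conditional games given $X=x$ and given $X'=\iota(x)$ are indeed isomorphic; the full games are not, because the priors live on sets of different sizes. The fix is one line --- either choose $\iota$ to map onto the $2^{\lceil h\rceil}$ values of $x'$ with the largest $w(x')$ (the best injection's image-average is at least the global average), or randomize the embedding publicly. The paper does the latter: $\plr_1$ announces $h-h'$ fresh public uniform bits $Y$ and everyone then runs the large-secret protocol on $X'\circ Y$; averaging over $Y$ exactly reconstructs the uniform prior on the large set, so no selection argument is needed.

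A secondary point: your assertion that ``Eve's situation is unchanged'' is only immediate once you invoke the proposition, stated just before this one in the paper, that Eve may be assumed to know $X$. If Eve sees only $T$, her posterior on $L$ is the mixture $\sum_x\Pr(X=x\mid T=t)\Pr(L\mid X=x,T=t)$, which is weighted by the prior on the secret and therefore differs between the two games; conditioning on the secret value is what makes her decision problem literally identical in the two games, and is what justifies arguing value-by-value as you do.
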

\begin{proof}
Let $h>h'$ and let $\pi$ be a protocol for parameters $h,l,n$ and let the secret be denoted $X$. We construct a protocol $\pi'$ with parameters $h',l,n$ and secret denoted by $X'$. In the first round of $\pi'$, $\plr_1$ announce $h-h'$ independent and uniformly chosen bits $Y$, and from then on, everyone follows protocol $\pi$ for $X=X'\circ Y$. It is clear the $\Succ(h,l,n,\pi)\leq \Succ(h',l,n,\pi')$.
\end{proof}

\begin{prop}\label{prop:incinn}
$\Succ(h,l,n)$ is non-decreasing in $n$. 
\end{prop}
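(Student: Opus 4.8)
The plan is to prove monotonicity one step at a time, i.e. to show $\Succ(h,l,n+1)\ge\Succ(h,l,n)$, and to do this by a per-protocol comparison: for an arbitrary collaborating cryptogenography protocol $\pi$ for parameters $(h,l,n)$ I would exhibit a protocol $\pi'$ for parameters $(h,l,n+1)$ with $\Succ(h,l,n+1,\pi')\ge\Succ(h,l,n,\pi)$. Since $\Succ(h,l,n)=\sup_{\pi}\Succ(h,l,n,\pi)$, taking the supremum over $\pi$ then gives the statement. So I would fix $\pi$ together with Frank's decoding function $D$ and the optimal guesses of Frank and Eve that realise $\Succ(h,l,n,\pi)$, and construct $\pi'$ out of these data.

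The construction I would try first is the naive one: adjoin one new player $\plr_{n+1}$ who takes no real part — it is told to behave exactly as a non-leaker would and just sends one fixed symbol outside the support of $X$ — while $\plr_1,\dots,\plr_n$ run $\pi$ and Frank applies $D$ to the resulting transcript (ignoring the constant symbol). The transcript of $\pi'$ then carries the same information as the embedded run of $\pi$, so $\Pr(D(T')=X)$ is computed from that run, and since $\plr_{n+1}$ never says anything informative one expects its posterior leaking probability $\Pr(L_{n+1}=1\mid T'=t',X=x)$ to be no larger than that of the most suspicious of $\plr_1,\dots,\plr_n$, so that Eve's optimal guess always falls among the original players. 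Key sub-steps here: (i) check that the transcript of $\pi'$ determines, and is determined by, the $\pi$-transcript; (ii) condition on whether $\plr_{n+1}$ is a leaker and observe that conditioned on $L_{n+1}=0$ (probability $\tfrac{n+1-l}{n+1}$) the induced distribution of $(L_1,\dots,L_n)$ is exactly $\Fixed(l,n)$, so that the embedded game is \emph{literally} the $(h,l,n)$-game, and that Eve's additional option of pointing at the now-guaranteed-innocent $\plr_{n+1}$ can only help the players; hence that event already contributes at least $\Succ(h,l,n,\pi)$.

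The step I expect to be the real obstacle is the complementary event $L_{n+1}=1$ (probability $\tfrac{l}{n+1}$): there $\plr_{n+1}$ is a leaker but, in the naive construction, silent, and the embedded $n$ players carry only $l-1$ leakers distributed as $\Fixed(l-1,n)$, so a transcript that ``looks like'' $l-1$ leakers lets Eve identify $\plr_{n+1}$; naively this costs a factor $\tfrac{n+1-l}{n+1}$ and does not close the argument. To remove it I would instead have $\plr_{n+1}$ share an existing role rather than stay silent: $\plr_1$ first broadcasts a uniformly random $K\in\{1,\dots,n\}$, and from then on role $K$ of $\pi$ is operated jointly by $\plr_K$ and $\plr_{n+1}$, being played as a leaking role whenever at least one of them knows $X$ and as a non-leaking role otherwise. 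This keeps the number of leaking roles equal to $l$ unless two of the (at most $l$) leakers land on role $K$ — an event of probability at most $\tfrac{l(l-1)}{n(n+1)}$, and \emph{vacuous when $l=1$}, which already settles the original cryptogenography problem — and, conditioned on role $K$ leaking legitimately, one must check that nothing in role $K$'s messages or in the public record of who sent them reveals which of $\plr_K,\plr_{n+1}$ is the true leaker, so that Eve's posterior on each of $\plr_1,\dots,\plr_{n+1}$ is dominated by her posterior in $\pi$ on one of $\plr_1,\dots,\plr_n$. Making the shared-role mechanism genuinely symmetric in the two sharers, and disposing of the residual low-probability bad event when $l>1$ (either by a further refinement of the embedding or by iterating the construction), is the technical heart of the proof, and is where I would spend most of the effort.
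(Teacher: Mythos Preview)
Your second construction has a genuine gap that is not just a matter of effort. In the collaborating model each message is sent by one identified player, and a player with $L_j=0$ does not know $X$: her message must be drawn from $p_?$ and cannot depend on $x$. So the sentence ``role $K$ is operated jointly by $\plr_K$ and $\plr_{n+1}$, being played as a leaking role whenever at least one of them knows $X$'' has no implementation available in the model. If $L_{n+1}=1$ and $L_K=0$, the only way a $p_x$-distributed message can appear in role $K$'s slot is for $\plr_{n+1}$ to produce it; but the transcript records which of the two sharers spoke, and the symmetry you need between them is gone. Any public rule that combines one message from each sharer into a single role-$K$ message fails for the same reason: when exactly one sharer knows $x$, the other's contribution is independent of $x$, and one cannot make the combined output follow $p_x$ without the informed sharer's individual message already depending on $x$ in a way Eve can see. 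On top of this you leave the collision event (both sharers leakers) open. Your first, naive construction is correctly diagnosed as losing a factor $\tfrac{n+1-l}{n+1}$ and hence not yielding monotonicity.

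The paper avoids all of this by eliminating players rather than merging them. Given a protocol $\pi$ for $(h,l,n)$ and any $n'>n$, each \emph{non-leaker} among the $n'$ players privately draws a uniform label in $\{1,\dots,k\}$; players then announce their labels in increasing order and drop out as they do, while leakers never announce, stopping once exactly $n$ players remain. Conditioned on hitting exactly $n$ (no tie at the cutoff), the $l$ leakers are still uniformly distributed among the survivors, so one runs $\pi$ verbatim and wins with probability $\Succ(h,l,n,\pi)$; the tie probability tends to $0$ as $k\to\infty$. This is a one-paragraph reduction with no residual bad events to chase and no need to simulate a leaking role with a player who does not know $X$.
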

\begin{proof}
We use the elimination strategy used in \cite{Brody14}. Let $n'>n$ and let $\pi$ be a protocol for parameters $h,l,n$. We now construct a sequence of protocols $\pi'_k$ for parameters $h,l,n'$. In the protocol $\pi'_k$ each non-leaking player thinks of a uniformly chosen number in $\{1,\dots, k\}$. First everyone who thought of the number $1$ announce that and they are out, then everyone who thought of the number $2$ and so on, until only $n$ players a left. If two or more player thought of the same number, we migth end up with less then $n$ players left. In that case the leakers just announce themselves. If we are left with exactly $n$ players, we know that the $l$ leakers are still among them, and we have no further information about who they are. They then use protocol $\pi$, and win with probability $\Succ(h,l,n)$. As $k\to\infty$, the probability that two players thought of the same number tends to $0$, so $\Succ(h,l,n',\pi'_k)\to \Succ(h,l,n,\pi)$.
\end{proof}
    
  \begin{theo}\label{theo:lowerboundoriginal}
  For all $p\in (0,1)$,
  \[\liminf_{l\to\infty}\Succ\left(\left\lceil\left(\frac{-\log(p)}{1-p}-\log(e)\right)l\right\rceil,l\right)\geq p.\]
  \end{theo}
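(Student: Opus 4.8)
The plan is to use reliable‑leakage protocols directly as strategies in the game, exploiting that the target rate is exactly the risky $c$-capacity for $\Fixed$ at $c=1-p$. Write $g(c)=\frac{-\log(1-c)}{c}-\log e$, so that $g(1-p)=\frac{-\log p}{1-p}-\log e$ is the quantity in the statement; since $g$ is strictly increasing on $(0,1)$ (its power series is $\log e\,(c/2+c^2/3+\cdots)$), for any small $\delta>0$ the choice $c:=1-p+\delta$ gives $R:=g(1-p)<g(c)$. By Theorem~\ref{theo:riskylower} (and Corollary~\ref{coro:fixedcapa}), rate $R$ is riskily $c$-achievable for $\Fixed$, so for every $\epsilon>0$ and every $l_0$ there is a risky $(n,h,\Fixed(l,n),c,\epsilon)$-protocol with $l\geq l_0$ and $h\geq lR$.

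Given such a protocol with decoder $D$, I would build a strategy for the $l$-leaker game: Frank announces $D(t)$. Since we may assume Eve knows $X$ (shown earlier in this section) and she only cares about the event that Frank is right, she should name the index $i$ maximizing $\Pr(L_i=1\mid T=t,X=x)$. On a transcript that is ``good'' for $x$ — which, by definition of a risky protocol, has probability $\geq 1-\epsilon$ for every $x$ — one has $D(t)=x$ and $\max_i\Pr(L_i=1\mid T=t,X=x)\leq c$, so Eve names a leaker with probability $\leq c$ and the group wins with probability $\geq 1-c=p-\delta$. Averaging over $X$ gives $\Succ(h,l,n)\geq(1-\epsilon)(p-\delta)$; since $h\geq lR$ and $\Succ$ is non-increasing in the number of bits leaked and non-decreasing in $n$ (Propositions~\ref{prop:decinh} and~\ref{prop:incinn}), this yields $\Succ(\lceil lR\rceil,l)\geq(1-\epsilon)(p-\delta)$ — but so far only for the particular large values of $l$ the achievability construction delivers.

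The step I expect to be the main obstacle is upgrading this to \emph{all} large $l$, which the $\liminf$ demands. The construction in Theorem~\ref{theo:riskylower} produces protocols for $\Fixed(2nb)$ with $bn$ integral and $n$ ranging over an arithmetic progression, i.e.\ for a set of values of $l$ with bounded gaps (for $b=1/v$ this is every sufficiently large even number). Given an arbitrary large $l$, I would take such an $l'\leq l$ with $l-l'$ bounded and run the $l'$-leaker protocol among $n$ players with $l$ leakers, having $l-l'$ of them ``pretend ignorance'' throughout so that exactly $l'$ players behave as leakers; to arrange this without the leakers knowing one another, each leaker independently opts to pretend ignorance with probability $(l-l')/l$ and one checks the count concentrates. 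Frank's decoder is unaffected, and since the number of ``passive'' leakers stays bounded while the number of players whose message is consistent with $X$ (which the underlying protocol keeps of order $l'/c$) tends to infinity, every posterior $\Pr(L_i=1\mid T=t,X=x)$ rises by only $o(1)$ and stays below $1-p+2\delta$ for $l$ large. Rerunning the previous paragraph with $1-p+2\delta$ in place of $c$ gives $\Succ(\lceil lR\rceil,l)\geq(1-\epsilon)(p-2\delta)$ for all large $l$, and letting $\epsilon,\delta\to0$ finishes.

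A possible alternative that avoids the coordination/concentration bookkeeping above is to redo the proof of Theorem~\ref{theo:riskylower} from scratch for each $l$: split the $n$ players into possibly unequal groups whose sizes sum to $l/b$ (making every integer $l$ divisible by the denominator of $b$ reachable), and re-derive the now $l$-dependent but vanishing bounds on Frank's error probability and on the suspicion posteriors using Chebyshev's inequality exactly as in that proof. Either route reduces the theorem to routine estimates once the right reliable-leakage protocol is in hand.
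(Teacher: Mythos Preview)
Your core argument is exactly the paper's: pick $c$ slightly above $1-p$, invoke the $\Fixed$ capacity result to obtain a protocol in which Frank errs with small probability and every posterior $\Pr(L_i=1\mid T,X)$ stays below $c$, and then union-bound. The paper uses the \emph{safe} capacity (Corollary~\ref{coro:fixedcapa}) rather than the risky one, which makes the union bound marginally cleaner since the suspicion bound holds on every transcript, but this is cosmetic.

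You are in fact more careful than the paper on the point you flag as the ``main obstacle'': the definition of $c$-achievability for $\Fixed$ only promises protocols for \emph{some} $l\geq l_0$, so the capacity statement taken literally yields only a $\limsup$ bound, and the paper's proof simply writes ``for sufficiently high $l,n$'' without addressing this. Your extra paragraph closing the gap to all large $l$ is work the paper omits. One comment on that paragraph: the randomised opt-out is a bit awkward, since it does not produce exactly $l'$ active leakers and you then have to argue the protocol is robust to this. A cleaner patch is to let \emph{all} $l$ leakers run the $l'$-protocol unchanged (no opting out): Frank's effective channel only becomes more reliable, and Eve's posterior on a good transcript is $l/|K|$ rather than $l'/|K|$, which differs by $O(1/l)$ since $l-l'$ is bounded and $|K|$ is of order $l'/c'$; the slack between $c'$ and $1-p$ absorbs this. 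Either version is routine once the reliable-leakage protocol is in hand.
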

  \begin{proof}
  We know from Corollary \ref{coro:fixedcapa} that the safe $c$-capacity for $\Fixed$ is $\frac{-\log(1-c)}{c}-\log(e)$. If we let $\epsilon>0$, and use this Corollary for $c=1-p+\epsilon/2$ we get that for sufficiently high $l,n$ and $h=\left\lceil\left(\frac{-\log(p)}{1-p}-\log(e)\right)l\right\rceil$ there is a protocol $\pi$ that will make Frank's probability of guessing wrong at most $\epsilon/2$, and seen from Eve's prespective, no one is leaking with probability $>1-p+\epsilon/2$. By the union bound, the probability that Frank is wrong or Eve is correct\footnote{Here we assume that Frank guess on the most likely value of $X$, and we allow Eve to use any strategy. It could be that Frank could do better, but he is guarantied at least this probability of winning.} is at most $\epsilon/2+1-p+\epsilon/2$, thus the communicating players win with probability at least $p-\epsilon$. 
    \end{proof}
    In particular we have
    \begin{coro}
    Let $l\to\infty$ and $h=h(l)$ be a function of $l$ with $h=o(l)$. Then $\Succ(h,l)\to 1$.
    \end{coro}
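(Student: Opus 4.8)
The plan is to derive this as a quick consequence of Theorem~\ref{theo:lowerboundoriginal} together with the monotonicity of $\Succ$ in its first argument (Proposition~\ref{prop:decinh}). The key observation is that the coefficient $\kappa(p):=\frac{-\log(p)}{1-p}-\log(e)$ appearing in Theorem~\ref{theo:lowerboundoriginal} is strictly positive for every $p\in(0,1)$ — this follows from the elementary inequality $-\ln p>1-p$ on $(0,1)$, which gives $\frac{-\log p}{1-p}>\frac{1}{\ln 2}=\log e$ — and that $\kappa(p)\to 0$ as $p\uparrow 1$ (by L'H\^opital, $\frac{-\ln p}{1-p}\to 1$). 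So for $p$ close to $1$ the admissible entropy budget $\kappa(p)\,l$ is a small but positive multiple of $l$, which is exactly the regime a sublinear $h=o(l)$ eventually falls below.

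Concretely I would argue as follows. Fix $\epsilon\in(0,1)$ and put $p=1-\epsilon$, so $\kappa:=\kappa(p)>0$. Since $h=o(l)$, there is an $l_0$ with $h(l)\le \kappa l\le \lceil\kappa l\rceil$ for all $l\ge l_0$. Because $\Succ(h,l,n)$ is non-increasing in $h$ (Proposition~\ref{prop:decinh}), the same holds for $\Succ(h,l)=\lim_{n\to\infty}\Succ(h,l,n)$, so $\Succ(h(l),l)\ge \Succ(\lceil\kappa l\rceil,l)$ for $l\ge l_0$. Applying Theorem~\ref{theo:lowerboundoriginal} with this particular $p$ yields $\liminf_{l\to\infty}\Succ(\lceil\kappa l\rceil,l)\ge p=1-\epsilon$, hence $\liminf_{l\to\infty}\Succ(h(l),l)\ge 1-\epsilon$. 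Since $\Succ\le 1$ always and $\epsilon>0$ was arbitrary, $\Succ(h,l)\to 1$.

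I do not anticipate a genuine obstacle: the corollary is essentially Theorem~\ref{theo:lowerboundoriginal} repackaged, once one notices that every sublinear entropy budget lies below $\kappa(p)l$ for $p$ arbitrarily near $1$. The only points needing a line of care are verifying $\kappa(p)>0$ on $(0,1)$ and $\kappa(p)\to 0$ as $p\uparrow 1$, and the trivial bookkeeping that monotonicity in $h$ is preserved under the limit $n\to\infty$ that defines $\Succ(h,l)$.
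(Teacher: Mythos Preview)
Your proposal is correct and is exactly the approach the paper takes: the paper's proof consists of the single line ``Follows from Theorem~\ref{theo:lowerboundoriginal} and Proposition~\ref{prop:decinh},'' and you have supplied the routine details of that deduction (positivity of $\kappa(p)$, the $h=o(l)$ comparison, and passage to the $\liminf$).
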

    \begin{proof}
    Follows from Theorem \ref{theo:lowerboundoriginal} and Proposition \ref{prop:decinh} 
    \end{proof}

    \begin{defi}
    Let the distribution of $(X,L_1,\dots,L_n)$ be given and let $\pi$ be a protocol with transcript $T$ and $\pi'$ a protocol with transcript $\pi'$. For a transcript $t$ of $\pi$ let $\mu_t$ denote the distribution $(X,L_1,\dots ,L_n)|_{T=t}$, and similar for transcripts $t'$ of $\pi'$. We say that $\pi$ and $\pi'$ are \emph{equivalent for $(X,L_1,\dots, L_n)$} (or just \emph{equivalent} when it is clear what the distribution of $(X,L_1,\dots, L_n)$ is) if the distribution of $\mu_T$ is the same as the distribution of $\mu_{T'}$.
    \end{defi}
    
    That is, the probability that the posterior distribution of $(X,L_1,\dots,L_n)$ is $\mu$ has to be the same for both $\pi$ and $\pi'$. Notice that for two different distributions of $(X,L_1,\dots ,L_n)$ with the same support, $\pi$ and $\pi'$ are equivalent for one of them if and only if they are equivalent for the other distribution. Thus, when the support of $(X,L_1,\dots, L_n)$ is clear, we can simply say equivalent.

    \begin{prop}\label{prop:equivalent}
    If $\pi$ and $\pi'$ are equivalent collaborating cryptogenography protocols, then $\Succ(h,l,n,\pi)=\Succ(h,l,n,\pi')$. 
    \end{prop}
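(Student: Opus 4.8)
The plan is to show that the winning probability $\Succ(h,l,n,\pi)$ depends on $\pi$ only through the distribution of the posterior $\mu_T = (X,L_1,\dots,L_n)|_{T=t}$, which is by definition what ``equivalent'' captures. Recall that by the earlier reduction we may assume Eve knows $X$, and that both Frank and Eve play best responses to maximise their own winning probability. The key observation is that, given a transcript $t$, everything that matters to the continuation of the game is the posterior $\mu_t$: Frank's optimal guess of $X$ is a function of the $X$-marginal of $\mu_t$ (together with the tie-breaking rule against Eve, which itself only depends on $\mu_t$), and once Frank's guess is fixed, Eve's optimal guess of a leaker, and her resulting success probability, is again a function of $\mu_t$ alone (she conditions on $X$ equalling Frank's guess and picks the $i$ maximising $\Pr(L_i=1 \mid T=t, X=\text{Frank's guess})$, which is read off from $\mu_t$).

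Concretely, I would first define, for a probability measure $\mu$ on $\mathcal{X}\times\{0,1\}^n$, a quantity $W(\mu)$ equal to the value of the one-shot game played when the posterior is $\mu$: Frank picks $x^\ast(\mu)$ and Eve picks $i^\ast(\mu)$ according to the best-response analysis of the footnoted example, and $W(\mu)$ is the probability (under $\mu$) that $X = x^\ast(\mu)$ and $L_{i^\ast(\mu)} = 0$. One must be slightly careful to argue that this best-response pair is well-defined as a function of $\mu$ alone — in particular that Frank's tie-breaking (choosing among his best $X$-guesses the one that then minimises Eve's success) is determined by $\mu$; this is immediate since all the relevant conditional probabilities are computed from $\mu$. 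Then I claim
\[
\Succ(h,l,n,\pi) = \E_{t}\, W(\mu_t) = \sum_{\mu} \Pr(\mu_T = \mu)\, W(\mu),
\]
where the expectation is over $t$ distributed as the transcript $T$ of $\pi$. The first equality is the statement that, conditioned on the transcript being $t$, the players win exactly with probability $W(\mu_t)$ — this is just unwinding the definitions of the game after the communication has ended. Since the right-hand side depends on $\pi$ only through the distribution of $\mu_T$, and $\pi$ and $\pi'$ are equivalent precisely when the distributions of $\mu_T$ and $\mu_{T'}$ coincide, we conclude $\Succ(h,l,n,\pi) = \sum_{\mu}\Pr(\mu_{T'}=\mu)W(\mu) = \Succ(h,l,n,\pi')$.

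The main obstacle I anticipate is purely a matter of careful bookkeeping rather than a deep difficulty: namely pinning down that the best-response functions $x^\ast(\cdot)$ and $i^\ast(\cdot)$ (including Frank's lexicographic tie-break against Eve, as illustrated in the footnote) are genuinely functions of $\mu$ and that the nested optimisation ``Frank maximises, then Eve best-responds'' yields a value $W(\mu)$ that is insensitive to how the transcript was generated. Once that is set up, the proof is a one-line consequence of linearity of expectation. I would therefore spend the bulk of the write-up stating $W(\mu)$ cleanly and only then invoke the equivalence hypothesis.
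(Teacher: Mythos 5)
Your argument is correct and is precisely the justification the paper implicitly relies on: the paper states Proposition \ref{prop:equivalent} with no proof at all, treating it as immediate from the definition of equivalence, and your decomposition $\Succ(h,l,n,\pi)=\sum_\mu \Pr(\mu_T=\mu)\,W(\mu)$ is the right way to make that precise. Your worry about tie-breaking is moot: since exactly one of Eve and the communicating players wins the game, the endgame at posterior $\mu$ is zero-sum and its value $W(\mu)=\max_x \Pr_\mu(X=x)\bigl(1-\max_i\Pr_\mu(L_i=1\mid X=x)\bigr)$ is a well-defined function of $\mu$ independent of how either party breaks ties.
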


    The next lemma show that we can ensure that before any player crosses probability $c$ of having the bit, seen from Eve's perspective, that player lands on this probability. 

    \begin{lemm}\label{lemm:stopatc}
    Let $\pi$ be any collaborating cryptogenography protocol, let $(X,L_1,\dots,L_n)$ have any distribution and let $c\in (0,1)$. Then there exists an equivalent collaborating cryptogenography protocol $\pi'$ such that when we use it on $(X,L_1,\dots L_n)$ and let $T'$ denote its transcript, it satisfies:
    For all $x\in\mathcal{X}$, all $\plr_i$ and all non-empty partial transcripts $t'^k$, if 
    \[\Pr(L_i=1|T'^k=t'^k,X=x) > c.\]
    then there is a $k'<k$ such that 
    \[\Pr(L_i=1|T'^{k'}=t'^{k'},X=x)=c\]
    \end{lemm}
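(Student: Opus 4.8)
The starting point is the observation that, for every fixed $x\in\mathcal X$ and every player $\plr_i$, the process $k\mapsto \Pr(L_i=1\mid T^k=t^k,X=x)$ is a bounded martingale along the transcript (it is the conditional expectation of $\mathbf 1[L_i=1]$ given the growing information $T^k$, in the probability space conditioned on $X=x$). It therefore changes from step to step, and nothing as stated prevents it from jumping strictly \emph{over} the level $c$ in one step; the point of the lemma is to rearrange the protocol so that it cannot make its first crossing of $c$ without first landing exactly on $c$. The essential restriction is that $\pi'$ must be \emph{equivalent} to $\pi$, so we are only allowed to \emph{split} the messages of $\pi$ — reveal the same information in several smaller steps, using fresh private coins to refine a message — and never to inject genuinely new information.

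So the plan is to go through $\pi$ one message at a time and, for each message, interpose a short ``refinement tree'' revealing that message gradually. Consider a step where $\plr_j$ is about to send $M$ after partial transcript $t^k$; write $\nu(m)=\Pr(M=m\mid T^k=t^k,X=x)$ and $q^{(i,x)}_m=\Pr(L_i=1\mid T^k=t^k,M=m,X=x)$, call a pair $(i,x)$ \emph{endangered} if $q^{(i,x)}_m>c$ for some $m$, and maintain the invariant that at the start of each original message every pair that has not yet landed on $c$ has current posterior $\le c$. To handle one endangered pair $(i_1,x_1)$, replace $M$ by ``$Y$ then $M$'', where $\plr_j$ first sends a $\{0,1\}$-valued $Y$ with $\Pr(Y=1\mid M=m)=\lambda_m$, taking $\lambda_m=1$ whenever $q^{(i_1,x_1)}_m>c$ and choosing the other $\lambda_m\in[0,1]$ so that
\[
\Pr\bigl(L_{i_1}=1\mid T^k=t^k,\,Y=1,\,X=x_1\bigr)=c .
\]
This is feasible: the constraint is the single linear equation $\sum_m\nu(m)\lambda_m\bigl(q^{(i_1,x_1)}_m-c\bigr)=0$, the ``dangerous'' terms contribute a fixed positive amount, the ``safe'' terms ($q^{(i_1,x_1)}_m\le c$) can contribute anything in an interval whose left endpoint is $\sum_m\nu(m)\bigl(q^{(i_1,x_1)}_m-c\bigr)=\Pr(L_{i_1}=1\mid T^k=t^k,X=x_1)-c\le0$, and these bracket $0$. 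Since $Y$ is generated from $M$ alone, $(Y,M)$ carries exactly the information of $M$, so equivalence is preserved; on the branch $Y=1$ the posterior of $(i_1,x_1)$ is exactly $c$, and on the branch $Y=0$ only values $m$ with $q^{(i_1,x_1)}_m\le c$ remain, so the following message $M$ can no longer push $(i_1,x_1)$ above $c$.

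Iterating this over the at most $n|\mathcal X|$ pairs $(i,x)$ and the at most $length(\pi)$ messages of $\pi$ produces a protocol of bounded length, hence a legitimate $\pi'$, in which no pair's posterior can make its first crossing of $c$ without being equal to $c$ at an earlier step, which is exactly the assertion. The main obstacle — the point requiring care — is that the coarsening $Y$ aimed at one endangered pair can by itself push a \emph{different}, not-yet-landed pair strictly above $c$ (a partial reveal of $M$ may overshoot even when the full reveal does not). To fix this one must treat all pairs endangered by a given message \emph{simultaneously}: replace $M$ by one intermediate message $Y$ with one value $v_{(i,x)}$ per endangered pair plus a ``safe'' value, the mixing probabilities $\lambda_{m,v}$ chosen so that on branch $v_{(i,x)}$ that pair sits exactly on $c$ while every not-yet-landed pair stays $\le c$ on \emph{every} branch, and then recurse on each branch. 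Feasibility of this system again reduces to finitely many linear equalities and inequalities, consistent because of the martingale identity $\sum_m\nu(m)\bigl(q^{(i,x)}_m-c\bigr)\le0$ for not-yet-landed pairs; and the recursion terminates because on every branch the set of not-yet-landed endangered pairs strictly shrinks (each step lands a pair, or removes from the surviving branch the message values that endangered it, and restricting the remaining message cannot create new endangered pairs). Finally, the boundary case $\Pr(L_i=1\mid X=x)>c$ makes the statement vacuous in the settings where the lemma is applied, so we may assume all such priors are $\le c$; then the first inserted message supplies the required earlier index $k'$.
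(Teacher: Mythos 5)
Your construction is, at its core, the same as the paper's: interpose a randomized coarsening of the next message, calibrated so that the branch which "goes high" lands exactly on $c$, and then induct until no counterexample pair $(i,x)$ remains. The paper streamlines the single step by first binarizing the protocol (each message sent one bit at a time), so that a problematic step has exactly one dangerous branch and the coarsening is governed by a single scalar $q$ with $c=qc_{t^k\circ 1,x_0}+(1-q)c_{t^k\circ 0,x_0}$, realized by a two-bit encoding; your version works over a general alphabet and solves the same calibration as the linear equation $\sum_m\nu(m)\lambda_m\bigl(q^{(i_1,x_1)}_m-c\bigr)=0$, whose feasibility you argue correctly (the achievable range of the left-hand side is $[\Pr(L_{i_1}=1\mid T^k=t^k,X=x_1)-c,\;D]$ with $D>0$ the dangerous contribution, and this contains $0$ by the invariant). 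Your final remark about priors already exceeding $c$ is a fair observation: the lemma is only used in a regime where $\Pr(L_i=1\mid X=x)<c$, and the paper's proof likewise only repairs crossings from below.

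Where you diverge is in the "main obstacle," and here you have talked yourself into a complication that does not exist. A partial reveal of $M$ cannot overshoot for a pair that the full reveal does not endanger: the posterior of $(i_2,x_2)$ after any coarsening branch is a convex combination (with weights proportional to $\nu_{x_2}(m)\lambda_m$) of the full-reveal posteriors $q^{(i_2,x_2)}_m$, hence is at most $\max_m q^{(i_2,x_2)}_m$. So inserting $Y$ for one pair can only push above $c$ those pairs that were already endangered by $M$ itself, and those are exactly the pairs you were going to fix anyway; the same convexity shows that a pair which already satisfies the lemma's requirement continues to satisfy it after any such insertion. This is precisely what makes the paper's sequential induction over the finitely many pairs $(x,i)$ terminate, and it means your one-pair-at-a-time construction already suffices. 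The "simultaneous" multi-branch $Y$ you then sketch is therefore unnecessary, which is fortunate, because as written its feasibility ("finitely many linear equalities and inequalities, consistent because of the martingale identity") is asserted rather than proven and is the one genuinely soft spot in the write-up. Dropping it and replacing it with the convexity observation gives a complete and correct proof, essentially the paper's.
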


    \begin{proof}
    Let $\pi$, $(X,L_1,\dots, L_n)$ and $c$ be given, and assume that $(x,i)=(x_0, i_0)$ is a counterexample to the requirement from the lemma. We will then construct a protocol $\pi'$ such that $(x_0,i_0)$ is not a counterexample for $\pi'$, and any $(x,i)$ that satisfied the requirement for $\pi$ also satisfy it for $\pi'$. By induction, this is enough to prove the lemma. 
    
    We can assume that the messages in $\pi$ are send one bit at a time. We say a partial transcript $t^k$ is problematic if 
    \[\Pr(L_{i_0}=1|T^{k}=t^{k},X=x_0)<c\]
    but 
    \[\Pr(L_{i_0}=1|T^{k+1}=t^k\circ m,X=x_0)\geq c.\]
    for some bit value $m$. Without loss of generality, assume that $m=1$. Let $p=\Pr(T_{k+1}=1|T^k=t'^k)$. 
    
    We will use the $c$-notation from from Section \ref{sec:mutual}, so for example
    \[c_{t^k, x_0}=\Pr(L_i=1|T^{k}=t^{k},X=x_0).\] 
    Now    
    \begin{align*}
    c&>c_{t^{k}, x_0}=p c_{t^k\circ 1, x_0}+(1-p) c_{t^k\circ 0, x_0}
    \end{align*}
    so $c_{t^k\circ 0,x_0}<c$. Let $q\in (p,1)$ be the number such that 
    \[c=qc_{t^k\circ 1,x_0}+(1-q)c_{t^k\circ 0,x_0}.\]
    Now we modify $\pi$. First, the player $\plr_j$, who is going to send to $k+1$'th message in $\pi$, decides if she would have send $0$ or $1$ in $\pi$. If she would have send $1$ she sends the bits $11$. If she would have send $0$ she send $10$ with probability $\frac{p(1-q)}{q(1-p)}\in (0,1)$, and otherwise she sends $00$. In all cases she sends the bits one at a time. They then continue the protocol $\pi$ as if only the last of the two bits had been send. If we let $T'$ denote the transcript of the protocol with this modification, we get 
    \[c_{T'^{k+1}=t^k\circ 0, x_0}=c_{T^{k+1}=t^k\circ 0,x_0}<c \]
    and
    \begin{align*}
    c_{T'^{k+1}=t^k\circ 1,x_0}=&\frac{p c_{T^{k+1}=t^k\circ 1, x_0}+(1-p)\frac{p(1-q)}{q(1-p)}c_{T^{k+1}=t^k\circ 0,x_0}}{p+(1-p)\frac{p(1-q)}{q(1-p)}}\\
    =&qc_{T^{k+1}=t^k\circ 0,x_0} + (1-q) c_{T^{k+1}=t^k\circ 1,x_0}\\
    =&c.    
    \end{align*}
    So if $\plr_j$ sends $11$ or $10$ in the modified protocol, we land on probability $c$. Let $\pi'$ be the protocol we get from $\pi$ by doing this modification for each problematic partial transcript $t^k$ in $\pi$. It is clear that $\pi$ and $\pi'$ are equivalent, and that any $(x,i)$ that satisfied the requirement before also do afterwards. 
    \end{proof}

 \begin{lemm}\label{lemm:upperboundoriginal}
 For any $c\in (0,1)$ and any $h,l,n,\pi$, we have $\Succ(h,l,n,\pi)\leq 1-\frac{ch+l\log(1-c)+lc\log(e)-c}{h}$.
 \end{lemm}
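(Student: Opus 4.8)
The plan is to reduce $\pi$ to a ``$c$-safe'' protocol, bound how much it can leak with Theorem~\ref{theo:generalupperbound}, and handle the transcripts where some player would become too suspicious by a martingale argument on Eve's side. First I would normalise $\pi$: by Lemma~\ref{lemm:stopatc} together with Proposition~\ref{prop:equivalent} we may assume without loss of generality that $\pi$ sends one bit at a time and that for every $x$ and every $i$ the posterior $\Pr(L_i=1\mid T^k=t^k,X=x)$ never exceeds $c$ without first being exactly equal to $c$. Fix the distribution from the definition of $\Succ$: $X$ uniform on $\mathcal X=\{1,\dots,2^{\lceil h\rceil}\}$, $L\sim\Fixed(l,n)$, $X\perp L$, and write $b=l/n$; recall that we may assume Eve knows $X$. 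Put $p_i(t,x)=\Pr(L_i=1\mid T=t,X=x)$ and $M(t,x)=\max_i p_i(t,x)$. Against Eve's best response she points at an $i$ maximising $p_i(T,X)$ whenever Frank is right, so given a final transcript $t$ and $X=x$ the group loses with probability $\mathbb 1_{D(t)\neq x}+\mathbb 1_{D(t)=x}M(t,x)$, which is at least $M(t,x)$ and also at least $\mathbb 1_{D(t)\neq x}$. Let $A$ be the event that no $p_i(T^k,X)$ ever reaches $c$ during the run, let $q=\Pr(A^{c})$, and on $A^{c}$ let $\tau$ be the first round at which some $p_{i}(T^{\tau},X)$ equals $c$.

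On $A^{c}$: fix a player $i^{\dagger}$ (say the smallest index) on which this first crossing is attained, so $p_{i^{\dagger}}(T^{\tau},X)=c$. For $k\ge\tau$ the process $k\mapsto p_{i^{\dagger}}(T^{k},X)$ is, conditionally on $X$, a martingale, $\tau$ is a stopping time and $i^{\dagger}$ is $\mathcal F_{\tau}$-measurable, and the protocol has bounded length, so optional stopping gives $\E\big[p_{i^{\dagger}}(T,X)\mid\mathcal F_{\tau}\big]=c$ and hence $\E\big[M(T,X)\mid A^{c}\big]\ge c$. Since the per-transcript loss probability is $\ge M(T,X)$, this yields $\Pr(\text{group loses},A^{c})\ge cq$.

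On $A$: I would introduce $\pi'$, the protocol that runs as $\pi$ but in which, from round $\tau+1$ onward, everyone \emph{pretends ignorance} (every player sends messages using the non-leaker distribution of $\pi$). One checks $\pi'$ is a legitimate collaborating cryptogenography protocol and that in $\pi'$ we have $\Pr(L_i=1\mid T'=t',X=x)\le c$ for all $i,t',x$: before round $\tau$ all posteriors are $<c$; at round $\tau$ they are $\le c$ (a player other than $i^{\dagger}$ could not have jumped over $c$, by the normalisation); and after round $\tau$ the posteriors freeze, because Eve knows $X$ and therefore detects the switch and learns nothing more. So Theorem~\ref{theo:generalupperbound} applies: $I(X;T')\le\frac{-b\log(1-c)+c\log(1-b)}{c}n\le l\big(\frac{-\log(1-c)}{c}-\log(e)\big)$, the last step using $\log(1-l/n)\le-\frac{l}{n}\log(e)$ exactly as in Proposition~\ref{prop:safeupper}. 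With $H(X)=\lceil h\rceil$ and Fano's inequality~(\ref{ineq:Fano}) for Frank's decoder $D$, this gives $\Pr_{\pi'}(D(T')\neq X)\ge 1-\eta$, where $\eta=\frac{l(\frac{-\log(1-c)}{c}-\log(e))+1}{h}$. Couple $\pi$ and $\pi'$ so that they share $X,L$ and internal randomness; then $A$ and the transcript on $A$ coincide for the two protocols, so $\Pr_{\pi}(\text{group loses},A)\ge\Pr_{\pi}(D(T)\neq X,A)=\Pr_{\pi'}(D(T')\neq X,A)\ge(1-\eta)-q$.

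Combining the two cases, $1-\Succ(h,l,n,\pi)=\Pr_{\pi}(\text{group loses})\ge cq+\max(0,\,1-\eta-q)\ge\min_{q\in[0,1]}\big(cq+\max(0,1-\eta-q)\big)=c(1-\eta)$, and an elementary computation shows $1-c(1-\eta)=1-\frac{ch+l\log(1-c)+lc\log(e)-c}{h}$, which is the claimed bound. The main obstacle is the middle step: showing that the pretend-ignorance modification $\pi'$ is a valid protocol for which all posteriors stay $\le c$ (this is precisely what Lemma~\ref{lemm:stopatc} is tailored for) and coupling $\pi$ with $\pi'$ correctly so that the Fano bound transfers to the $A$-part of $\pi$; the martingale/optional-stopping estimate on $A^{c}$ and the final optimisation over $q$ are routine.
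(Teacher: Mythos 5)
Your proposal is correct and follows essentially the same route as the paper: normalise $\pi$ via Lemma~\ref{lemm:stopatc}, pass to the pretend-ignorance protocol $\pi'$, bound $I(X;T')$ with Theorem~\ref{theo:generalupperbound} and apply Fano, and then convert each transcript on which a posterior reaches $c$ into a win for Eve with probability at least $c$ --- your explicit $A/A^{c}$ decomposition with the optimisation over $q$ and the optional-stopping justification are just more formal renderings of the paper's argument that Eve's winning probability is at least $cP_e$. The only detail you skip is the degenerate case $n\leq l/c$ (where the initial posteriors already exceed $c$ and Lemma~\ref{lemm:stopatc} gives nothing), which the paper dispatches by first invoking the monotonicity of $\Succ$ in $n$ (Proposition~\ref{prop:incinn}); it is also handled by noting the claimed bound is trivially at least $1-c$ there.
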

  
  \begin{proof}
  As $\Succ(h,l,n)$ is non-decreasing in $n$, we can assume that $n>\frac{l}{c}$, so that $\Pr(L_i=1)<c$ at the beginning. By Lemma \ref{lemm:stopatc} and Proposition \ref{prop:equivalent} we can assume that $\pi$ satisfy the requirement for $\pi'$ in \ref{lemm:stopatc}. 

  Let $\pi'$ be the protocol that starts of as $\pi$, but where the players starts to pretend ignorance (as in the proof of Lemma \ref{pretend}) if $\Pr(L_i=1|T^k=t^k,X=x)=c$ for some $i$, current transcript $t^k$ and the true value $x$ of $X$. This ensures that $\Pr(L_i=1|T'=t,X=x)$ for all $i$ and $t$. Let $T'$ be the transcript of $\pi'$. From Theorem \ref{theo:generalupperbound} we get 
  \[I(X;T')\leq \left(-\frac{\log(1-c)}{c}-\log(e)\right)l\]
  We let Frank guess as he would if we used protocol $\pi$. By Fano's inequality, (\ref{ineq:Fano}), Frank's probability of being wrong when he only see the transcript of $\pi'$ is
\begin{align*}
P_e\geq & \frac{H(X|T')-1}{\log(|\mathcal{X}|)}\\
= &\frac{H(X)-I(X;T')-1}{\log(|\mathcal{X}|)}\\
\geq & \frac{h-l\left(\frac{-\log(1-c)}{c}-\log(e)\right)-1}{h}
\end{align*}
In the cases where Frank are wrong in $\pi'$ there are two possibilities: Either the players did not pretend ignorance, in which case Frank would also be wrong if they used protocol $\pi$, or they did pretend ignorance so $\Pr(L_i=1|T^k=t^k,X=x)=c$ for some $i$ and some smallest $k$. When this first happens  Eve can just ignore all further messages in $\pi$ and guess that $\plr_i$ is leaking. This way she is wins with probability at least $c$. Thus, all the situations in $\pi'$ where Frank guesses wrong, corresponds to situations in $\pi$ where Eve would win with probability at least $c$. So Eve's probability of winning when the players are using protocol $\pi$ is at least
\[cP_e\geq \frac{ch+l\log(1-c)+lc\log(e)-c}{h}\]  
  \end{proof}

  \begin{theo}\label{theo:upperboundoriginal}
  Let $r>0$ be a real number. Now
  \[\limsup_{l\to \infty}\Succ(\lceil r\log(e)l\rceil,l)\leq \frac{\log(r+1)}{r\log(e)}\]
  \end{theo}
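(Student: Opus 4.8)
The plan is to optimize the bound from Lemma~\ref{lemm:upperboundoriginal} over the choice of $c$ and then take the $l\to\infty$ limit. Fix $r>0$ and set $h=\lceil r\log(e)l\rceil$, so that $h/l\to r\log(e)$ as $l\to\infty$. Lemma~\ref{lemm:upperboundoriginal} gives, for every $c\in(0,1)$,
\[
\Succ(h,l,n,\pi)\leq 1-\frac{ch+l\log(1-c)+lc\log(e)-c}{h}
= 1-c-\frac{l}{h}\bigl(\log(1-c)+c\log(e)\bigr)+\frac{c}{h}.
\]
Since this holds for all $\pi$ and all $n$, it holds for $\Succ(h,l)$ after passing to the limit in $n$. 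The term $c/h\to 0$, and $l/h\to 1/(r\log(e))$, so
\[
\limsup_{l\to\infty}\Succ(h,l)\leq 1-c-\frac{\log(1-c)+c\log(e)}{r\log(e)}
= 1-c-\frac{\log(1-c)}{r\log(e)}-\frac{c}{r},
\]
for every fixed $c\in(0,1)$. So I would take the infimum of the right-hand side over $c\in(0,1)$.

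Next I would minimize $g(c):=1-c-\frac{\log(1-c)}{r\log(e)}-\frac{c}{r}$ in $c$. Writing $\log(1-c)=\ln(1-c)\log(e)$, we get $g(c)=1-c-\frac{\ln(1-c)}{r}-\frac{c}{r}$, so $g'(c)=-1+\frac{1}{r(1-c)}-\frac{1}{r}$. Setting $g'(c)=0$ gives $\frac{1}{r(1-c)}=1+\frac1r=\frac{r+1}{r}$, i.e.\ $1-c=\frac{1}{r+1}$, so the optimal value is $c^\star=\frac{r}{r+1}\in(0,1)$. Since $g''(c)=\frac{1}{r(1-c)^2}>0$, this is indeed the minimum. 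Substituting $1-c^\star=\frac{1}{r+1}$ and $c^\star=\frac{r}{r+1}$:
\[
g(c^\star)=1-\frac{r}{r+1}-\frac{\ln\!\bigl(\frac{1}{r+1}\bigr)}{r}-\frac{1}{r+1}
=\frac{1}{r+1}+\frac{\ln(r+1)}{r}-\frac{1}{r+1}
=\frac{\ln(r+1)}{r}=\frac{\log(r+1)}{r\log(e)}.
\]
Hence $\limsup_{l\to\infty}\Succ(\lceil r\log(e)l\rceil,l)\leq \frac{\log(r+1)}{r\log(e)}$, as claimed.

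The only mild subtlety is bookkeeping with the limits: Lemma~\ref{lemm:upperboundoriginal} bounds $\Succ(h,l,n,\pi)$ for every $n$ and $\pi$, hence bounds $\Succ(h,l,n)=\sup_\pi\Succ(h,l,n,\pi)$ and thus $\Succ(h,l)=\lim_n\Succ(h,l,n)$ by the same expression (the expression does not depend on $n$). Then one fixes $c$, takes $\limsup_{l\to\infty}$ using $h=\lceil r\log(e)l\rceil$ so that $l/h\to 1/(r\log(e))$ and $c/h\to0$, and finally — since the resulting bound holds for every $c\in(0,1)$ — takes the infimum over $c$, which the calculation above shows equals $\frac{\log(r+1)}{r\log(e)}$. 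I do not anticipate a genuine obstacle here; the main point is simply to perform the one-variable optimization correctly and to be careful that the $c$-infimum may be taken after the $l$-limit since each bound is valid uniformly.
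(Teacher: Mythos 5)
Your proposal is correct and takes essentially the same route as the paper: the paper's proof simply substitutes $c=\frac{r}{r+1}$ into Lemma~\ref{lemm:upperboundoriginal} and lets $l\to\infty$, and that value of $c$ is exactly the minimizer your one-variable optimization identifies. The extra derivation of the optimal $c$ and the remark about interchanging the $c$-infimum with the $l$-limit are sound but not needed beyond what the paper does.
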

  \begin{proof}
  Set $c=\frac{r}{r+1}$ and $h=\lceil r\log(e) l\rceil$ in Lemma \ref{lemm:upperboundoriginal}. Then Eve's probability of winning is at least
  \[\frac{r\lceil r\log(e) l\rceil-l(r+1)\log(r+1)+lr\log(e)-r}{\lceil r\log(e) l\rceil (r+1)}\]
  As $l$ tends to infinity, this tends to 
  \[\frac{r^2\log(e) -(r+1)\log(r+1)+r\log(e)}{ r\log(e) (r+1)}=1-\frac{\log(r+1)}{r\log(e)}\]
  as wanted.
  \end{proof}
  
  In particular we have
  
  \begin{coro}
  Let $h\to\infty$ and let $l=l(h)$ be a function of $h$ with $l(h)=o(h)$. Then $\Succ(h,l)\to0$.
  \end{coro}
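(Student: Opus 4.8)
The plan is to deduce this corollary directly from the non-asymptotic estimate in Lemma \ref{lemm:upperboundoriginal}, rather than from Theorem \ref{theo:upperboundoriginal}; this avoids a subtlety about whether $l(h)$ tends to infinity. First I would fix an arbitrary $c\in(0,1)$. Lemma \ref{lemm:upperboundoriginal} gives, for every collaborating cryptogenography protocol $\pi$ and every $n$,
\[\Succ(h,l,n,\pi)\le 1-\frac{ch+l\log(1-c)+lc\log(e)-c}{h}=1-c-\frac{l\bigl(\log(1-c)+c\log(e)\bigr)-c}{h}.\]
The right-hand side depends on neither $\pi$ nor $n$, so taking the supremum over $\pi$ and then the limit $n\to\infty$ (using the definitions of $\Succ(h,l,n)$ and $\Succ(h,l)$) yields
\[\Succ(h,l)\le 1-c-\frac{l\bigl(\log(1-c)+c\log(e)\bigr)-c}{h}.\]

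Next I would let $h\to\infty$ with $l=l(h)=o(h)$. Since $l/h\to 0$ and $1/h\to 0$, the fraction on the right tends to $0$ irrespective of whether $l$ itself grows, so $\limsup_{h\to\infty}\Succ(h,l(h))\le 1-c$. As $c\in(0,1)$ was arbitrary, letting $c\uparrow 1$ gives $\limsup_{h\to\infty}\Succ(h,l(h))\le 0$; since $\Succ(h,l)$ is a probability it is non-negative, and therefore $\Succ(h,l)\to 0$.

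There is essentially no hard step here; the only point to be careful about is that one cannot simply quote Theorem \ref{theo:upperboundoriginal} as a black box, since its conclusion is a $\limsup$ as $l\to\infty$ and $l(h)$ might stay bounded along a subsequence — hence the reduction to the uniform estimate of Lemma \ref{lemm:upperboundoriginal}. If one does assume $l(h)\to\infty$, an alternative route is available: for a fixed $r>0$ and all large $h$ one has $h\ge\lceil r\log(e)\,l(h)\rceil$, so Proposition \ref{prop:decinh} gives $\Succ(h,l(h))\le\Succ(\lceil r\log(e)\,l(h)\rceil,l(h))$, and Theorem \ref{theo:upperboundoriginal} bounds the $\limsup$ of the latter by $\frac{\log(r+1)}{r\log(e)}$, which tends to $0$ as $r\to\infty$.
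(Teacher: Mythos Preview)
Your proof is correct. The paper's own argument is the one-line ``Follows from Theorem \ref{theo:upperboundoriginal} and Proposition \ref{prop:decinh}'', i.e.\ exactly the alternative route you sketch in your final paragraph: use monotonicity in $h$ to replace $h$ by $\lceil r\log(e)\,l(h)\rceil$, apply Theorem \ref{theo:upperboundoriginal}, and let $r\to\infty$.

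Your main argument differs in that it bypasses Theorem \ref{theo:upperboundoriginal} and goes straight to the non-asymptotic bound of Lemma \ref{lemm:upperboundoriginal}, then lets $c\uparrow 1$. This is a genuine (if minor) improvement: as you observe, the paper's route implicitly needs $l(h)\to\infty$ in order to invoke the $\limsup_{l\to\infty}$ conclusion of Theorem \ref{theo:upperboundoriginal}, whereas the corollary's hypothesis $l(h)=o(h)$ does not preclude $l(h)$ remaining bounded along a subsequence. Your direct use of the lemma handles all cases uniformly, at no extra cost. The trade-off is only cosmetic: the paper's derivation is shorter to state once Theorem \ref{theo:upperboundoriginal} is in hand, while yours is self-contained and closes the small gap.
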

      \begin{proof}
    Follows from Theorem \ref{theo:upperboundoriginal} and Proposition \ref{prop:decinh}. 
    \end{proof}

\section{Hiding among innocents}\label{app:construction}\label{sec:innocents}

Until now we have assumed, that even the players who are not trying to leak information will collaborate. In this section we will show that we do not need the non-leakers to collaborate. As long as some people are communicating innocently, and that communication is sufficiently non-deterministic, we can use these people as if they were collaborating. 

Formally, we model the innocent communication by an innocent communication protocol. While protocols usually are designed to compute some function, innocent communication protocols is a way of describing what is already going on. An \emph{innocent communication protocol} $\iota$ is a protocol that for each possible partial transcript $s^k$ and each player $i$ gives a finite set $\mathcal{A}_{i,s^k}$ of possible messages that that person can send in the next round, and a probability distribution on that set. In innocent communication protocols every person sends a message in each round. This assumption is not a restriction: if we have a protocol where only one players sends messages at a time, we can turn it into an innocent communication protocol, by requiring that all the other players sends the message ``no message'' with probability $1$. 
We will only be interested in innocent communication protocols that continues for infinitely many rounds. This assumption is of course unrealistic but in practice we only need it to be long. 

Let $S$ denote the random variable that is the infinite transcript we get from running $\iota$, and let $S^k$ denote the partial transcript of the first $k$ rounds. For a player $\plr_j$ and a partial transcript $s^k$ of the first $k$ rounds of $\iota$ we define
  \[p_{max,j}(s^k)=\max_a(\Pr(A_{j,s^k}=a)|S^k=s^k),\] 
  where $A_{j,s^k}$ is the message sent by $\plr_j$ in round $k+1$. We say that $\iota$ is \emph{informative} if for a random transcript $S$ and for each player $\prod_{k\in\mathbb{N}} p_{max,j}(S^k)=0$ with probability $1$. In other words, if at each round in the protocol you try to guess what message $\plr_j$ will send in the next round, then with probability $1$ you will eventually fail. Notice that the model for innocent communication here is equivalent to what is used in \cite{Hopper04}, and the definition of informative is almost the same as the definition of \emph{always informative} in \cite{Hopper04} when one player is communicating.\footnote{The difference is that in \cite{Hopper04}, $\prod_{k\in\mathbb{N}} p_{max,i}(T^k)$ have to go to $0$ exponentially fast.}

We say that a collaborating cryptogenography protocol $\pi$ is \emph{revealing} if there is a partial transcript $t^k$ and a player $\plr_j$ that is to send the next message $A$ when the transcript is $t^k$ and a message $a$ such that $\plr_j$ will send message $a$ with positive probability if $L_j=1$ but not if $L_j=0$. If this is not the case, we say that $\pi$ is \emph{non-revealing}.\footnote{A non-revealing protocol can also reveal who the leakers are. For example, if it is known that one person is leaking and all but one person sends a message that could not have been send by a leaker. However if $\Pr(L=(0,\dots,0))>0$ then a non-revealing protocol will never reveal anyone as a leaker.} The point in cryptogenography is to hide who is sending the information, so we are only interested in non-revealing protocols.

The main theorem of this section is

  \begin{theo}\label{theo:hai}
  Let $\pi$ be a non-revealing collaborating cryptogenography protocol, and let $\iota$ be an informative communication protocol. Then there exists a protocol $\proto{\pi}{\iota}$ that is equivalent to $\pi$, but where the non-leakers follow the protocol $\iota$.
  \end{theo}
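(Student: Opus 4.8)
The plan is to build $\proto{\pi}{\iota}$ by letting all $n$ players run the innocent protocol $\iota$, and to ``read off'' a simulated transcript of $\pi$ from the randomness contained in the messages of $\iota$, in the same spirit as the steganographic construction of \cite{Hopper04}. The key observation making this possible is that, because $\iota$ is informative, the message $\plr_j$ sends in round $k+1$ is, conditioned on $S^k=s^k$, a random variable whose maximum atom $p_{max,j}(s^k)$ has infinite product $0$ along almost every branch; hence by looking at long enough blocks of $\plr_j$'s innocent messages we can extract from them a source of (near-)uniform bits, or more precisely we can implement any desired finite-support output distribution to within arbitrarily small total variation distance. Concretely, suppose $\pi$ asks $\plr_j$ (when the simulated transcript is $t^m$) to send a message drawn from $p_?$ if $L_j=0$ and from $p_x$ if $L_j=1, X=x$. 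We simulate this single step of $\pi$ by having $\plr_j$ continue to play $\iota$ for however many rounds are needed, and then applying a fixed decoding function from $\plr_j$'s $\iota$-messages over those rounds to the message alphabet of $\pi$; the decoding function is chosen so that, for a player genuinely following $\iota$, the induced distribution on the $\pi$-message is exactly $p_?$ (this is possible because $p_?$ has finite support and, after enough rounds, the conditional distribution of $\plr_j$'s $\iota$-block refines arbitrarily finely, by informativeness), while a leaker, who is allowed to deviate, instead chooses her continued $\iota$-messages so that the decoding yields a sample from $p_x$. Since $\pi$ is non-revealing, every message that a leaker would send with positive probability is also sent with positive probability by a non-leaker, so a leaker can always find an $\iota$-continuation producing the message she wants; crucially she does this by selecting among continuations that a non-leaker would also have used with positive probability, so the \emph{realized} $\iota$-transcript is always one that has positive probability under honest play.

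The steps, in order, would be: (1) Reduce to the case where in each round of $\pi$ only one player speaks and messages come from a finite alphabet — already part of the model. (2) Fix a target accuracy; since $\pi$ has bounded length $length(\pi)$ and finite message alphabets, it suffices to simulate each of the finitely many steps to within $\delta/length(\pi)$ in total variation. Actually, to get \emph{exact} equivalence (as the theorem demands, via the equivalence relation on posterior distributions) one must be more careful: rather than truncating, let each simulation step run for a random but almost-surely finite number of $\iota$-rounds, stopping as soon as the accumulated conditional distribution of $\plr_j$'s block can be partitioned into events whose probabilities exactly realize $p_?$. Informativeness guarantees this stopping time is finite with probability $1$. (3) Define the decoding map and, for leakers, the (randomized) encoding map that, conditioned on $X=x$ and the history, produces exactly the distribution $p_x$ over the decoded symbol while only ever using $\iota$-continuations of positive honest probability — this is where non-revealingness is used. (4) Verify that the joint distribution of $(X, L_1,\dots,L_n)$ together with the \emph{simulated} transcript of $\pi$ is identical to that under a direct run of $\pi$; conclude that the posterior $\mu$ after any transcript of $\proto{\pi}{\iota}$ has the same distribution as after $\pi$, i.e. the two are equivalent. (5) Observe that along the way, every non-leaking player sent only messages prescribed by $\iota$ with their correct conditional probabilities, so indeed ``the non-leakers follow $\iota$.''

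The main obstacle I expect is Step (2)–(3): arranging that the simulation of each step of $\pi$ terminates almost surely and produces \emph{exactly} the right distribution $p_?$ for honest players (not just approximately), using only the randomness that $\iota$ actually provides, while simultaneously leaving a leaker enough freedom to steer the decoded output to any symbol in the support of $p_x$ via honest-looking $\iota$-messages. This requires a careful ``interval/prefix'' construction on the tree of $\iota$-continuations: at each round one refines a partition of the probability space according to $\plr_j$'s next $\iota$-message, and one has to show (using $\prod_k p_{max,j}(S^k)=0$) that the partition eventually becomes fine enough to carve out events of the exact target probabilities $p_?(a)$, and that the symbol each leaf is assigned to can be made to hit every value in $\bigcup_x \supp(p_x)$ (which by non-revealingness is contained in $\supp(p_?)$) along a set of leaves of positive total probability. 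Once this chunking lemma is established, the rest is bookkeeping: chain the per-step simulations, invoke $length(\pi)<\infty$ to get almost-sure termination of the whole thing, and check the distributions match.
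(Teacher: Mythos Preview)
Your approach is essentially the one the paper takes: an arithmetic-coding style interval refinement where each step of $\pi$ is simulated by letting $\plr_j$ play $\iota$ until her accumulated messages pin down a subinterval of $[0,1)$ lying entirely inside some preimage $f^{-1}(a)$, with $f$ built from the non-leaker distribution $p_?$; informativeness gives almost-sure termination, and non-revealingness gives $\supp(p_x)\subseteq\supp(p_?)$ so a leaker can always land in the interval she wants.

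There is, however, a genuine gap in your step (4). You propose to check that $(X,L_1,\dots,L_n,i(S))$ is distributed as $(X,L_1,\dots,L_n,T)$ and then ``conclude that the posterior $\mu$ after any transcript of $\proto{\pi}{\iota}$ has the same distribution as after $\pi$.'' That inference does not follow: an observer of $\proto{\pi}{\iota}$ sees the full $\iota$-transcript $S$, not just the decoded $i(S)$, and equivalence concerns the posterior $(X,L)\mid S=s$. You additionally need that $S$ is conditionally independent of $(X,L_1,\dots,L_n)$ given $i(S)$; otherwise the leaker's particular choice of $\iota$-continuation, among the several that decode to her intended $a$, could reveal extra information about $L_j$ or $X$. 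The paper isolates this as a separate requirement and secures it by having the leaker, after drawing $a\sim p_x$, pick a uniform $\alpha\in f^{-1}(a)$ and then always send the $\iota$-message whose interval contains $\alpha$; this makes her $\iota$-messages, conditioned on decoding to $a$, distributed exactly as an innocent player's would be conditioned on the same event. Your description---``selecting among continuations that a non-leaker would also have used with positive probability''---is too weak: positive probability is not enough, the leaker must reproduce the innocent \emph{conditional} law exactly.

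A smaller imprecision: the stopping rule you state in step~(2)---wait until the block can be partitioned into events whose probabilities \emph{exactly} realise $p_?$---is a global criterion and will typically never be met (try realising the uniform distribution on three points from fair coins). The correct rule, which you approach in your ``main obstacle'' paragraph, is per-branch: each leaf stops as soon as its interval is contained in some $f^{-1}(a)$, different branches stopping at different depths, with a probability-$0$ set that never stops.
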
  
  
  \begin{proof}
We construct the protocol $\proto{\pi}{\iota}$ and a the same time an interpretation function $i$ that sends transcripts $s$ of $\proto{\pi}{\iota}$ to transcripts $t$ of $\pi$. We want them to satisfy.

\begin{enumerate}
\item For each partial transcript $s^k$ of $\proto{\pi}{\iota}$ and each player $\plr_j$, $\proto{\pi}{\iota}$ gives a probability distribution, depending only on $X,L_j,s^k$ and $j$ that $\plr_j$ will use to choose his next message. 
\item If $L_j=0$ then $\plr_j$ choose her messages in $\proto{\pi}{\iota}$ using the same distributions as in $\iota$.
\item The interpretation function $i$ sends (infinite) transcripts $s$ of $\proto{\pi}{\iota}$ to either transcripts $t$ of $\pi$ or to ``error''. The probability of error is $0$.
\item If $T$ denote the transcript of $\pi$ and $S$ denotes the transcript of $\proto{\pi}{\iota}$, then given that $i(S)$ is not error, $(X,L_1,\dots, L_n,i(S))$ is distributed as $(X,L_1,\dots, L_n,T)$.   \label{req:4} 
\item For each transcript $t$ of $\pi$, the random variable $(X,L_1,\dots, L_n)$ is independent from $S$ given $i(S)=t$. \label{req:5}
\end{enumerate}
Here the second requirement ensures that non-leakers can follow the protocol without knowing $X$ or $\pi$. In fact, unlike in the collaborating communication protocol, they might be thinking that everyone is just having an innocent conversation. Thus in $\proto{\pi}{\iota}$ we refer to the non-leakers as \emph{innocents}. Notice the important assumption that first the innocent communication protocol $\iota$ is defined and \emph{then} we create a protocol $\proto{\pi}{\iota}$ for leaking information on top of that. This corresponds to assuming that the non-leaking players either do not care about the leak, or that they are oblivious to the protocol. If $\iota$ was allowed to depend what the leakers does, the non-leaking players could try to prevent the leak, and it would be a very different problem.

The fourth of the above requirements tells us that $\proto{\pi}{\iota}$ reveals at least as much about $(X,L_1,\dots, L_n)$ as $\pi$ and the last requirement say that we do not learn anything more. This ensures that Frank and Eve, who both know $\proto{\pi}{\iota}$, learns exactly as much from the transcript of $\proto{\pi}{\iota}$ as they would from the transcript of $\pi$. 

\begin{prop}
If $\proto{\pi}{\iota}$ satisfy the above requirements, then $\proto{\pi}{\iota}$ and $\pi$ are equivalent. 
\end{prop}
\begin{proof}
$i$ gives error with probability $0$, so we can ignore all those cases. By requirement \ref{req:4}, $i(S)$ has the same distribution as $T$, and by requirement \ref{req:4} and \ref{req:5} the distribution $\mu_s$ of $(X,L_1,\dots,L_n)$ given $S=s$ equals the distribution $\mu_{i(s)}$.
\end{proof}

Before we construct the protocol $\proto{\pi}{\iota}$ we will define a function $i'$ that sends partial transcripts $s^{k'}$ of $\proto{\pi}{\iota}$ to tuples $(t^k,[y,z))$ where $t^k$ is a partial transcript of $\pi$, and $[y,z)\subset [0,1)$ is a half-open interval. When $i'(s^{k'})=(t^k,[y,z))$, we refer to $t^k$ as the interpretation of $s^{k'}$. Loosely speaking, the point of the interval is that not all message in $\iota$ are sufficiently unlikely that they can correspond to a message in $\pi$, so instead of interpreting them to a message in $\pi$, we store the information by remembering an interval. For an infinite transcript $s$, the function $i'$ will satisfy
\begin{enumerate}
\item $i'(\lambda)=(\lambda,[0,1))$, where $\lambda$ is the empty string
\item If $i'(s^{k'})=(t^k,[y,z))$ then either 
\begin{itemize}
\item $i'(s^{k'+1})=(t^{k}\circ m,[0,1))$ for some message $m$ in $\pi$, or
\item $i'(s^{k'+1})=(t^k,[y',z'))$, where $[y',z')\subseteq [y,z)$
\end{itemize}
\item If $i'(s^{k'})=(t^k,[y,z))$ and $t^k$ is a complete transcript for $\pi$, then $y=0$, $z=1$ and $i'(s^{k''})=(t^k,[0,1))$ for all $k''\geq k'$
\end{enumerate}
Thus every time we reveal one more round from the transcript $s$, we will either learn one message in $\pi$ from the interpretation of $s$, or the interval gets smaller or stays the same. If the interpretation of $s^{k'}$ is $t^k$, we let $j(s^{k'})$ denote the index of the player to send the next message in $\pi$ when the current transcript is $t^k$. When it is clear what $s^{k'}$ is, we write $j$ instead of $j(s^{k'})$. If $i(s^{k'})=(t^k,[y,z))$ and $i(s^{k'+1})=(t^k\circ m,[0,1))$ we say that at time $k'$ $\plr_{j(s^{k'})}$ finished sending the message $m$ in $\pi$ and at time $k'+1$ $\plr_{j(s^{k'+1})}$ start sending a new message in $\pi$. 

Let $\mathcal{A}_{t^k}$ denote the set of messages that $\plr_j$ could send in $\pi$ after transcript $t^k$, and choose some ordering on this set. We now define a function $f:[0,1)\to \mathcal{A}_{t^k}$ such that 
\[f^{-1}(a)=[\Pr(A<a|L_j=0),\Pr(A\leq a|L_j=0)).\]
By definition of innocent communication protocol, each message in $\iota$ is chosen from a finite set, but to explain the point of the function $f$, imagine for now that $\iota$ said that in the next round $\plr_j$ should send a random real uniformly from in $[0,1)$. We could now interpret that as the message  $f(x)\in \mathcal{A}_{t^k}$ in $\pi$. Then $\proto{\pi}{\iota}$ would say that if $\plr_j$ was innocent he should send a number uniformly from $[0,1)$ and if he was leaking, he should first choose $a\in\mathcal{A}_{t^k}$ using the distribution specified by $\pi$, and then send a number chosen uniformly at random from $f^{-1}(a)$. More generally, if $\iota$ said that $\plr_j$ should choose his next message $M$ from some continuous distribution on $\mathbb{R}$, we could take the quantile function given $L_j=0$ of the message
\[m\mapsto \Pr(M<m|L_j=0)\]
to turn it into a message that is uniform on $[0,1)$ given $L_j=0$. Unfortunately, there is only finitely many possible message for $\plr_j$ to sent in each round, so instead of getting a number out of the quantile function, we define a similar function to get an interval. 
Let $i'(s^{k'})=(t^k,[y,z))$ and let $\mathcal{M}_{j,s^{k'}}$ denote the set of possible message that $\plr_j$ can send in round $k'+1$ when transcript is $s^{k'}$ and choose some ordering on the set. Define $g:[y,z)\to \mathcal{M}_{j,s^{k'}}$ by
\[g^{-1}(m)=\{y+t(z-y)|t\in [\Pr(M<m|L_j=0), \Pr(M\leq m |L_j=0))\}.\] 
Thus instead of getting a number in $[0,1)$ out of $m\in \mathcal{M}_{j,s^{k'}}$, we get an interval $g^{-1}(m)$, whose length is proportional to the probability that an innocent player would send that message. If $g^{-1}(m)\subset f^{-1}(a)$ for some $a\in \mathcal{A}_{t^k}$ we say that $\plr_j$ send $a$ in $\pi$ and define $i'(s^{k'+1})=(t^k\circ a,[0,1))$. Otherwise, $\plr_j$ is not done sending his message and we define $i'(s^{k'+1})=(t^k,g^{-1}(m))$. Now if for some $k'$ we have $i'(s^{k'})=(t,[0,1))$ where $t$ is a complete transcript of $\pi$ we define $i'(s^{k''})=(t,[0,1))$ for all $k''>k'$ and $i(s)=t$. If for some $s$ no such $k'$ exists, we define $i(s)$ to give ``error''.

\begin{figure}[h]

\begin{tikzpicture}

\draw [ultra thick] (5,0) --(5,5);
\node [right] at (5,5.1) {$1$};
\draw [fill] (5,5) circle [radius=0.05];
\node [right] at (5,-0.1) {$0$};
\draw [fill] (5,0) circle [radius=0.05];
\node [left] at (5,0.8) {$\alpha$};
\draw [fill] (5,0.8) circle [radius=0.05];


\draw (1.5,3) circle [radius=1.5];
\node at (0.4,4.6) {$\mathcal{A}$};
\draw [fill] (1.5,3.5) circle [radius=0.025];
\draw [fill] (1.5,2.5) circle [radius=0.025]; 
\node [left] at (1.5,3.5) {$a^2$};
\node [left] at (1.5,2.5) {$a^1$};
\draw [->] (5,5) -- (3.5,3.5) --(1.7,3.5);
\draw (5,2) -- (3.5,3.5);
\draw [->] (5,2) -- (4,1.5) -- (1.7,2.402);
\draw (5,0) -- (4,1.5); 

\node [right] at (3,2) {$f$};
\node [above] at (3.2,3.5) {$f$};

\draw (8.5,4) circle [radius=1];
\draw (8.5,1) circle [radius=1];
\node at (9.5,5) {$\mathcal{M}_1$};
\node at (9.5,2) {$\mathcal{M}_2$};
\draw [fill] (8.5,4.4) circle [radius=0.025];
\node [right] at (8.5,4.4) {$m_1^2$}; 
\draw [fill] (8.5,3.6) circle [radius=0.025];
\node [right] at (8.5,3.6) {$m_1^1$}; 
\draw [fill] (8.5,1.4) circle [radius=0.025];
\node [right] at (8.5,1.4) {$m_2^2$}; 
\draw [fill] (8.5,0.6) circle [radius=0.025];
\node [right] at (8.5,0.6) {$m_2^1$}; 

\draw [->] (5,5) -- (6.5,4.2) -- (8.3,4.38);
\draw (5,3) -- (6.5,4.2);
\draw [->] (5,3) -- (6.5,2.6) -- (8.3,3.5);
\draw (5,0) -- (6.5,2.6);

\draw [->] (5,3) -- (5.7,2.2) -- (8.3,1.448);
\draw (5,1.8) -- (5.7,2.2);
\draw [->] (5,1.8) --(5.7,0.6) -- (8.3,0.6);
\draw (5,0) -- (5.7,0.6);

\node [above] at (7,4.3) {$g_1$};
\node at (7,3.1) {$g_1$};

\node [below] at (7,0.6) {$g_2$};
\node at (7,1.6) {$g_2$};

\end{tikzpicture}

\caption{Example of how to construct a part of $\proto{\pi}{\iota}$.\\
In this figure we see an example of how construct a part of $\proto{\pi}{\iota}$. In $\pi$, the next player to send a message is $\plr_j$. The message $A_1$ should come from $\mathcal{A}=\{a^1,a^2\}$. We have $\Pr(A_1=a^1|L_j=0)=0.4$, so $f:[0,1)\to \mathcal{A}$ sends $x\in [0,0.4)$ to $a^1$, and $x\in [0.4,1)$ to $a^2$. Now $L_j=1$, so $\plr_j$ first chooses a message from $\mathcal{A}$ to send, this happens to be $a^1$, and then a number $\alpha$ chosen randomly and uniformly from $f^{-1}(a^1)$. \\
In $\iota$, the next message $M_1$ that $\plr_j$ sends should be from $\mathcal{M}_1=\{m_1^1,m_1^2\}$. If $\plr_j$ was innocent and was following the protocol $\iota$, we would have $\Pr(M_1=m_1^1)=0.6$, so $g_1:[0,1)\to \mathcal{M}_1$ sends $x\in [0,0.6)$ to $m_1^1$ and the rest to $m_1^2$. As $\alpha\in [0, 0.6)$, $\plr_j$ now sends the message $m_1^1$. We see that $g_1^{-1}(m_1^1)$ overlaps with both $f^{-1}(a^1)$ and $f^{-1}(a^2)$, so and observer cannot yet determine which message in $\pi$ $\plr_i$ was sending, so $\plr_j$ has not send his message yet. His next message $M_2$ should be send from $\mathcal{M}_2=\{m_2^1,m_2^2\}$, and again it happens that if he was following $\iota$ then $\Pr(M_2=m_2^1)=0.6$, so $g_2:[0,0.6)\to \mathcal{M}_2$ sends $x\in [0,0.36)$ to $m_2^1$ and the rest to $m_2^2$. As $\alpha\in [0, 0.36)$, $\plr_j$ sends the message $m_2^1$, and now $g_2^{-1}(m_2^1)\subset f^{-1}(a^1)$, so now an observer can see that $\plr_j$ was sending the message $a^1$ in $\pi$, and $\plr_j$ is done sending his message in $\pi$.
}
\end{figure}

Next we define the protocol $\proto{\pi}{\iota}$. Any non-leaking player chooses his messages as given by $\iota$ and when the current transcript is $s^{k'}$ all players except $\plr_{j(s^{k'})}$ also choose their messages as in $\iota$. When a leaking player, $\plr_{j(s^{k'})}$, starts sending a message in $\pi$, he first chose the message $a\in \mathcal{A}_{t^k}$ using the distribution given by $\pi$ (this distribution depends on $X=x$). Next he chooses a number $\alpha$ randomly and uniform in $f^{-1}(a)$. Until he has send his message in $\pi$ he will now send messages $m$ such that $\alpha\in g^{-1}(m)$. This uniquely specifies which messages $m$ to send (notice that $g$ will depend on current transcript in $\proto{\pi}{\iota}$, so $m$ is not necessarily the same for every round). When we get to a transcript $s^{k'}$ that is interpreted as a complete transcript $t$ of $\pi$ all the players will just follow $\iota$.

We see that if in $\pi$ a leaking player's distribution of $a$ is exactly the same as a non-leaking players, then the distribution of the number $\alpha$ chosen by the leaking player in uniform on $[0,1)$. By the definition of $g$, the probability that a leaking player sends a particular message $m$ in $\proto{\pi}{\iota}$ is exactly the probability given by $\iota$, and thus the same as a non-leaking player. Using this reasoning in the opposite direction, this tells us that we can assume that even the innocents, when starting sending a message in $\pi$, chooses a uniformly distributed $\alpha\in [0,1)$ and sends the message $m$ such that $\alpha\in g(m)$, until they have send the message in $\pi$. They may not do that, but the probability of any transcript is the same as if they did. 
  
  Finally we need to check that $\proto{\pi}{\iota}$ satisfy the $5$ requirements. The first two follows from the construction. To show the third, we need to show that with for a random transcript $s$ of $\proto{\pi}{\iota}$ there will with probability $1$ exists a $k'$ such that $i'(s^{k'})=(t,[0,1))$ where $t$ is a complete transcript for $\pi$. As $\pi$ only have finitely many rounds, it is enough to show that for each message of $\pi$ we start sending in $\proto{\pi}{\iota}$, there is probability $1$ that we will finish sending it. Assume that $i'(s^{k'})=(t^k,[0,1))$ for some $k'$, where $t^k$ is an incomplete transcript of $\pi$, but for all $k''>k'$ the interpretation of $s^{k''}$ is still $t^k$. If $\plr_{j(s^{k'})}$ is innocent, everyone will be following $\iota$, so by the assumption that $\iota$ is informative, the set of transcripts where the length of the interval does not go to $0$ has probability $0$. As stated earlier we can assume that when sending a message in $\pi$, even the innocents starts by choosing a random number $\alpha$ uniformly from $[0,1)$. As $f$ only jumps in finitely many points, there is probability $0$ that $\plr_{j(s^{k'})}$ chooses one of these points. If he does not, and the length of the interval goes to $0$, he will eventually sent his message in $\pi$. Thus there is probability $0$ that a non-leaker does not send his message. A leaker chooses his random $\alpha\in [0,1)$ using a different distribution, but we can divide $[0,1)$ into a finite set of intervals (given by $f^{-1}(a)$) such that it is uniform on each of these intervals. This tells us that given $s^{k'}$ there is a constant $K$ such that, as long as $\plr_{j(s^{k'})}$ is still sending the same message in $\pi$, any continuation of the transcript is at most $K$ times more likely when $\plr_{j(s^{k'})}$ is leaking as when he is not leaking. Thus there is still probability $K\cdot 0=0$ that he will not finish his message in $\pi$.
  
 For the fourth requirement, we observe that any leaking player is actually choosing messages in $\pi$ following the distribution given by $\pi$, and then making sure that the message send in $\proto{\pi}{\iota}$ will be interpreted as the message he wanted to send in $\pi$. The innocent players are not doing this, but we have seen that the distribution on the message they send in $\proto{\pi}{\iota}$ are the same as if they did. Thus requirement $4$ holds. Finally we see that given $i(S)=t$ a player not sending a message in $\pi$ always follows $\iota$ and a player sending a message in $\pi$ can be thought of as haven chosen an $\alpha$ uniformly from $f^{-1}(a)$ where $a$ is the next message in transcript $t$. This is independent from $(X,L_1,\dots,L_n)$ and thus the last requirement follows. 
 \end{proof}
 
 To implement the protocol $\proto{\pi}{\iota}$ the leaking players do not have to chose all the infinitely many digits in a random number $\alpha$. Instead they can just for each message compute the probability that they would send each message, given that they had chosen an $\alpha$. We also see that if $i(S)$ does not give an error, then there is some $k$ such that $S^k$ determines $i(S)$. Thus for any particular $\proto{\pi}{\iota}$ and any $\epsilon>0$ there is a length $k$ such that, $i(S^k)$ gives a total transcript for $\pi$ with probability $>1-\epsilon$.  
 
 In order to find the protocol $\proto{\pi}{\iota}$ you need have a description of the protocol $\iota$. This is a strong assumption: even if you are able to communicate innocently, it does not mean that you are aware of the distribution you use to pick your random messages. In steganography, the weaker assumption that you have a random oracle that takes history and player index as input and gives a message following the innocent distribution as output, is sometimes enough \cite{Hopper04}. However, it is not clear if this weaker assumption is enough for the propose of cryptogenography. While it may not be possible to find $\iota$ for all kinds of innocent communications, there are situations where we can approximate $\iota$ very well. For example, if a person post blog posts, we can consider the message to be only parity of the minutes in the sending time. This value will probably, for most people, be close to uniformly distributed on $\{0,1\}$.

  \section{Open problems}\label{sec:open}
  In this paper we only considered how much information $l$ players can leak in an asymptotic sense, where $l$ tends to infinity, and the proof of the achievability results is not constructive. We have not tried to find any explicit protocols that work well for fixed specific values of $l$ and tolerance of errors $\epsilon$, but that would be an interesting possibility for further research. We assumed that both Eve and Frank knew the true distribution $q$ of $(X,L_1,\dots ,L_n)$. It might be interesting to consider the problem where their beliefs, $q_E$ and $q_F$ are different from $q$ and from each other. 
  
  We have only found the $c$-capacity for $\Fixed$ and for $\Indep_b$. It would be interesting to find a way to compute the capacity of more general $\mathfrak{L}$-structures.

  
  
  In the setup we considered here, there are two types of players. Some know the information that we want to leak and some do not. We could also imagine that some people know who knows the information, without knowing the information itself, and some could know who knows who knows the information and so on. We could also have people who would only know $X$ if it belongs to some set $S$, and otherwise only know that $X\notin S$. It is known from the game theory literature that all of this can be described by having a joint distribution $(X,P_1,\dots, P_n)$ where $X$ is the information we want to leak and $P_i$ is the random variable that player $i$ has as information \cite{Aumann99}.

A different generalisation would be to have players that tries to prevent the leakage by sending misleading information. Such players would also not want to be discovered. If Frank notice that someone is sending misleading information, he could just ignore all the messages send by that person.

  \section{Acknowledgements}
  
  I would like to thank my supervisors, Peter Keevash and S\o ren Riis for valuable discussions about cryptogenography. I also want to thank Jalaj Upadhyay for pointing me to \cite{Hopper04}.

    \bibliographystyle{plain}
\bibliography{susp}

\begin{thebibliography}{1}

\bibitem{Aumann99}
Robert~J. Aumann.
\newblock Interactive epistemology i: Knowledge.
\newblock {\em International Journal of Game Theory}, 28(3):263--300, 1999.

\bibitem{Brody14}
Joshua Brody, Sune Jakobsen, Dominik Scheder, and Peter Winkler.
\newblock Cryptogenography.
\newblock In {\em ITCS}, 2014.

\bibitem{ThomasCover91}
Thomas~M. Cover and Joy~A. Thomas.
\newblock {\em Elements of information theory}.
\newblock Wiley-Interscience, New York, NY, USA, 1991.

\bibitem{Goldreich87}
O.~Goldreich, S.~Micali, and A.~Wigderson.
\newblock How to play any mental game or a completeness theorem for protocols
  with honest majority.
\newblock In {\em Proceedings of the nineteenth annual ACM symposium on Theory
  of computing}, pages 218--229, New York, NY, USA, 1987. ACM.

\bibitem{Hopper04}
Nicholas~J. Hopper.
\newblock {\em Toward a theory of Steganography}.
\newblock PhD thesis, Carnegie Mellon University, 2004.

\bibitem{Rabin89}
Tal Rabin and Michael Ben-Or.
\newblock Verifiable secret sharing and multiparty protocols with honest
  majority (extended abstract).
\newblock In {\em STOC}, pages 73--85, 1989.

\bibitem{Rivest01}
Ronald~L. Rivest, Adi Shamir, and Yael Tauman.
\newblock How to leak a secret.
\newblock In {\em ASIACRYPT}, pages 552--565, 2001.

\bibitem{Shannon48}
C.~E. Shannon.
\newblock {A mathematical theory of communication}.
\newblock {\em Bell system technical journal}, 27, 1948.

\end{thebibliography}

	 \end{document}